\documentclass[12pt,draftclsnofoot,hidelinks,onecolumn]{IEEEtran}
\usepackage{amsmath,amsfonts,amsthm}
\usepackage{amssymb}
\usepackage{algorithmic}
\usepackage{algorithm}
\usepackage[caption=false,font=normalsize,labelfont=sf,textfont=sf]{subfig}
\usepackage{textcomp}
\usepackage{stfloats}
\usepackage{url}
\usepackage{verbatim}
\usepackage{graphicx}
\usepackage{cite}
\usepackage{color}
\usepackage{diagbox}
\captionsetup[subfigure]{labelformat=simple}

\newtheorem{theorem}{Theorem}

\newtheorem{proposition}{Proposition}

\begin{document}

\title{Channel Estimation for RIS-Aided MIMO Systems: A Partially Decoupled Atomic Norm Minimization Approach}

\author{Yonghui~Chu, Zhiqiang~Wei,~\IEEEmembership{Member,~IEEE,} Zai~Yang,~\IEEEmembership{Senior~Member,~IEEE,} and Derrick~Wing~Kwan~Ng,~\IEEEmembership{Fellow,~IEEE}%
\thanks{This work is accepted by IEEE Transactions on Wireless Communications. A preliminary version of this work was presented in part at the IEEE Global Communications Conference (GLOBECOM), 2023 [DOI: 10.1109/GLOBECOM54140.2023.10436833]. \nocite{chu2023GLOBECOM}}%
\thanks{Yonghui~Chu, Zhiqiang~Wei, and Zai~Yang are with the School of Mathematics and Statistics, Xi’an Jiaotong University, Xi’an, Shaanxi 710049, China, and also with the Peng Cheng Laboratory, Shenzhen, Guangdong 518055, China, and also with the Pazhou Laboratory (Huangpu), Guangzhou, Guangdong 510555, China (e-mail: chuyonghui@stu.xjtu.edu.cn; zhiqiang.wei@xjtu.edu.cn; yangzai@xjtu.edu.cn).}%
\thanks{Derrick~Wing~Kwan~Ng is with the School of Electrical Engineering and Telecommunications, University of New South Wales, Sydney, NSW 2052, Australia (e-mail: w.k.ng@unsw.edu.au).}}



\maketitle

\begin{abstract}
Channel estimation (CE) plays a key role in reconfigurable intelligent surface (RIS)-aided multiple-input multiple-output (MIMO) communication systems, while it poses a challenging task due to the passive nature of RIS and the cascaded channel structures. In this paper, a partially decoupled atomic norm minimization (PDANM) framework is proposed for CE of RIS-aided MIMO systems, which exploits the three-dimensional angular sparsity of the channel. In particular, PDANM partially decouples the differential angles at the RIS from other angles at the base station and user equipment, reducing the computational complexity compared with existing methods. A reweighted PDANM (RPDANM) algorithm is proposed to further improve CE accuracy, which iteratively refines CE through a specifically designed reweighing strategy. Building upon RPDANM, we propose an iterative approach named RPDANM with adaptive phase control (RPDANM-APC), which adaptively adjusts the RIS phases based on previously estimated channel parameters to facilitate CE, achieving superior CE accuracy while reducing training overhead. Numerical simulations demonstrate the superiority of our proposed approaches in terms of running time, CE accuracy, and training overhead. In particular, the RPDANM-APC approach can achieve higher CE accuracy than existing methods within less than 30 percent training overhead while reducing the running time by tens of times.
\end{abstract}

\begin{IEEEkeywords}
Atomic norm minimization, adaptive phase control, channel estimation, multiple-input multiple-output, reconfigurable intelligent surface.
\end{IEEEkeywords}

\section{Introduction} \label{SecInt}

\IEEEPARstart{R}{econfigurable} intelligent surface (RIS), also known as intelligent reflecting surface (IRS) \cite{wu2019towards}, is a metasurface equipped with integrated circuits. It is typically composed of passive reflecting elements whose phases can be adapted independently to the instantaneous channel state information by an intelligent controller \cite{yuan2021reconfigurable}. By adapting the phases of the reflecting elements, RIS can alter its electromagnetic response to the incident signals, thus compensating for the severe propagation path loss or mitigating potential interference \cite{liu2021reconfigurable}. Since RIS only employs passive reflecting elements without the use of active radio-frequency (RF) chains, it is generally more energy-efficient and cost-effective than traditional active relays \cite{huang2019reconfigurable,di2020reconfigurable}. In addition, the compact size of RIS components facilitates their mass deployment in various structures, such as billboards, windows, and building facades, etc. \cite{liu2021reconfigurable}. Due to its low cost, low power consumption, high flexibility, and the ability to reconfigure wireless
propagation environments, RIS is considered a promising technology for next-generation wireless networks \cite{yuan2021reconfigurable,liu2021reconfigurable,huang2019reconfigurable,di2020reconfigurable,basar2019wireless,wei2021sum,hu2021robust,wu2021intelligent}.

To fully unleash the potential performance gains brought by RIS, channel estimation (CE) is crucial in RIS-aided communication systems but challenging in practice \cite{wu2021intelligent,yuan2021reconfigurable}. Firstly, in RIS-aided communication systems, apart from the direct link between the base station (BS) and the user equipment (UE), the BS-to-RIS channel and the RIS-to-UE channel need to be estimated. Yet, the cascaded channel structure poses a challenge for effective CE. Secondly, since the passive reflecting elements of RIS are generally equipped with phase shifters only \cite{yuan2021reconfigurable,liu2021reconfigurable}, the signal processing capability of RIS is limited, making it impossible to transmit or receive training signals for CE. In addition, since a RIS typically consists of a large amount of elements, the number of channel parameters to be estimated increases substantially in RIS-aided systems, imposing extra challenges and leading to overwhelming training overhead.

In fact, CE for RIS-aided communication systems has been widely studied in the literature, e.g., \cite{jensen2020optimal,wei2020parallel,de2021channel,wang2020compressed,ardah2021trice,he2021channel,he2021leveraging,chung2021location}. In particular, various classical CE methods, such as least squares (LS), were applied to RIS-aided communication systems in the early stage. For instance, a CE scheme employing LS based on the minimum variance unbiased estimation criterion was designed in \cite{jensen2020optimal}. By exploiting the parallel factorization of the cascaded BS-RIS-UE channel, an iterative CE algorithm for multiple-input single-output (MISO) systems that capitalized alternating LS was proposed in \cite{wei2020parallel} and further generalized to multiple-input multiple-output (MIMO) systems in \cite{de2021channel}. In addition, another parallel factorization-based method called Khatri-Rao factorization (KRF) was proposed in \cite{de2021channel}, which has a lower complexity since it does not require iteration. However, the aforementioned CE methods generally require a quantity of training overhead that depends on the size of the RIS, since they do not sufficiently exploit the channel structure.

To reduce the required training overhead, compressed sensing (CS) technology has been applied to the CE for RIS-aided communication systems, which exploits the angular sparsity of the channel. For instance, in \cite{wang2020compressed}, CE was formulated as a sparse signal recovery problem and was solved by classical CS methods, such as orthogonal matching pursuit (OMP) \cite{pati1993orthogonal}. For RIS-aided MIMO systems, a two-stage RIS-aided channel estimation (TRICE) method was proposed in \cite{ardah2021trice}, where the CE problem is decomposed into two subproblems with each formulated as an angle-of-arrival (AoA) estimation problem and solved by the OMP method. However, since existing CS-based methods assume that the channel angular parameters lie in a set of grid points, they suffer from the grid mismatch problem \cite{stoica2012sparse,yang2018sparse} caused by the limited grid resolution, which leads to a low CE accuracy.

Most recently, CE methods based on atomic norm minimization (ANM) \cite{yang2018sparse} have been developed for RIS-aided communication systems \cite{he2021channel,he2021leveraging,chung2021location}, which directly estimate the desired channel angular parameters in the continuous domain and thus improve the CE accuracy.  Specifically, a two-stage CE method was proposed in \cite{he2021channel}, where the channel parameters were divided into two groups and estimated by the ANM in two stages, respectively. Unfortunately, a large amount of training overhead is required in the first stage to reduce the possible error propagating to the second stage. When the location information of the RIS and BS is available, the authors in \cite{he2021leveraging} proposed to leverage the location information for enhancing the CE performance and for saving the training overhead. Besides, in \cite{chung2021location}, two one-stage ANM-based CE methods were proposed for RIS-aided MIMO communication systems, i.e., ANM with two-dimensional (2D) atoms (ANM-2D) and ANM with three-dimensional (3D) atoms (ANM-3D). Although ANM-3D is generally superior to ANM-2D due to its more accurate characterization of the angular structure of the channel, a large-scale semidefinite programming (SDP) problem is required to be solved in ANM-3D, which has an order of magnitude higher computational complexity than that of the ANM-2D. This motivates us to develop novel ANM-based CE approaches to strike a balance between the computational complexity and CE accuracy.

In this paper, we first propose a partially decoupled ANM (PDANM) framework with reduced computational complexity compared to state-of-the-art approaches. Based on PDANM, an iterative algorithm called reweighted PDANM (RPDANM) that achieves improved CE accuracy is proposed. To further reduce the training overhead, we propose a RPDANM with adaptive phase control (RPDANM-APC) approach that realizes high-accuracy CE with low training overhead for RIS-aided MIMO systems. The contributions of the paper are summarized as follows:
\begin{itemize}
	\item{We propose a one-stage PDANM framework for CE of RIS-aided MIMO systems, which exploits the 3D angular structure of the considered channel in a partially decoupled manner. Firstly, we derive an effective channel model, which not only resolves the inherent parameter ambiguity in the previous cascaded channel model, but also exhibits a simpler structure that facilitate the CE. On this basis, we define the partially decoupled atomic norm (PDAN) and reformulate the considered CE problem as a PDANM problem, which decouples the 3D angular structure of the effective channel into two lower-dimensional angular structures. Moreover, we formulate an SDP problem to efficiently solve the PDANM problem and establish the conditional equivalence between them.}
	\item{We propose an iterative CE algorithm called RPDANM, which promotes the 3D angular sparsity over PDANM by approximately solving a rank minimization problem instead of its convex relaxation. Compared with PDANM, RPDANM enhances CE accuracy through a specifically designed atom reweighting strategy without introducing additional training overhead. In particular, each iteration of RPDANM can be regarded as solving a weighted PDANM (WPDANM) problem with adaptively updated weighting functions.}
	\item{To reduce training overhead and to facilitate CE, we consider the design of the RIS phase control matrix during the training phase. By adaptively adjusting the phases of the RIS during the channel sounding procedure, we propose a RPDANM-APC approach that achieves promising CE accuracy with a limited training overhead.}
\end{itemize}

The rest of the paper is organized as follows. Section \ref{SecSys} presents the system model for the considered RIS-aided MIMO communication system and introduces the adopted channel sounding procedure and the channel model. Section \ref{SecPDANM} proposes an effective channel model and a PDANM framework with a computational complexity analysis. Section \ref{SecRPDANM} proposes a PDANM-based iterative algorithm called RPDANM. Section \ref{SecRPDANMAPC} proposes a RPDANM-based iterative CE approach named RPDANM-APC. Section \ref{SecNum} provides numerical simulations to demonstrate the advantages of our proposed methods in terms of CE accuracy, running time, and training overhead. Conclusions are drawn in Section \ref{SecCon}.

\begin{table*}[t]
	\scriptsize
	\caption{Key Notations for System Model} \label{NotationMeaning}
	\begin{center}
		\begin{tabular}{ c | c | c | c }
			\hline			
			Notation & Physical Meaning & Notation & Physical Meaning\\ \hline
			$N_\mathrm{B}$ & Number of antennas at BS & $\boldsymbol{\rho}_\mathrm{BR} \in \mathbb{C}^{L_\mathrm{BR}}$ & Path gains of BS-to-RIS channel \\
			$N_\mathrm{U}$ & Number of antennas at UE & $\boldsymbol{\theta}_\mathrm{B} \in [0,\pi]^{L_\mathrm{BR}}$ & AoDs at BS \\    
			$N_\mathrm{R}$ & Number of elements of RIS & $\boldsymbol{\phi}_\mathrm{R} \in [0,\pi]^{L_\mathrm{BR}}$ & AoAs at RIS \\
			$M$ & Length of a training sequence & $\mathbf{H}_\mathrm{BR} \in \mathbb{C}^{N_\mathrm{R} \times N_\mathrm{B}}$ & BS-to-RIS channel \\
			$B$ & Number of training slots & $\boldsymbol{\rho}_\mathrm{RU} \in \mathbb{C}^{ L_\mathrm{RU}}$ & Path gains of RIS-to-UE channel \\
			$L_\mathrm{BR}$ & Path number of BS-to-RIS channel & $\boldsymbol{\theta}_\mathrm{R} \in [0,\pi]^{L_\mathrm{RU}}$ & AoDs at RIS \\
			$L_\mathrm{RU}$ & Path number of RIS-to-UE channel & $\boldsymbol{\phi}_\mathrm{U} \in [0,\pi]^{L_\mathrm{RU}}$ & AoAs at UE \\
			$\mathbf{S}_{b} \in \mathbb{C}^{N_\mathrm{B} \times M}$ & Training matrix in the $b$-th slot & $\mathbf{H}_\mathrm{RU} \in \mathbb{C}^{N_\mathrm{U} \times N_\mathrm{R}}$ & RIS-to-UE channel \\
			$\boldsymbol{\omega}_{b} \in \mathbb{C}^{N_\mathrm{R} \times 1}$ & Phase control vector for RIS in the $b$-th slot & $\boldsymbol{\rho}_\mathrm{BU} \in \mathbb{C}^{L_\mathrm{BR} L_\mathrm{RU}}$ & Effective path gains \\
			$\boldsymbol{\Omega} \in \mathbb{C}^{N_\mathrm{R} \times B}$ & Phase control matrix for RIS in all slots & $\boldsymbol{\psi}_\mathrm{R} \in [0,\pi]^{L_\mathrm{BR} L_\mathrm{RU}}$ & Differential angles at RIS \\
			$\mathbf{H}^b_\mathrm{BU} \in \mathbb{C}^{N_\mathrm{U} \times N_\mathrm{B}}$ & Cascaded channel in the $b$-th slot & $\mathbf{H} \in \mathbb{C}^{N_\mathrm{B}N_\mathrm{U} \times N_\mathrm{R}}$ & Effective channel in all slots \\
			\hline
		\end{tabular}
	\end{center}
	\vspace{-3mm}
\end{table*}

\emph{Notations:} Lowercase and uppercase bold letters represent vectors and matrices, respectively. $\mathbb{C}$ denotes the set of complex numbers. $| \cdot |$ denotes the amplitude of a scalar. $\overline{ \left( \cdot \right) }$, $\left( \cdot \right)^{T}$, $\left( \cdot \right)^{H}$ and $\left( \cdot \right)^{\dagger}$ denote the conjugate, transpose, Hermitian transpose, and Moore-Penrose pseudo inverse, respectively. $\mathrm{rank}(\mathbf{A})$, $\mathrm{tr}(\mathbf{A})$, and $\mathrm{col}(\mathbf{A})$ denote the rank, trace, and column space of matrix $\mathbf{A}$, respectively, and $\mathbf{A} \succeq \mathbf{0}$ means that $\mathbf{A}$ is a Hermitian positive semidefinite (PSD) matrix. $\mathrm{diag}(\mathbf{x})$ is a diagonal matrix with its main diagonal entries given by $\mathbf{x}$. The identity matrix of size $N$ is denoted as $\mathbf{I}_{N}$. $[\mathbf{x}]_{l}$ denotes the $l$-th entry of the vector $\mathbf{x}$, and $[\mathbf{X}]_{i,j}$ denotes the $(i,j)$-th entry of the matrix $\mathbf{X}$. $\left \| \cdot \right \|_{2}$ and $\left \| \cdot \right \|_{F}$ denote the $\ell_{2}$ norm and the Frobenius norm, respectively. $\otimes$ and $\diamond$ denote the Kronecker product and Khatri-Rao product (or column-wise Kronecker product), respectively, and $\mathrm{vec} \left( \cdot \right)$ denotes vectorization. $a \mod b$ denotes the remainder of dividing $a$ by $b$. $\mathbb{E}(\cdot)$ is the expectation operator. $\mathcal{CN}(\mu, \sigma^{2})$ denotes a complex Gaussian distribution whose mean is $\mu$ and variance is $\sigma^{2}$. For a $D$-dimensional ($D \geq 2$) tensor $\mathbf{T} = [\mathbf{T}_{1},\cdots,\mathbf{T}_{2N_{1}-1}] \in \mathbb{C}^{(2 N_{1} - 1) \times \cdots \times (2 N_{D} - 1)}$, where $\mathbf{T}_{n_{1}} \in \mathbb{C}^{1 \times (2 N_{2} - 1) \times \cdots \times (2 N_{D} - 1)}$ with $n_{1} \in \{ 1,\cdots,2N_{1}-1 \}$, a $D$-level Toeplitz matrix \cite{yang2016vandermonde} is defined recursively as
\begin{equation*}
	\hspace{-1mm} \mathcal{T}_{\mathbf{N}}(\mathbf{T}) \hspace{-1mm} = \hspace{-2mm}
	\begin{bmatrix}
		\mathcal{T}_{\mathbf{N}_{-1}}(\mathbf{T}_{N_{1}}) & \hspace{-5mm} \mathcal{T}_{\mathbf{N}_{-1}}(\mathbf{T}_{N_{1}+1}) & \hspace{-4mm} \cdots & \hspace{-4mm} \mathcal{T}_{\mathbf{N}_{-1}}(\mathbf{T}_{2N_{1}-1}) \\ \mathcal{T}_{\mathbf{N}_{-1}}(\mathbf{T}_{N_{1}-1}) & \hspace{-5mm} \mathcal{T}_{\mathbf{N}_{-1}}(\mathbf{T}_{N_{1}}) & \hspace{-4mm} \cdots & \hspace{-4mm} \mathcal{T}_{\mathbf{N}_{-1}}(\mathbf{T}_{2N_{1}-2}) \\ \vdots & \hspace{-5mm} \vdots & \hspace{-4mm} \ddots & \hspace{-4mm} \vdots \\ \mathcal{T}_{\mathbf{N}_{-1}}(\mathbf{T}_{1}) & \hspace{-5mm} \mathcal{T}_{\mathbf{N}_{-1}}(\mathbf{T}_{2}) & \hspace{-4mm} \cdots & \hspace{-4mm} \mathcal{T}_{\mathbf{N}_{-1}}(\mathbf{T}_{N_{1}}) 
	\end{bmatrix}\hspace{-1.5mm},
\end{equation*}
where $\mathbf{N} = [N_{1},\cdots,N_{D}]$ and $\mathbf{N}_{-1} = [N_{2},\cdots,N_{D}]$. In particular, a ($1$-level) Toeplitz matrix is defined as
\begin{equation*}
	\mathcal{T}_{N}(\mathbf{t}) = \hspace{-1mm}
	\begin{bmatrix}
		t_{N} & t_{N+1} & \cdots & t_{2N-1} \\ t_{N-1} & t_{N} & \cdots & t_{2N-2} \\ \vdots & \vdots & \ddots & \vdots \\ t_{1} & t_{2} & \cdots & t_{N} 
	\end{bmatrix},
\end{equation*}
where $\mathbf{t} = [t_{1},\cdots,t_{2N-1}] \in \mathbb{C}^{2N - 1}$. For clarity, we summarize the key notations for the system model adopted in this paper in Table \ref{NotationMeaning}.

\section{System Model} \label{SecSys}

\subsection{System Model}

\begin{figure}[!t]
	\centering
	\includegraphics[width=5in]{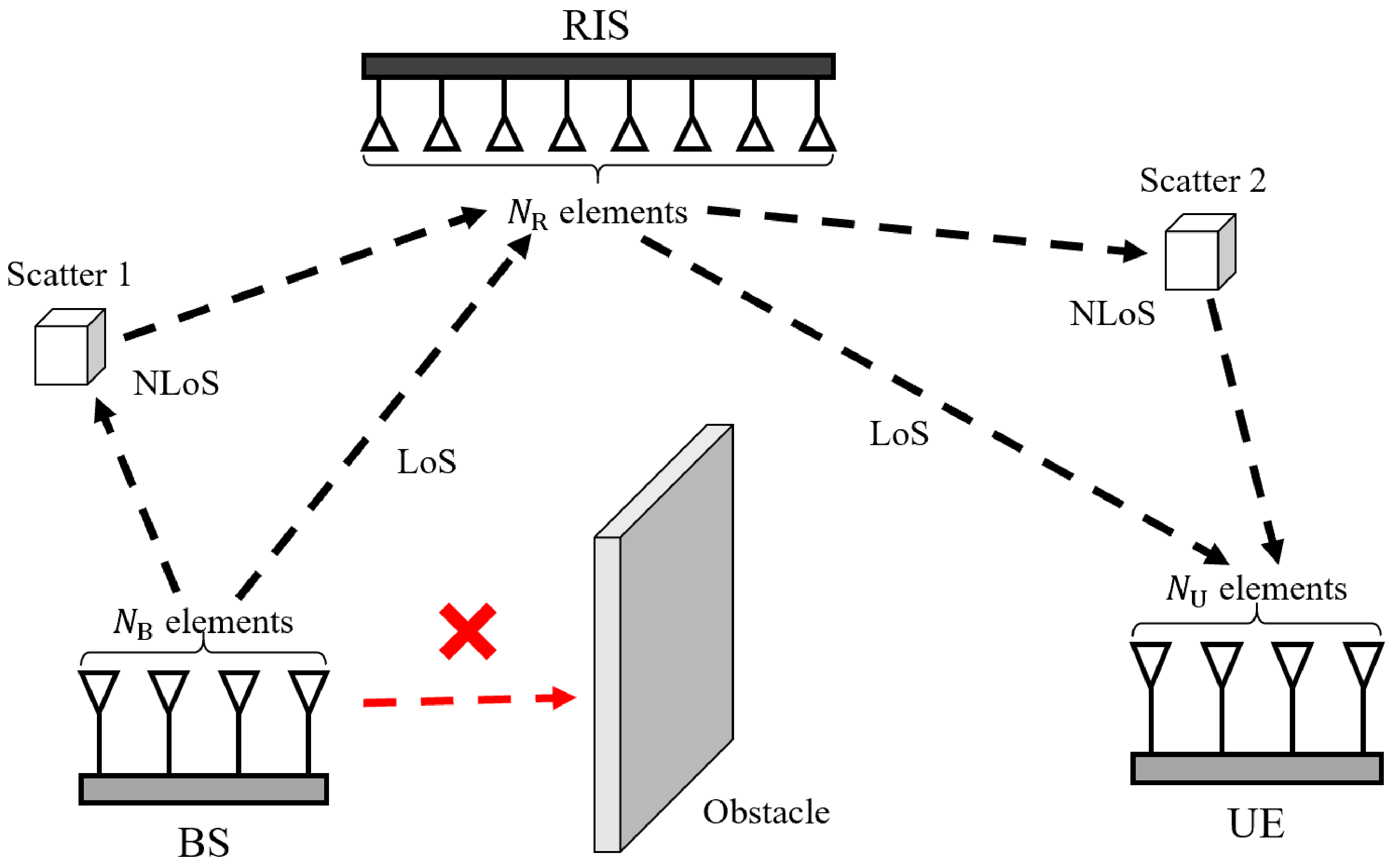}
	\caption{The considered RIS-aided MIMO communication system.}
	\label{fig1}
	\vspace{-5mm}
\end{figure}

We consider a RIS-aided MIMO communication system, where a BS communicates with a UE\footnote{In this paper, we consider a single UE in the considered RIS-aided MIMO system. The proposed CE methods can be extended to multiple UEs by employing orthogonal training sequences among different UEs, which ensures that the CE for each UE is independent of the others.} with the help of a RIS while the direct BS-to-UE channel is blocked, as shown in Fig. \ref{fig1}. The BS, UE, and RIS are equipped with uniform linear arrays (ULAs) of $N_\mathrm{B}$, $N_\mathrm{U}$, and $N_\mathrm{R}$ elements, respectively\footnote{In this paper, ULAs are employed in the considered RIS-aided MIMO system and thus only azimuth is considered to simplify the notations. The proposed CE method can be extended to the case of uniform planar arrays by considering both azimuth and elevation.}. In the considered RIS-aided MIMO system, the BS transmits training sequences, the RIS adjusts its phases to steer the signal towards the UE, and the UE estimates the cascaded BS-RIS-UE channel based on the received signals.

\begin{figure}[!t]
	\centering
	\includegraphics[width=5in]{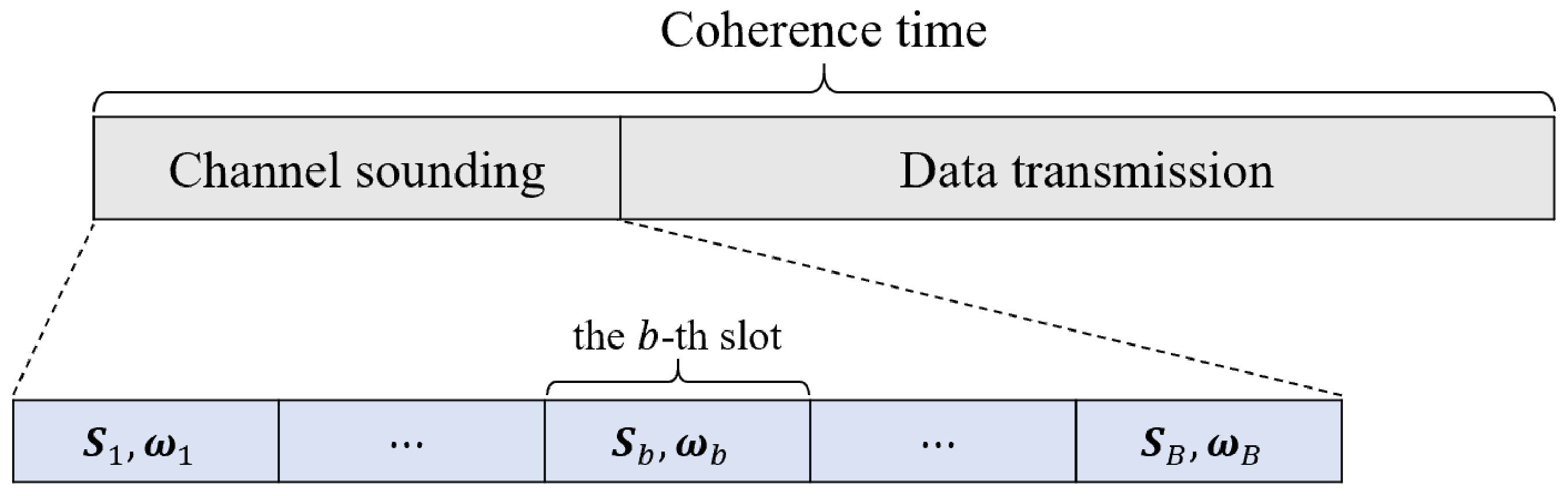}
	\caption{The adopted slot-based channel sounding procedure.}
	\label{fig2}
	\vspace{-5mm}
\end{figure}

The detailed channel sounding procedure is introduced as follows. Following \cite{he2021channel,chung2021location}, a slot-based channel sounding procedure is adopted in this paper, as illustrated in Fig. \ref{fig2}. In particular, a coherence time interval is divided into two periods for channel sounding and data transmission, respectively. The channel sounding period includes $B$ time slots and each time slot contains $M$ symbol durations. In the $b$-th time slot, $b \in \{1,\cdots,B\}$, the BS transmits $N_\mathrm{B}$ orthogonal training sequences of length $M$, collected in $\mathbf{S}_{b} = [\mathbf{s}_{b1},\cdots,\mathbf{s}_{bN_\mathrm{B}}]^{T} \in \mathbb{C}^{N_\mathrm{B} \times M}$, satisfying $M \geq N_\mathrm{B}$ and $\mathbf{S}_{b} \mathbf{S}_{b}^{H} = P \mathbf{I}_{N_\mathrm{B}}$ with $P$ being the transmit power per antenna. The phase control vector for the RIS in the $b$-th time slot is denoted by $\boldsymbol{\omega}_{b} \in \mathbb{C}^{N_\mathrm{R} \times 1}$, of which each entry is unit-modulus, i.e., $\left|[\boldsymbol{\omega}_{b}]_{n_\mathrm{R}}\right| = 1$, $\forall n_\mathrm{R} \in \{ 1,\cdots,N_\mathrm{R} \}$, since the passive reflecting elements in the RIS can only adjust their phases in practice \cite{huang2019reconfigurable,he2021channel}. In addition, the synchronization and pilot detection are assumed to be properly solved, e.g., via the methods developed in \cite{meyr1998digital} or \cite{pun2006maximum}. In the $b$-th training slot, the received signal at the UE is given by
\begin{equation} \label{Z_b}
	\mathbf{Z}_{b} = \mathbf{H}^{b}_\mathrm{BU} \mathbf{S}_{b} + \mathbf{N}_{b},
\end{equation}
where $\mathbf{H}^b_\mathrm{BU} \in \mathbb{C}^{N_\mathrm{U} \times N_\mathrm{B}}$ denotes the cascaded end-to-end channel between the BS and UE in the $b$-th time slot with $\boldsymbol{\omega}_{b}$ taken into account and $\mathbf{N}_{b} \in \mathbb{C}^{N_\mathrm{U} \times M}$ denotes the additive white Gaussian noise (AWGN) whose entries are independent and identically distributed (i.i.d.) following $\mathcal{CN}(0, \sigma^{2})$ with $\sigma^{2}$ being the noise variance. By exploiting the orthogonality of the training sequences, we right-multiply \eqref{Z_b} by $\mathbf{S}_{b}^{H} / P$, which yields the processed signal as
\begin{equation} \label{Y_b}
	\mathbf{Y}_{b} =\mathbf{H}^b_\mathrm{BU} + \mathbf{N}_{b}', 
\end{equation}
where $\mathbf{N}_{b}' = \frac{1}{P} \mathbf{N}_{b}\mathbf{S}_{b}^{H} \in \mathbb{C}^{N_{\mathrm{U}}\times N_{\mathrm{B}}}$ denotes a noise matrix with i.i.d. entries following $\mathcal{CN}(0, \frac{\sigma^{2}}{P})$. Then, the CE problem for the considered RIS-aided MIMO system in this paper is to estimate the cascaded channel $\mathbf{H}^b_\mathrm{BU}$ based on the processed signal $\mathbf{Y}_{b}$.

\subsection{Channel Model}

In this paper, we consider a geometry-based statistic channel model \cite{saleh1987statistical}, which has been commonly adopted in the RIS literature \cite{ardah2021trice,he2021channel,he2021leveraging,chung2021location} and includes various communication environments such as line-of-sight (LOS) and non-line-of-sight (NLOS) scenarios. We first define the steering vector of an $N$-element ULA with half-wavelength spacing as
\begin{equation}
	\mathbf{a}_{N} ( \theta ) = \big[ 1,\mathrm{e}^{i \pi \cos \theta},\cdots,\mathrm{e}^{i \pi ( N-1 ) \cos \theta} \big]^{T} \in \mathbb{C}^{N \times 1},
\end{equation}
where $\theta$ denotes the steering angle. Now, the BS-to-RIS channel $\mathbf{H}_\mathrm{BR} \in \mathbb{C}^{N_\mathrm{R} \times N_\mathrm{B}}$ is modeled as
\begin{equation} \label{HBR}
	\begin{aligned}
		\mathbf{H}_\mathrm{BR} & = \sum_{l_\mathrm{BR}=1}^{L_\mathrm{BR}} [\boldsymbol{\rho}_\mathrm{BR}]_{l_\mathrm{BR}} \mathbf{a}_{N_\mathrm{R}} ( [\boldsymbol{\phi}_\mathrm{R}]_{l_\mathrm{BR}} ) \mathbf{a}^{H}_{N_\mathrm{B}} ( [\boldsymbol{\theta}_\mathrm{B}]_{l_\mathrm{BR}} ) \\ & = \mathbf{A}_{N_\mathrm{R}} ( \boldsymbol{\phi}_\mathrm{R} ) \mathrm{diag} ( \boldsymbol{\rho}_\mathrm{BR} ) \mathbf{A}^{H}_{N_\mathrm{B}} ( \boldsymbol{\theta}_\mathrm{B} ),
	\end{aligned}
\end{equation}
where $L_\mathrm{BR}$ denotes the number of paths between the BS and RIS, variables $[\boldsymbol{\theta}_\mathrm{B}]_{l_\mathrm{BR}}, [\boldsymbol{\phi}_\mathrm{R}]_{l_\mathrm{BR}} \in [0,\pi]$\footnote{In this paper, we assume that the left-right ambiguity issue of the ULA can be resolved through array signal processing approaches \cite{van2004optimum,NairOCEANS} and thus the azimuths are assumed within a half angular space.}, and $ [\boldsymbol{\rho}_\mathrm{BR}]_{l_\mathrm{BR}} \in \mathbb{C}$ denote the angle-of-departure (AoD), AoA, and propagation path gain of the $l_\mathrm{BR}$-th path of the BS-to-RIS channel, respectively, $\mathbf{A}_{N_\mathrm{B}} ( \boldsymbol{\theta}_\mathrm{B} ) = \big[ \mathbf{a}_{N_\mathrm{B}} ( [\boldsymbol{\theta}_\mathrm{B}]_{1} ),\cdots,\mathbf{a}_{N_\mathrm{B}} ( [\boldsymbol{\theta}_\mathrm{B}]_{L_\mathrm{BR}} ) \big] \in \mathbb{C}^{N_\mathrm{B} \times L_\mathrm{BR}}$, and $\mathbf{A}_{N_\mathrm{R}} ( \boldsymbol{\phi}_\mathrm{R} ) = \big[ \mathbf{a}_{N_\mathrm{R}} ( [\boldsymbol{\phi}_\mathrm{R}]_{1} ),\cdots,\mathbf{a}_{N_\mathrm{R}} ( [\boldsymbol{\phi}_\mathrm{R}]_{L_\mathrm{BR}} ) \big] \in \mathbb{C}^{N_\mathrm{R} \times L_\mathrm{BR}}$. We assume that $\mathbf{H}_\mathrm{BR}$ is sparse in the angular domain in the sense that the number of paths $L_\mathrm{BR}$ is small. Similarly, the RIS-to-UE channel $\mathbf{H}_\mathrm{RU} \in \mathbb{C}^{N_\mathrm{U} \times N_\mathrm{R}}$ is modeled as
\begin{equation} \label{HRU}
	\begin{aligned}
		\mathbf{H}_\mathrm{RU} & = \sum_{l_\mathrm{RU}=1}^{L_\mathrm{RU}} [\boldsymbol{\rho}_\mathrm{RU}]_{l_\mathrm{RU}} \mathbf{a}_{N_\mathrm{U}} ( [\boldsymbol{\phi}_\mathrm{U}]_{l_\mathrm{RU}} ) \mathbf{a}^{H}_{N_\mathrm{R}} ( [\boldsymbol{\theta}_\mathrm{R}]_{l_\mathrm{RU}} ) \\ & = \mathbf{A}_{N_\mathrm{U}} ( \boldsymbol{\phi}_\mathrm{U} ) \mathrm{diag} ( \boldsymbol{\rho}_\mathrm{RU} ) \mathbf{A}^{H}_{N_\mathrm{R}} ( \boldsymbol{\theta}_\mathrm{R} ),
	\end{aligned}
\end{equation}
where $L_\mathrm{RU}$ denotes the number of paths between the RIS and UE, variables $[\boldsymbol{\theta}_\mathrm{R}]_{l_\mathrm{RU}}, [\boldsymbol{\phi}_\mathrm{U}]_{l_\mathrm{RU}} \in [0,\pi]$, and $[\boldsymbol{\rho}_\mathrm{RU}]_{l_\mathrm{RU}} \in \mathbb{C}$ denote the AoD, AoA, and propagation path gain of the $l_\mathrm{RU}$-th path of the RIS-to-UE channel, respectively, $\mathbf{A}_{N_\mathrm{R}} ( \boldsymbol{\theta}_\mathrm{R} ) = \big[ \mathbf{a}_{N_\mathrm{R}} ( [\boldsymbol{\theta}_\mathrm{R}]_{1} ),\cdots,\mathbf{a}_{N_\mathrm{R}} ( [\boldsymbol{\theta}_\mathrm{R}]_{L_\mathrm{RU}} ) \big] \in \mathbb{C}^{N_\mathrm{R} \times L_\mathrm{RU}}$, and $\mathbf{A}_{N_\mathrm{U}} ( \boldsymbol{\phi}_\mathrm{U} ) = \big[ \mathbf{a}_{N_\mathrm{U}} ( [\boldsymbol{\phi}_\mathrm{U}]_{1} ),\cdots,\mathbf{a}_{N_\mathrm{U}} ( [\boldsymbol{\phi}_\mathrm{U}]_{L_\mathrm{RU}} ) \big] \in \mathbb{C}^{N_\mathrm{U} \times L_\mathrm{RU}}$. We also assume that $\mathbf{H}_\mathrm{RU}$ is sparse in the angular domain with $L_\mathrm{RU}$ being small. Now, the cascaded channel $\mathbf{H}^b_\mathrm{BU} \in \mathbb{C}^{N_\mathrm{U} \times N_\mathrm{B}}$ is given by
\begin{equation} \label{HbBU}
	\begin{aligned}
		\mathbf{H}^b_\mathrm{BU} & = \mathbf{H}_\mathrm{RU} \mathrm{diag} ( \boldsymbol{\omega}_b ) \mathbf{H}_\mathrm{BR} \\ 
		& = \mathbf{A}_{N_\mathrm{U}} ( \boldsymbol{\phi}_\mathrm{U} ) \mathrm{diag} ( \boldsymbol{\rho}_\mathrm{RU} ) \mathbf{A}^{H}_{N_\mathrm{R}} ( \boldsymbol{\theta}_\mathrm{R} ) \mathrm{diag} ( \boldsymbol{\omega}_b ) \mathbf{A}_{N_\mathrm{R}} ( \boldsymbol{\phi}_\mathrm{R} ) \mathrm{diag} ( \boldsymbol{\rho}_\mathrm{BR} ) \mathbf{A}^{H}_{N_\mathrm{B}} ( \boldsymbol{\theta}_\mathrm{B} ).
	\end{aligned}
\end{equation}

It is observed from \eqref{HbBU} that the cascaded channel $\mathbf{H}^b_\mathrm{BU}$ exhibits four-dimensional angular sparsity, i.e., $L_\mathrm{BR}$ and $L_\mathrm{RU}$ are much smaller than the size of arrays at the BS, UE, and RIS, which can be exploited to facilitate the CE for RIS-aided MIMO systems. Nevertheless, there are still a large number of channel parameters to be estimated to reconstruct $\mathbf{H}^b_\mathrm{BU}$. Even worse, they are severely coupled with each other, which motivates us to propose a decoupled CE method that maintains a high CE accuracy with a reduced computational complexity\footnote{We focus on the CE of RIS-aided MIMO systems in this paper. After acquiring an estimate of the channel, the BS can use the methods developed in \cite{zhang2020capacity} or \cite{pan2020multicell} to design the optimal phase shifts for RIS.}.

\section{Partially Decoupled ANM-based Channel Estimation Framework} \label{SecPDANM}

In this section, we first reveal the inherent parameter ambiguity in the cascaded channel model and derive an effective channel model. Based on this model, we formulate the CE problem as a PDANM problem, which exploits the 3D angular sparsity of the effective channel in a partially decoupled manner. Since the problem at hand is intractable, we formulate an SDP problem that provides a lower bound for the original PDANM problem and is equivalent to PDANM under mild conditions. A detailed computational complexity analysis is further provided. In particular, the proposed PDANM-based CE framework serves as a building block for further improving CE accuracy and reducing training overhead in the following sections.

\subsection{Effective Channel Model}

To begin with, we reveal the parameter ambiguity in the cascaded channel model in \eqref{HbBU}, as stated in the following proposition.
\begin{proposition} \label{Proposition1}
	{\rm (Parameter Ambiguity)} From a given $\mathbf{H}^b_\mathrm{BU}$, the parameters $\boldsymbol{\phi}_\mathrm{R}$, $\boldsymbol{\theta}_\mathrm{R}$, $\boldsymbol{\rho}_\mathrm{BR}$, and $\boldsymbol{\rho}_\mathrm{RU}$ in \eqref{HbBU} cannot be uniquely identified, i.e., there are different sets of parameters satisfying
	\begin{equation}
		\begin{aligned}
			\mathbf{H}^b_\mathrm{BU} & = \mathbf{A}_{N_\mathrm{U}} ( \boldsymbol{\phi}_\mathrm{U} ) \mathrm{diag} ( \boldsymbol{\rho}_\mathrm{RU} ) \mathbf{A}^{H}_{N_\mathrm{R}} ( \boldsymbol{\theta}_\mathrm{R} )\mathrm{diag} ( \boldsymbol{\omega}_b ) \mathbf{A}_{N_\mathrm{R}} ( \boldsymbol{\phi}_\mathrm{R} ) \mathrm{diag} ( \boldsymbol{\rho}_\mathrm{BR} ) \mathbf{A}^{H}_{N_\mathrm{B}} ( \boldsymbol{\theta}_\mathrm{B} ) \\ & = \mathbf{A}_{N_\mathrm{U}} ( \boldsymbol{\phi}_\mathrm{U} ) \mathrm{diag} ( \boldsymbol{\rho}'_\mathrm{RU} ) \mathbf{A}^{H}_{N_\mathrm{R}} ( \boldsymbol{\theta}'_\mathrm{R} ) \mathrm{diag} ( \boldsymbol{\omega}_b ) \mathbf{A}_{N_\mathrm{R}} ( \boldsymbol{\phi}'_\mathrm{R} ) \mathrm{diag} ( \boldsymbol{\rho}'_\mathrm{BR} ) \mathbf{A}^{H}_{N_\mathrm{B}} ( \boldsymbol{\theta}_\mathrm{B} ),
		\end{aligned}
	\end{equation}
	where $\boldsymbol{\rho}'_\mathrm{RU} \ne \boldsymbol{\rho}_\mathrm{RU}$, $\boldsymbol{\theta}'_\mathrm{R} \ne \boldsymbol{\theta}_\mathrm{R}$, $\boldsymbol{\phi}'_\mathrm{R} \ne \boldsymbol{\phi}_\mathrm{R}$, and $\boldsymbol{\rho}'_\mathrm{BR} \ne \boldsymbol{\rho}_\mathrm{BR}$.
\end{proposition}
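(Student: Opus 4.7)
The plan is to explicitly construct a non-trivial reparameterization that preserves $\mathbf{H}^b_\mathrm{BU}$ while strictly changing all four vectors $\boldsymbol{\phi}_\mathrm{R}, \boldsymbol{\theta}_\mathrm{R}, \boldsymbol{\rho}_\mathrm{BR}, \boldsymbol{\rho}_\mathrm{RU}$; producing such a pair of parameter sets immediately establishes non-identifiability. The construction combines two independent sources of ambiguity: a common cosine-shift of the RIS-side angles, and a nonzero complex scalar rescaling of the two path-gain vectors.

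For the angle ambiguity I would isolate the middle block $\mathbf{M} := \mathbf{A}^{H}_{N_\mathrm{R}}(\boldsymbol{\theta}_\mathrm{R}) \mathrm{diag}(\boldsymbol{\omega}_b) \mathbf{A}_{N_\mathrm{R}}(\boldsymbol{\phi}_\mathrm{R})$ in \eqref{HbBU} and compute its entries
\begin{equation*}
[\mathbf{M}]_{i,j} = \sum_{n=1}^{N_\mathrm{R}} [\boldsymbol{\omega}_b]_n \, \mathrm{e}^{\,i\pi(n-1)\left(\cos [\boldsymbol{\phi}_\mathrm{R}]_j \,-\, \cos [\boldsymbol{\theta}_\mathrm{R}]_i\right)},
\end{equation*}
observing that $\mathbf{M}$ depends on the RIS-side angles only through the pairwise cosine differences $\cos [\boldsymbol{\phi}_\mathrm{R}]_j - \cos [\boldsymbol{\theta}_\mathrm{R}]_i$. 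Writing $\mathbf{d}_c := [1, \mathrm{e}^{i\pi c}, \ldots, \mathrm{e}^{i\pi(N_\mathrm{R}-1)c}]^T$, the substitutions $\cos [\boldsymbol{\phi}'_\mathrm{R}]_j = \cos [\boldsymbol{\phi}_\mathrm{R}]_j + c$ and $\cos [\boldsymbol{\theta}'_\mathrm{R}]_i = \cos [\boldsymbol{\theta}_\mathrm{R}]_i + c$ give $\mathbf{A}_{N_\mathrm{R}}(\boldsymbol{\phi}'_\mathrm{R}) = \mathrm{diag}(\mathbf{d}_c)\, \mathbf{A}_{N_\mathrm{R}}(\boldsymbol{\phi}_\mathrm{R})$ and $\mathbf{A}^{H}_{N_\mathrm{R}}(\boldsymbol{\theta}'_\mathrm{R}) = \mathbf{A}^{H}_{N_\mathrm{R}}(\boldsymbol{\theta}_\mathrm{R})\, \mathrm{diag}(\overline{\mathbf{d}_c})$; since diagonal matrices mutually commute and $\mathrm{diag}(\overline{\mathbf{d}_c})\mathrm{diag}(\mathbf{d}_c) = \mathbf{I}_{N_\mathrm{R}}$, $\mathbf{M}$ is invariant under this common shift.

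For the gain ambiguity I would use that, for any nonzero $\alpha \in \mathbb{C}$, $\mathrm{diag}(\alpha\boldsymbol{\rho}_\mathrm{RU})\, \mathbf{M}\, \mathrm{diag}(\alpha^{-1}\boldsymbol{\rho}_\mathrm{BR}) = \mathrm{diag}(\boldsymbol{\rho}_\mathrm{RU})\, \mathbf{M}\, \mathrm{diag}(\boldsymbol{\rho}_\mathrm{BR})$. Combining both, I would pick $\alpha \ne 1$ and a nonzero $c$ small enough that the shifted cosines remain in $[-1,1]$, and set $\boldsymbol{\rho}'_\mathrm{RU} = \alpha\boldsymbol{\rho}_\mathrm{RU}$, $\boldsymbol{\rho}'_\mathrm{BR} = \alpha^{-1}\boldsymbol{\rho}_\mathrm{BR}$, together with $\boldsymbol{\phi}'_\mathrm{R}, \boldsymbol{\theta}'_\mathrm{R}$ defined by the cosine-shift; substituting into \eqref{HbBU} then yields the same $\mathbf{H}^b_\mathrm{BU}$ with all four vectors strictly different from the originals. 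The only step requiring care, rather than genuine difficulty, is keeping the shifted cosines inside $[-1,1]$, which is ensured by taking $|c|$ strictly less than $\min_{i,j}\min\{1 - |\cos [\boldsymbol{\phi}_\mathrm{R}]_j|,\, 1 - |\cos [\boldsymbol{\theta}_\mathrm{R}]_i|\}$; this minimum is positive as long as the RIS-side angles avoid the endfire directions $0$ and $\pi$.
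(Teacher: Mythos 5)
Your proposal is correct and follows essentially the same route as the paper's proof: a common additive shift of the cosines of the RIS-side angles (which cancels in the cosine differences appearing in the middle block) combined with a reciprocal complex scaling $\kappa$ and $1/\kappa$ of the two gain vectors. Your explicit bound keeping the shifted cosines in $[-1,1]$ matches the role of the paper's interval $[-\zeta,\zeta]$, so there is nothing substantively different to compare.
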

\begin{proof}
	See Appendix \ref{Proposition1Proof}.
\end{proof}

Proposition \ref{Proposition1} reveals the parameter ambiguity in $\mathbf{H}^{b}_\mathrm{BU}$, i.e., neither the path gains nor the angular parameters in the cascaded channel model can be uniquely identified, which may potentially affect some parameter-based applications in RIS-aided MIMO systems, e.g., beamforming. In contrast, the next proposed effective channel model avoids the potential harm caused by parameter ambiguity. Inserting \eqref{HbBU} into \eqref{Y_b} and vectorizing $\mathbf{Y}_{b}$, we have
\begin{equation} \label{y_b}
	\begin{aligned}
		\mathbf{y}_{b} & = \mathrm{vec} (\mathbf{Y}_{b}) = \mathrm{vec} \big( \mathbf{H}_\mathrm{RU} \mathrm{diag} ( \boldsymbol{\omega}_{b} ) \mathbf{H}_\mathrm{BR} \big) + \mathrm{vec} (\mathbf{N}_{b}') \\ & \mathop {=}\limits^{\left(\mathrm{a}\right)} (\mathbf{H}_\mathrm{BR}^{T} \diamond \mathbf{H}_\mathrm{RU}) \boldsymbol{\omega}_{b} + \mathbf{n}_{b} \\ & \mathop {=}\limits^{\left(\mathrm{b}\right)} \hspace{-1mm} \big[ \overline{\mathbf{A}_{N_\mathrm{B}} ( \boldsymbol{\theta}_\mathrm{B} )} \hspace{-0.5mm} \otimes \hspace{-0.5mm} \mathbf{A}_{N_\mathrm{U}} ( \boldsymbol{\phi}_\mathrm{U} ) \big] \big[ \mathrm{diag} ( \boldsymbol{\rho}_\mathrm{BR} ) \hspace{-0.5mm} \otimes \hspace{-0.5mm} \mathrm{diag} ( \boldsymbol{\rho}_\mathrm{RU} ) \big] \big[ \mathbf{A}^{T}_{N_\mathrm{R}} ( \boldsymbol{\phi}_\mathrm{R} ) \hspace{-0.5mm} \diamond \hspace{-0.5mm} \mathbf{A}^{H}_{N_\mathrm{R}} ( \boldsymbol{\theta}_\mathrm{R} ) \big] \boldsymbol{\omega}_{b} + \mathbf{n}_{b},
	\end{aligned}
\end{equation}
where $\left(\mathrm{a}\right)$ and $\left(\mathrm{b}\right)$ in \eqref{y_b} are obtained by exploiting the properties of Kronecker and Khatri-Rao product \cite{rao1998matrix}, respectively, and $\mathbf{n}_{b} = \mathrm{vec} (\mathbf{N}_{b}') \in \mathbb{C}^{N_{\mathrm{U}} N_{\mathrm{B}}\times1}$. Furthermore, to resolve the parameter ambiguity illustrated in Proposition \ref{Proposition1}, we define the effective path gains as $\boldsymbol{\rho}_\mathrm{BU} = \boldsymbol{\rho}_\mathrm{BR} \otimes \boldsymbol{\rho}_\mathrm{RU} \in \mathbb{C}^{L_\mathrm{BR} L_\mathrm{RU}\times1}$ and the \emph{differential angles} $\boldsymbol{\psi}_\mathrm{R}$ at the RIS as follows:
\begin{equation}
	\begin{aligned}
		\boldsymbol{\psi}_\mathrm{R} = \big\{ & [\boldsymbol{\psi}_\mathrm{R}]_{l_\mathrm{BU}} \in [0,\pi] : \cos ([\boldsymbol{\psi}_\mathrm{R}]_{l_\mathrm{BU}}) = \big[ \cos ([\boldsymbol{\theta}_\mathrm{R}]_{l_\mathrm{RU}}) - \cos ([\boldsymbol{\phi}_\mathrm{R}]_{l_\mathrm{BR}}) \big] \hspace{-0.1in} \mod 1, \\ & l_\mathrm{BR} = 1,\cdots,L_\mathrm{BR}, \ l_\mathrm{RU} = 1,\cdots,L_\mathrm{RU}, \ l_\mathrm{BU} = (l_{\mathrm{BR}}-1)L_{\mathrm{RU}}+l_\mathrm{RU} \big\}.
	\end{aligned}
\end{equation}
Then, we define the effective channel $\mathbf{H}$ between the BS and UE as
\begin{equation} \label{H}
	\begin{aligned}
		\mathbf{H} & = \mathbf{H}_\mathrm{BR}^{T} \diamond \mathbf{H}_\mathrm{RU} \\ & = \big[ \mathbf{A}_{N_\mathrm{B}} ( -\boldsymbol{\theta}_\mathrm{B} ) \otimes \mathbf{A}_{N_\mathrm{U}} ( \boldsymbol{\phi}_\mathrm{U} ) \big] \mathrm{diag} ( \boldsymbol{\rho}_\mathrm{BU} ) \mathbf{A}^{H}_{N_\mathrm{R}} ( \boldsymbol{\psi}_\mathrm{R} ) \\ & = \sum_{l_\mathrm{BR}=1}^{L_\mathrm{BR}} \sum_{l_\mathrm{RU}=1}^{L_\mathrm{RU}} [\boldsymbol{\rho}_\mathrm{BU}]_{l_\mathrm{BU}} \big[ \mathbf{a}_{N_\mathrm{B}} ( [-\boldsymbol{\theta}_\mathrm{B}]_{l_\mathrm{BR}} ) \otimes \mathbf{a}_{N_\mathrm{U}} ( [\boldsymbol{\phi}_\mathrm{U}]_{l_\mathrm{RU}} ) \big] \mathbf{a}^{H}_{N_\mathrm{R}} ( [\boldsymbol{\psi}_\mathrm{R}]_{l_\mathrm{BU}} ),
	\end{aligned}
\end{equation}
where $l_\mathrm{BU} = (l_{\mathrm{BR}}-1)L_{\mathrm{RU}}+l_\mathrm{RU}$ and $\mathbf{A}_{N_\mathrm{R}} ( \boldsymbol{\psi}_\mathrm{R} ) = \big[ \mathbf{a}_{N_\mathrm{R}} ( [\boldsymbol{\psi}_\mathrm{R}]_{1} ),\cdots,\mathbf{a}_{N_\mathrm{R}} ( [\boldsymbol{\psi}_\mathrm{R}]_{L_{\mathrm{BR}}L_\mathrm{RU}} ) \big] \in \mathbb{C}^{N_\mathrm{R} \times L_\mathrm{BR} L_\mathrm{RU}}$. With \eqref{H}, the vectorized signal $\mathbf{y}_{b}$ in \eqref{y_b} can be rewritten as
\begin{equation} \label{y_b2}
	\begin{aligned}
		\mathbf{y}_{b} & = \mathbf{H} \boldsymbol{\omega}_{b} + \mathbf{n}_{b}.
	\end{aligned}
\end{equation}
By stacking the vectorized signals $\mathbf{y}_{b}$ in all the $B$ time slots into a received signal matrix, we have
\begin{equation} \label{Y}
	\mathbf{Y} = \mathbf{H} \boldsymbol{\Omega} + \mathbf{N} \in \mathbb{C}^{N_\mathrm{B}N_\mathrm{U} \times B},
\end{equation}
where $\boldsymbol{\Omega} = \big[ \boldsymbol{\omega}_{1},\cdots,\boldsymbol{\omega}_{B} \big] \in \mathbb{C}^{N_\mathrm{R} \times B}$ denotes the RIS phase control matrix and $\mathbf{N} \in \mathbb{C}^{N_\mathrm{B}N_\mathrm{U} \times B}$ is the effective noise with i.i.d. entries following $\mathcal{CN}(0, \frac{\sigma^{2}}{P} )$. Now, the CE problem of the considered RIS-aided MIMO system is to estimate the effective channel $\mathbf{H}$ from the received signal $\mathbf{Y}$ with the known phase control matrix $\boldsymbol{\Omega}$.

It is observed from \eqref{Y} that the effective channel $\mathbf{H}$ is decoupled from the phases of RIS, which exhibits a simpler structure that is beneficial for CE. In particular, it is seen from \eqref{H} that $\mathbf{H}$ exhibits a 3D angular sparsity, i.e., the number of effective paths $L_\mathrm{BR}L_\mathrm{RU}$ is much smaller than $\min \{ N_\mathrm{B}N_\mathrm{U},N_\mathrm{R} \}$, where the $\big( (l_{\mathrm{BR}}-1)L_{\mathrm{RU}}+l_\mathrm{RU} \big)$-th effective path corresponding to an effective path gain $[\boldsymbol{\rho}_\mathrm{BR}]_{l_\mathrm{BR}} [\boldsymbol{\rho}_\mathrm{RU}]_{l_\mathrm{RU}}$, an AoD $[\boldsymbol{\theta}_\mathrm{B}]_{l_\mathrm{BR}}$ from the BS, an AoA $[\boldsymbol{\phi}_\mathrm{U}]_{l_\mathrm{RU}}$ to the UE, and a differential angle $[\boldsymbol{\psi}_\mathrm{R}]_{(l_{\mathrm{BR}}-1)L_{\mathrm{RU}}+l_\mathrm{RU}}$ at the RIS. On the other hand, the effective channel model does not suffer from the parameter ambiguity issue, thus is beneficial for applications requiring estimated channel parameters.

We note that a model similar to \eqref{H} has been considered in \cite{ardah2021trice} to separate the channel parameters at the BS and UE from those at the RIS and estimate them separately. The effective channel \eqref{H} is adopted in \cite{chung2021location} for CE. Differently from \cite{ardah2021trice,chung2021location}, in this paper, we have shown theoretically the necessity of exploiting the effective channel model for CE in the sense that the parameters cannot be identified and thus effectively estimated from the original cascaded channel model.

\subsection{Partially Decoupled ANM} \label{SubSecPDANM}

Since the effective channel $\mathbf{H}$ exhibits a 3D angular sparsity, the considered CE problem can be formulated as a sparse optimization problem. In particular, the \emph{atomic norm} \cite{chandrasekaran2012convex} is a class of sparsity metrics capable of exploiting the channel structure. In the following, we introduce some basics of atomic set and ANM before proposing the PDANM framework for the CE problem.

If $\mathbf{H}$ can be represented as a linear combination of some elements in a set with some coefficients, we call this set an \emph{atomic set} and this representation an \emph{atomic decomposition}. The \emph{atomic norm} \cite{chandrasekaran2012convex} of $\mathbf{H}$ with respect to an atomic set is defined as the minimum sum of the absolute values of each coefficient among all its atomic decompositions regarding this atomic set. For example, the 2D atomic set \cite{chung2021location} is defined as $\mathcal{A}_{\mathrm{2D}} = \Big\{ \big[ \mathbf{a}_{N_\mathrm{B}} ( \theta ) \otimes \mathbf{a}_{N_\mathrm{U}} ( \phi ) \big] \mathbf{b}^{H} : \theta, \phi \in [0,\pi], \| \mathbf{b} \|_{2} = 1 \Big\}$ and the corresponding 2D atomic norm of $\mathbf{H}$ is defined as
\begin{equation} \label{AN2D}
	\| \mathbf{H} \|_{\mathcal{A}_{\mathrm{2D}}} = \inf \Big\{ \sum_{l} | \rho_{l} | : \mathbf{H} = \sum_{l} \rho_{l} \big[ \mathbf{a}_{N_\mathrm{B}} ( \theta_{l} ) \otimes \mathbf{a}_{N_\mathrm{U}} ( \phi_{l} ) \big] \mathbf{b}_{l}^{H}, \theta_{l}, \phi_{l} \in [0,\pi], \| \mathbf{b}_{l} \|_{2} = 1 \Big\},
\end{equation}
which only exploits the 2D angular structure of the effective channel $\mathbf{H}$ but neglects its third-dimensional angular structure. To fully exploit the angular structure of the effective channel $\mathbf{H}$, \cite{chung2021location} further proposed to vectorize $\mathbf{H}$ and define the 3D atomic set as $\mathcal{A}_{\mathrm{3D}} = \big\{ \mathbf{a}_{N_\mathrm{R}} ( \psi ) \otimes \mathbf{a}_{N_\mathrm{B}} ( \theta ) \otimes \mathbf{a}_{N_\mathrm{U}} ( \phi ): \psi, \theta, \phi \in [0,\pi] \big\}$. Then, the corresponding 3D atomic norm of the vectorized effective channel $\mathrm{vec} (\mathbf{H})$ is defined as
\begin{equation} \label{AN3D}
	\| \mathrm{vec} (\mathbf{H}) \|_{\mathcal{A}_{\mathrm{3D}}} = \hspace{-0.5mm} \inf \Big\{ \sum_{l} | \rho_{l} | : \mathrm{vec} (\mathbf{H}) =\sum_{l} \rho_{l} \mathbf{a}_{N_\mathrm{R}} ( \psi_{l} ) \otimes \mathbf{a}_{N_\mathrm{B}} ( \theta_{l} ) \otimes \mathbf{a}_{N_\mathrm{U}} ( \phi_{l} ) , \psi_{l}, \theta_{l}, \phi_{l} \in [0,\pi] \Big\},
\end{equation}
which characterizes the structure of $\mathbf{H}$ more precisely than \eqref{AN2D} due to exploring its 3D angular structure. Although \eqref{AN3D} is intractable, $\| \mathrm{vec} (\mathbf{H}) \|_{\mathcal{A}_{\mathrm{3D}}}$ can be calculated via solving the following SDP problem \cite{yang2016vandermonde,chi2014compressive}
\begin{equation} \label{AN3Dopt}
	\begin{aligned}
			\min_{t,\mathbf{T}} & \frac{1}{2} t + \frac{1}{2N_\mathrm{R}N_\mathrm{B}N_\mathrm{U}} \mathrm{tr} (\mathcal{T}_{[N_\mathrm{R},N_\mathrm{B},N_\mathrm{U}]}(\mathbf{T})) \\ \mathrm{s.t.} &
			\begin{bmatrix}
					t & \mathrm{vec} (\mathbf{H})^{H}\\
					\mathrm{vec} (\mathbf{H}) & \mathcal{T}_{[N_\mathrm{R},N_\mathrm{B},N_\mathrm{U}]}(\mathbf{T})
				\end{bmatrix}
			\succeq \mathbf{0},
		\end{aligned}
\end{equation}
where $\mathcal{T}_{[N_\mathrm{R},N_\mathrm{B},N_\mathrm{U}]}(\mathbf{T}) \in \mathbb{C}^{N_\mathrm{R}N_\mathrm{B}N_\mathrm{U} \times N_\mathrm{R}N_\mathrm{B}N_\mathrm{U}}$ is a $3$-level Toeplitz matrix, as defined in Section \ref{SecInt}.

Due to the vectorization of $\mathbf{H}$, a large-scale SDP problem needs to be solved for calculating $\| \mathrm{vec} (\mathbf{H}) \|_{\mathcal{A}_{\mathrm{3D}}}$, which leads to a prohibitively high computational complexity. For channels with only 2D angular structures, such as \eqref{HBR} and \eqref{HRU}, a decoupled atomic norm and a decoupled ANM method was proposed in \cite{zhang2019efficient,xi2017super}, which reduces the computational complexity by decoupling the angular parameters in two different dimensions and solving a smaller-sized SDP problem. This motivates us to partially decouple the 3D angular parameters into two groups and reformulate the considered CE problem in \eqref{Y} as a PDANM problem. For this purpose, we define the partially decoupled atomic set as $\mathcal{A} = \big\{ \big[ \mathbf{a}_{N_\mathrm{B}} ( \theta ) \otimes \mathbf{a}_{N_\mathrm{U}} ( \phi ) \big] \mathbf{a}^{H}_{N_\mathrm{R}} ( \psi ) : \theta, \phi, \psi \in [0,\pi] \big\}$ and the PDAN as
\begin{equation} \label{PDAN}
	\| \mathbf{H} \|_{\mathcal{A}} = \inf \Big\{ \sum_{l} | \rho_{l} | : \mathbf{H} = \sum_{l} \rho_{l} \big[ \mathbf{a}_{N_\mathrm{B}} ( \theta_{l} ) \otimes \mathbf{a}_{N_\mathrm{U}} ( \phi_{l} ) \big] \mathbf{a}^{H}_{N_\mathrm{R}} ( \psi_{l} ), \theta_{l}, \phi_{l}, \psi_{l} \in [0,\pi]\Big\}.
\end{equation}
It is seen from the definitions of $\mathcal{A}$ and $\| \mathbf{H} \|_{\mathcal{A}}$ that they efficiently exploit the 3D angular structure of the effective channel. In addition, it is observed that $\| \mathrm{vec} (\mathbf{H}) \|_{\mathcal{A}_{\mathrm{3D}}} = \| \mathbf{H} \|_{\mathcal{A}}$ by comparing \eqref{AN3D} and \eqref{PDAN}. In particular, the calculation of $\| \mathbf{H} \|_{\mathcal{A}}$ can be realized by solving a smaller-size SDP problem than \eqref{AN3Dopt} in certain scenarios, as will be discussed below.

Similar to \eqref{AN3D}, \eqref{PDAN} is an intractable problem, thus we formulate an SDP problem to calculate $\| \mathbf{H} \|_{\mathcal{A}}$ in the following theorem:

\begin{figure*}[!t]
	\vspace{-3mm}
	\centering
	\subfloat[]
	{\label{fig3a}\includegraphics[width=2in]{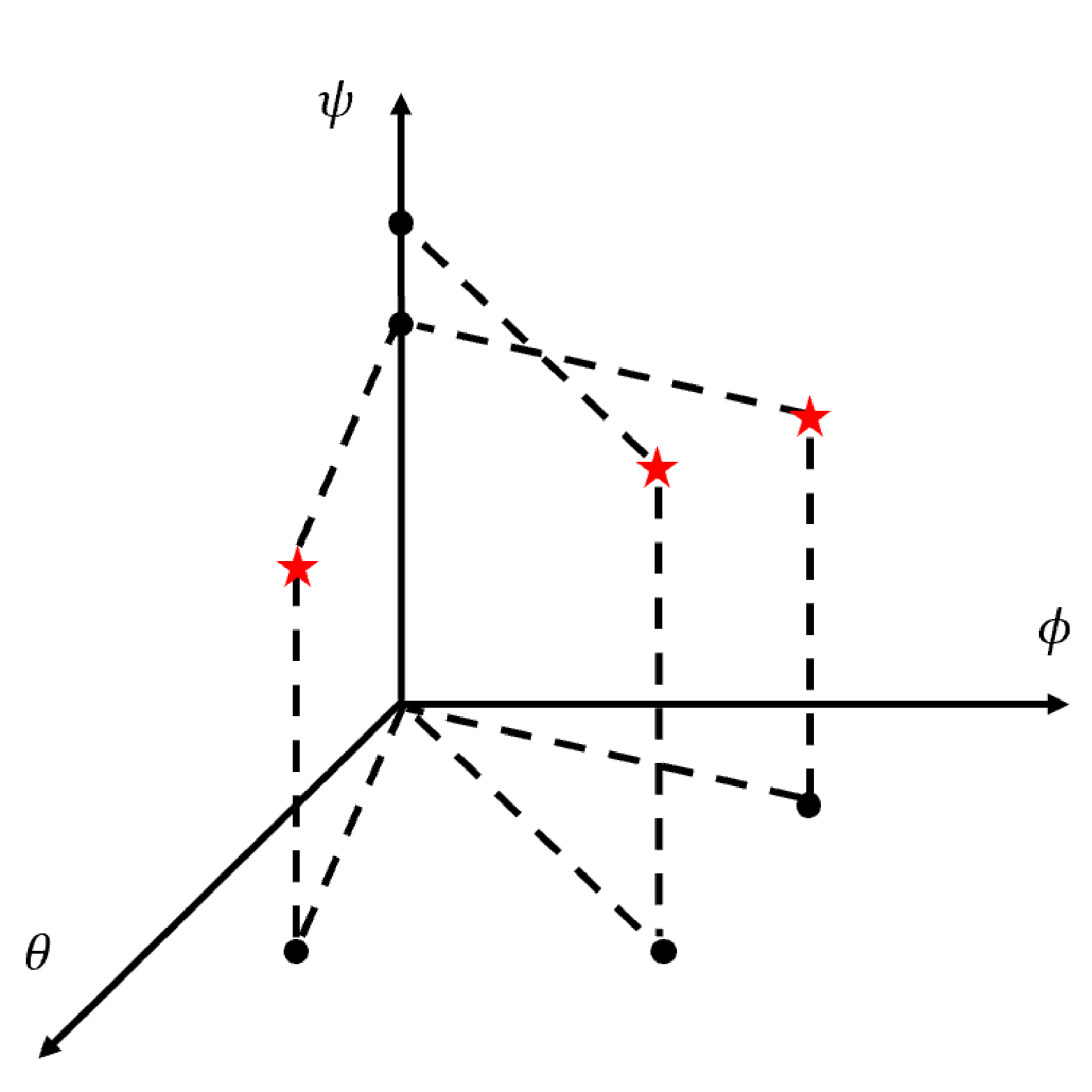}}
	\subfloat[]
	{\label{fig3b}\includegraphics[width=2in]{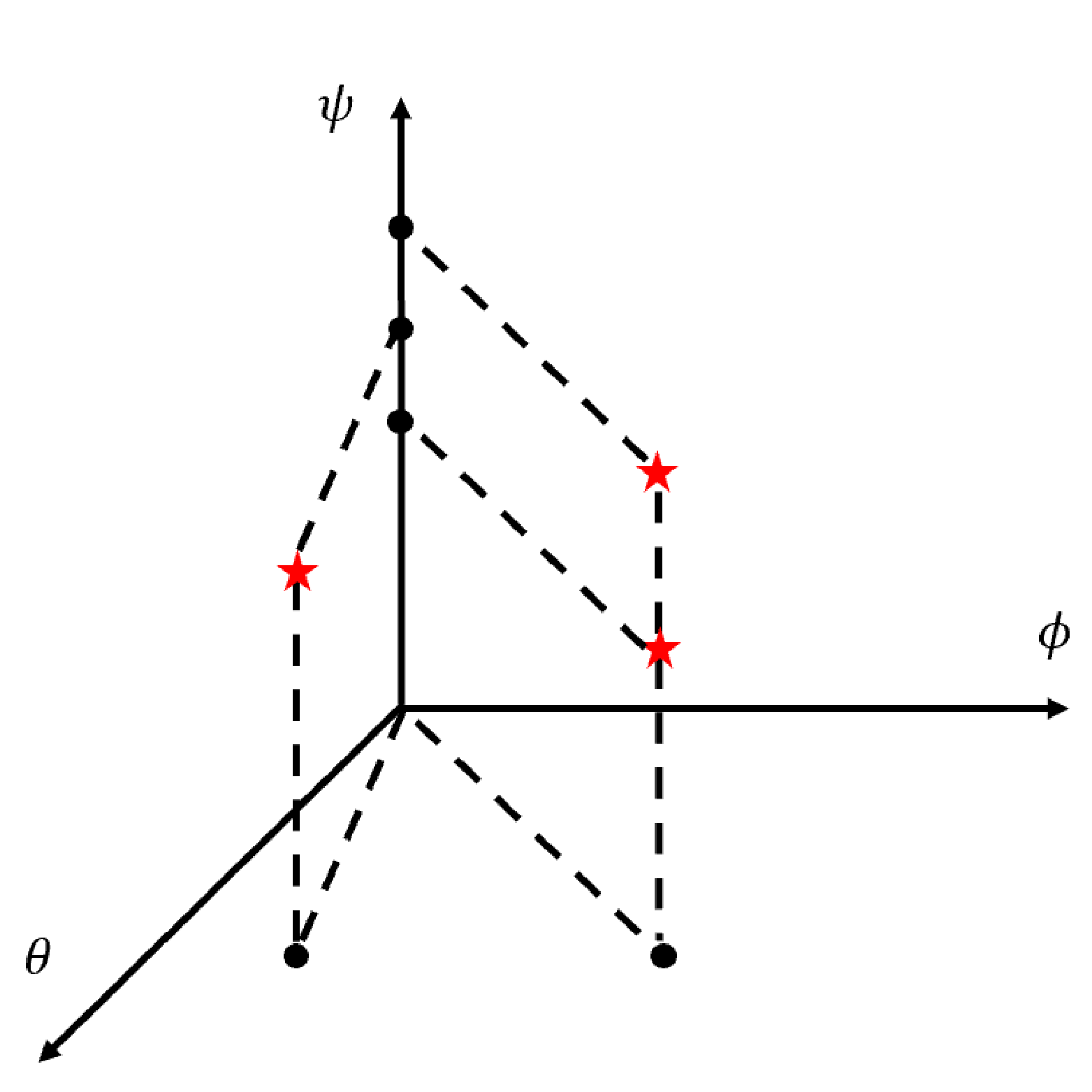}}
	\subfloat[]
	{\label{fig3c}\includegraphics[width=2in]{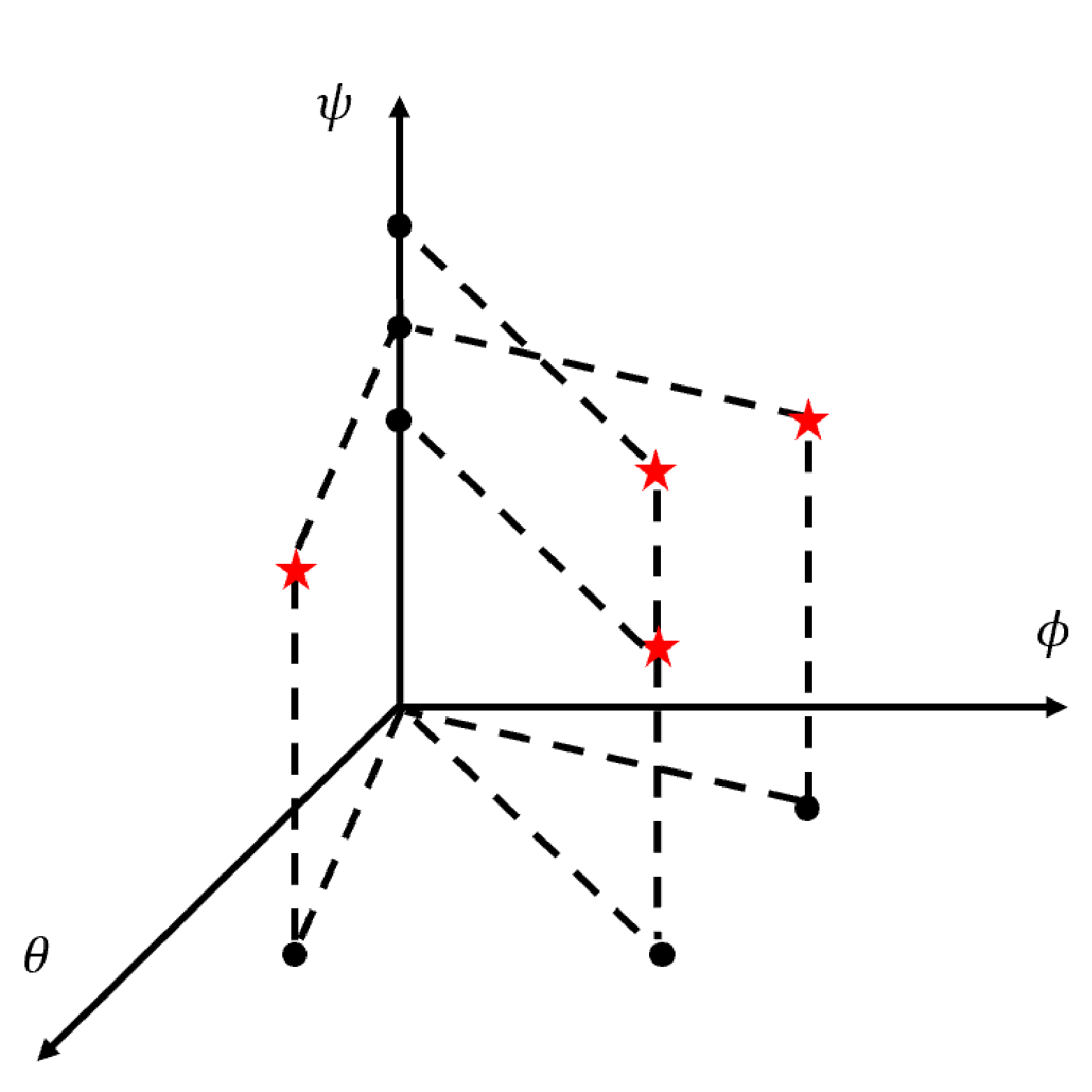}}
	\caption{An example of groups of partially decoupled atoms in several cases, where the red pentagrams correspond to the partially decoupled atoms and the black dots denote their projections on the $\{ \theta,\phi \}$ plane and $\psi$ axis. The three partially decoupled atoms in Fig. \ref{fig3a} are separable on the $\{ \theta,\phi \}$ plane with $L_\mathrm{R}^{*} = 2$ and $L_\mathrm{BU}^{*} = 3$ and the three partially decoupled atoms in Fig. \ref{fig3b} are separable on the $\psi$ axis with $L_\mathrm{R}^{*} = 3$ and $L_\mathrm{BU}^{*} = 2$. The four partially decoupled atoms in Fig. \ref{fig3c} overlap on both the $\{ \theta,\phi \}$ plane and the $\psi$ axis with $L_\mathrm{R}^{*} = L_\mathrm{BU}^{*} = 3$.}
	\label{fig3}
	\vspace{-5mm}
\end{figure*}

\begin{theorem} \label{Theorem1}
	For a given effective channel $\mathbf{H}$, we have $\mathrm{SDP}(\mathbf{H}) \leq \| \mathbf{H} \|_{\mathcal{A}}$, where $\mathrm{SDP}(\mathbf{H})$ is the optimal value of the following SDP problem:
	\begin{equation} \label{PDANopt}
		\begin{aligned}
			\min_{\mathbf{t},\mathbf{T}} & \frac{1}{2N_\mathrm{R}} \mathrm{tr} \big(\mathcal{T}_{N_\mathrm{R}}(\mathbf{t})\big) + \frac{1}{2N_\mathrm{B}N_\mathrm{U}} \mathrm{tr} \big(\mathcal{T}_{[N_\mathrm{B},N_\mathrm{U}]}(\mathbf{T})\big) \\ \mathrm{s.t.} &
			\begin{bmatrix}
				\mathcal{T}_{N_\mathrm{R}}(\mathbf{t}) & \mathbf{H}^{H}\\
				\mathbf{H} & \mathcal{T}_{[N_\mathrm{B},N_\mathrm{U}]}(\mathbf{T})
			\end{bmatrix}
			\succeq \mathbf{0}.
		\end{aligned}
	\end{equation}
	Furthermore, denote $\{ \mathbf{t}^{*},\mathbf{T}^{*} \}$ as the optimizer of \eqref{PDANopt}, then we have $\mathrm{SDP}(\mathbf{H}) = \| \mathbf{H} \|_{\mathcal{A}}$ if the following conditions are satisfied:
	\begin{enumerate}
		\item{$\mathrm{rank} \big(\mathcal{T}_{N_\mathrm{R}}(\mathbf{t}^{*})\big) < N_\mathrm{R}$. It implies that $\mathcal{T}_{N_\mathrm{R}}(\mathbf{t}^{*})$ admits a Vandermonde decomposition \cite[Theorem 11.5]{yang2018sparse}, i.e.,
		\begin{equation} \label{VanDec1}
			\begin{aligned}
				\mathcal{T}_{N_\mathrm{R}}(\mathbf{t}^{*}) & = \sum_{l_\mathrm{R}=1}^{L_\mathrm{R}^{*}} [\mathbf{p}_\mathrm{R}^{*}]_{l_\mathrm{R}} \mathbf{a}_{N_\mathrm{R}} ( [\boldsymbol{\psi}_\mathrm{R}^{*}]_{l_\mathrm{R}} ) \mathbf{a}^{H}_{N_\mathrm{R}} ( [\boldsymbol{\psi}_\mathrm{R}^{*}]_{l_\mathrm{R}} ) \\ & = \mathbf{A}_{N_\mathrm{R}} ( \boldsymbol{\psi}_\mathrm{R}^{*} ) \mathrm{diag} (\mathbf{p}_\mathrm{R}^{*}) \mathbf{A}^{H}_{N_\mathrm{R}} ( \boldsymbol{\psi}_\mathrm{R}^{*} ),
			\end{aligned}
		\end{equation}
		where $L_\mathrm{R}^{*} = \mathrm{rank} \big(\mathcal{T}_{N_\mathrm{R}}(\mathbf{t}^{*})\big)$ is the number of estimated differential angles, $[\boldsymbol{\psi}_\mathrm{R}^{*}]_{l_\mathrm{R}}$ is the $l_\mathrm{R}$-th estimated differential angle, $[\mathbf{p}_\mathrm{R}^{*}]_{l_\mathrm{R}}$ is the amplitude of the $l_\mathrm{R}$-th estimated path gain, and $\mathbf{A}_{N_\mathrm{R}} ( \boldsymbol{\psi}_\mathrm{R}^{*} ) = \big[ \mathbf{a}_{N_\mathrm{R}} ( [\boldsymbol{\psi}_\mathrm{R}^{*}]_{1} ),\cdots,\mathbf{a}_{N_\mathrm{R}} ( [\boldsymbol{\psi}_\mathrm{R}^{*}]_{L_\mathrm{R}^{*}} ) \big]$ is full column rank.} \label{con1}
		\item{$\mathcal{T}_{[N_\mathrm{B},N_\mathrm{U}]}(\mathbf{T}^{*})$ admits a 2-level Vandermonde decomposition \cite{yang2016vandermonde}, i.e.,
		\begin{equation} \label{VanDec2}
			\begin{aligned}
				\mathcal{T}_{[N_\mathrm{B},N_\mathrm{U}]}(\mathbf{T}^{*}) & = \sum_{l_\mathrm{BU}=1}^{L_\mathrm{BU}^{*}} [\mathbf{p}_\mathrm{BU}^{*}]_{l_\mathrm{BU}} \big[ \mathbf{a}_{N_\mathrm{B}} ( [\boldsymbol{\theta}_\mathrm{B}^{*}]_{l_\mathrm{BU}} ) \otimes \mathbf{a}_{N_\mathrm{U}} ( [\boldsymbol{\phi}_\mathrm{U}^{*}]_{l_\mathrm{BU}} ) \big] \big[ \mathbf{a}_{N_\mathrm{B}} ( [\boldsymbol{\theta}_\mathrm{B}^{*}]_{l_\mathrm{BU}} ) \otimes \mathbf{a}_{N_\mathrm{U}} ( [\boldsymbol{\phi}_\mathrm{U}^{*}]_{l_\mathrm{BU}} ) \big]^{H} \\ & = \big[ \mathbf{A}_{N_\mathrm{B}} ( \boldsymbol{\theta}_\mathrm{B}^{*} ) \diamond \mathbf{A}_{N_\mathrm{U}} ( \boldsymbol{\phi}_\mathrm{U}^{*} ) \big] \mathrm{diag} (\mathbf{p}_\mathrm{BU}^{*}) \big[ \mathbf{A}_{N_\mathrm{B}} ( \boldsymbol{\theta}_\mathrm{B}^{*} ) \diamond \mathbf{A}_{N_\mathrm{U}} ( \boldsymbol{\phi}_\mathrm{U}^{*} ) \big]^{H},
			\end{aligned}
		\end{equation}
		where $L_\mathrm{BU}^{*} = \mathrm{rank} \big(\mathcal{T}_{[N_\mathrm{B},N_\mathrm{U}]}(\mathbf{T}^{*})\big) $ is the number of estimated AoD-AoA pairs, $[\boldsymbol{\theta}_\mathrm{B}^{*}]_{l_\mathrm{BU}}$ is the $l_\mathrm{BU}$-th estimated AoD, $[\boldsymbol{\phi}_\mathrm{U}^{*}]_{l_\mathrm{BU}}$ is the $l_\mathrm{BU}$-th estimated AoA, $[\mathbf{p}_\mathrm{BU}^{*}]_{l_\mathrm{BU}}$ is the amplitude of the $l_\mathrm{BU}$-th estimated path gain, $\mathbf{A}_{N_\mathrm{B}} ( \boldsymbol{\theta}_\mathrm{B}^{*} ) = \big[ \mathbf{a}_{N_\mathrm{B}} ( [\boldsymbol{\theta}_\mathrm{B}^{*}]_{1} ),\cdots,\mathbf{a}_{N_\mathrm{B}} ( [\boldsymbol{\theta}_\mathrm{B}^{*}]_{L_\mathrm{BU}^{*}} ) \big]$, $\mathbf{A}_{N_\mathrm{U}} ( \boldsymbol{\phi}_\mathrm{U}^{*} ) = \big[ \mathbf{a}_{N_\mathrm{U}} ( [\boldsymbol{\phi}_\mathrm{U}^{*}]_{1} ),\cdots,\mathbf{a}_{N_\mathrm{U}} ( [\boldsymbol{\phi}_\mathrm{U}^{*}]_{L_\mathrm{BU}^{*}} ) \big]$, and $\big[ \mathbf{A}_{N_\mathrm{B}} ( \boldsymbol{\theta}_\mathrm{B}^{*} ) \diamond \mathbf{A}_{N_\mathrm{U}} ( \boldsymbol{\phi}_\mathrm{U}^{*} ) \big]$ is full column rank.} \label{con2}
		\item{$\mathbf{C} = \big[ \mathbf{A}_{N_\mathrm{B}} ( \boldsymbol{\theta}_\mathrm{B}^{*} ) \diamond \mathbf{A}_{N_\mathrm{U}} ( \boldsymbol{\phi}_\mathrm{U}^{*} ) \big]^{\dagger} \mathbf{H} {\mathbf{A}^{H}_{N_\mathrm{R}} ( \boldsymbol{\psi}_\mathrm{R}^{*} )}^{\dagger}$ has only one nonzero element per row or per column.} \label{con3}
	\end{enumerate}
\end{theorem}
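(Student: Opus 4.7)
The plan is to prove the two halves of the theorem separately, using the Vandermonde/Toeplitz machinery underlying atomic norm theory. For the inequality $\mathrm{SDP}(\mathbf{H}) \leq \| \mathbf{H} \|_{\mathcal{A}}$, the idea is to convert an arbitrary atomic decomposition of $\mathbf{H}$ into a feasible point of \eqref{PDANopt} whose objective exactly equals the sum of moduli of the coefficients. For the conditional equality, I will read an atomic decomposition directly off an SDP minimizer and bound its coefficient sum by the SDP optimum.

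For the inequality direction, take any decomposition $\mathbf{H} = \sum_l \rho_l \big[ \mathbf{a}_{N_\mathrm{B}}(\theta_l) \otimes \mathbf{a}_{N_\mathrm{U}}(\phi_l) \big] \mathbf{a}_{N_\mathrm{R}}^H(\psi_l)$, write $\rho_l = |\rho_l| \mathrm{e}^{\mathrm{i}\varphi_l}$, and set $\mathbf{a}_l = \mathbf{a}_{N_\mathrm{R}}(\psi_l)$, $\mathbf{b}_l = \mathbf{a}_{N_\mathrm{B}}(\theta_l) \otimes \mathbf{a}_{N_\mathrm{U}}(\phi_l)$. I would define the candidate Toeplitz matrices $\mathcal{T}_{N_\mathrm{R}}(\mathbf{t}) = \sum_l |\rho_l|\, \mathbf{a}_l \mathbf{a}_l^H$ and $\mathcal{T}_{[N_\mathrm{B},N_\mathrm{U}]}(\mathbf{T}) = \sum_l |\rho_l|\, \mathbf{b}_l \mathbf{b}_l^H$; these inherit the required one-level and two-level Toeplitz structures from the Vandermonde form of the steering vectors. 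A short calculation shows that the block matrix in \eqref{PDANopt} then equals $\sum_l |\rho_l|\, \mathbf{v}_l \mathbf{v}_l^H \succeq \mathbf{0}$ with $\mathbf{v}_l = \bigl[\mathbf{a}_l^T,\, \mathrm{e}^{\mathrm{i}\varphi_l} \mathbf{b}_l^T\bigr]^T$, so $(\mathbf{t},\mathbf{T})$ is feasible. Using $\|\mathbf{a}_l\|_2^2 = N_\mathrm{R}$ and $\|\mathbf{b}_l\|_2^2 = N_\mathrm{B} N_\mathrm{U}$, the objective evaluates to $\sum_l |\rho_l|$, and taking the infimum over all decompositions yields $\mathrm{SDP}(\mathbf{H}) \leq \|\mathbf{H}\|_{\mathcal{A}}$.

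For the equality direction, let $(\mathbf{t}^*, \mathbf{T}^*)$ be a minimizer of \eqref{PDANopt}. Under conditions~1 and~2 the Vandermonde decompositions \eqref{VanDec1} and \eqref{VanDec2} apply; write $\mathbf{A}_\mathrm{R}^* = \mathbf{A}_{N_\mathrm{R}}(\boldsymbol{\psi}_\mathrm{R}^*)$ and $\mathbf{A}_\mathrm{BU}^* = \mathbf{A}_{N_\mathrm{B}}(\boldsymbol{\theta}_\mathrm{B}^*) \diamond \mathbf{A}_{N_\mathrm{U}}(\boldsymbol{\phi}_\mathrm{U}^*)$, both of full column rank. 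The PSD block constraint forces $\mathrm{col}(\mathbf{H}) \subseteq \mathrm{col}(\mathbf{A}_\mathrm{BU}^*)$ and $\mathrm{col}(\mathbf{H}^H) \subseteq \mathrm{col}(\mathbf{A}_\mathrm{R}^*)$, hence $\mathbf{H} = \mathbf{A}_\mathrm{BU}^* \mathbf{C} \mathbf{A}_\mathrm{R}^{*H}$ with $\mathbf{C}$ exactly as in condition~3. Substituting the three factorizations and invoking full column rank reduces the block PSD constraint to the compact inequality $\bigl[\begin{smallmatrix} \mathbf{P}_\mathrm{R}^* & \mathbf{C}^H \\ \mathbf{C} & \mathbf{P}_\mathrm{BU}^* \end{smallmatrix}\bigr] \succeq \mathbf{0}$, from which each $2 \times 2$ principal submatrix gives $|[\mathbf{C}]_{l_\mathrm{BU},l_\mathrm{R}}|^2 \leq [\mathbf{p}_\mathrm{R}^*]_{l_\mathrm{R}} [\mathbf{p}_\mathrm{BU}^*]_{l_\mathrm{BU}}$. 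Condition~3 turns the support of $\mathbf{C}$ into a matching between the two index sets, so $\mathbf{H} = \mathbf{A}_\mathrm{BU}^* \mathbf{C} \mathbf{A}_\mathrm{R}^{*H}$ expands into an atomic decomposition whose coefficients are precisely the nonzero entries of $\mathbf{C}$. Bounding each entry via AM-GM by $\tfrac{1}{2}([\mathbf{p}_\mathrm{R}^*]_{l_\mathrm{R}} + [\mathbf{p}_\mathrm{BU}^*]_{l_\mathrm{BU}})$ and summing using the matching structure then gives $\|\mathbf{H}\|_{\mathcal{A}} \leq \tfrac{1}{2}(\|\mathbf{p}_\mathrm{R}^*\|_1 + \|\mathbf{p}_\mathrm{BU}^*\|_1) = \mathrm{SDP}(\mathbf{H})$, which combined with the first inequality closes the loop.

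The main obstacle is the reduction from the block PSD constraint on the multi-level Toeplitz matrices to the compact PSD condition on $\bigl[\begin{smallmatrix} \mathbf{P}_\mathrm{R}^* & \mathbf{C}^H \\ \mathbf{C} & \mathbf{P}_\mathrm{BU}^* \end{smallmatrix}\bigr]$, which requires a careful congruence-style argument exploiting full column rank of both Vandermonde factors on the off-diagonal as well as the diagonal blocks. A second delicate point is confirming that condition~3 really makes the nonzero support of $\mathbf{C}$ an \emph{injective} correspondence, so that every weight $[\mathbf{p}_\mathrm{R}^*]_{l_\mathrm{R}}$ (and $[\mathbf{p}_\mathrm{BU}^*]_{l_\mathrm{BU}}$) is counted at most once in the AM-GM sum; otherwise that sum cannot telescope exactly into $\mathrm{SDP}(\mathbf{H})$. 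Establishing this matching property, likely by arguing that any unmatched rank-one term of either Vandermonde decomposition could be discarded without violating feasibility (contradicting optimality of $(\mathbf{t}^*, \mathbf{T}^*)$), is the heart of the equality argument.
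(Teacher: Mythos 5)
Your feasibility half ($\mathrm{SDP}(\mathbf{H}) \leq \|\mathbf{H}\|_{\mathcal{A}}$) is correct and is essentially the paper's construction, and your reduction of the equality direction to $\mathbf{H} = \big[\mathbf{A}_{N_\mathrm{B}}(\boldsymbol{\theta}_\mathrm{B}^{*}) \diamond \mathbf{A}_{N_\mathrm{U}}(\boldsymbol{\phi}_\mathrm{U}^{*})\big]\mathbf{C}\,\mathbf{A}^{H}_{N_\mathrm{R}}(\boldsymbol{\psi}_\mathrm{R}^{*})$ via column inclusion, followed by the congruence to a compact PSD block matrix, is also sound. The gap is in the final counting step. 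Your chain needs $\sum_{(l_\mathrm{BU},l_\mathrm{R}) \in \mathrm{supp}(\mathbf{C})} \tfrac{1}{2}\big([\mathbf{p}_\mathrm{R}^{*}]_{l_\mathrm{R}} + [\mathbf{p}_\mathrm{BU}^{*}]_{l_\mathrm{BU}}\big) \leq \tfrac{1}{2}\big(\sum_{l_\mathrm{R}}[\mathbf{p}_\mathrm{R}^{*}]_{l_\mathrm{R}} + \sum_{l_\mathrm{BU}}[\mathbf{p}_\mathrm{BU}^{*}]_{l_\mathrm{BU}}\big)$, which requires the support of $\mathbf{C}$ to be a partial permutation (at most one nonzero per row \emph{and} per column). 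Condition 3 only gives one of the two, and the two-sided property genuinely fails in cases the theorem is meant to cover: in the scenario of Fig.~\ref{fig3a} one has $L_\mathrm{R}^{*} = 2 < L_\mathrm{BU}^{*} = 3$ with one nonzero per row, so two rows of $\mathbf{C}$ necessarily share a column and the corresponding $[\mathbf{p}_\mathrm{R}^{*}]_{l_\mathrm{R}}$ is counted twice, breaking the bound. Your proposed repair (discarding unmatched rank-one terms by optimality) only removes unused weights; it cannot resolve these many-to-one collisions, so the ``matching property'' you identify as the heart of the argument is not just unproven --- it is false in general under the stated hypotheses.

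The paper avoids this by replacing your entrywise $2\times 2$-minor bound with an aggregated Schur-complement bound. From the PSD constraint one gets $\mathrm{diag}(\mathbf{p}_\mathrm{R}^{*}) \succeq \mathbf{C}^{H}\,\mathrm{diag}(\mathbf{p}_\mathrm{BU}^{*})^{\dagger}\,\mathbf{C}$; when each row of $\mathbf{C}$ has exactly one nonzero the right-hand side is diagonal, yielding the column sums $[\mathbf{p}_\mathrm{R}^{*}]_{l_\mathrm{R}} \geq \sum_{l_\mathrm{BU}} |[\mathbf{C}]_{l_\mathrm{BU},l_\mathrm{R}}|^{2}/[\mathbf{p}_\mathrm{BU}^{*}]_{l_\mathrm{BU}}$. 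Summing over $l_\mathrm{R}$ collapses the double sum into a single sum over rows, and AM--GM is then applied pairing $|[\mathbf{C}]_{l_\mathrm{BU},k_{l_\mathrm{BU}}}|^{2}/[\mathbf{p}_\mathrm{BU}^{*}]_{l_\mathrm{BU}}$ with $[\mathbf{p}_\mathrm{BU}^{*}]_{l_\mathrm{BU}}$ at the \emph{same} row index, so no weight is ever used twice even when several rows hit the same column; the one-nonzero-per-column case is symmetric. If you substitute this column-aggregated inequality for your per-entry bound, the rest of your outline goes through.
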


\begin{proof}
	See Appendix \ref{Theorem1Proof}.
\end{proof}

Theorem \ref{Theorem1} illustrates that $\mathrm{SDP}(\mathbf{H})$ is a lower bound of $\| \mathbf{H} \|_{\mathcal{A}}$ and proves the equivalence between \eqref{PDAN} and \eqref{PDANopt} under some mild sufficient conditions. In particular, the first two conditions in Theorem \ref{Theorem1} guarantee that a partially decoupled atomic decomposition of $\mathbf{H}$ can be obtained from the optimizer of \eqref{PDANopt}, while the third condition implies that the corresponding partially decoupled atoms have either non-overlapping AoD-AoA pairs or non-overlapping differential angles. An example of groups of partially decoupled atoms is given in Fig. \ref{fig3}, where the partially decoupled atoms in Fig. \ref{fig3a} and Fig. \ref{fig3b} are separable on the $\{ \theta,\phi \}$ plane or the $\psi$ axis and thus satisfy the conditions in Theorem \ref{Theorem1}, while those in Fig. \ref{fig3c} overlap on both the $\{ \theta,\phi \}$ plane and the $\psi$ axis and thus do not satisfy the conditions in Theorem \ref{Theorem1}. In addition, we expect that $L_\mathrm{R}^{*} = L_\mathrm{BU}^{*} = L_\mathrm{BR}L_\mathrm{RU}$ holds in the general case, which means that both the numbers of estimated differential angles and estimated AoD-AoA pairs are equal to that of actual effective paths.

Furthermore, a more intuitive sufficient condition for the equivalence between \eqref{PDAN} and \eqref{PDANopt} is that the effective channel paths are well separable in terms of differential angles, as stated in the following theorem:

\begin{theorem} \label{Theorem2}
	Define the minimum sinusoidal interval of a set of angles $\boldsymbol{\psi} = [ \psi_{1},\cdots,\psi_{K} ]$ as
	\begin{equation} \label{MinSinInt}
		\Delta_{\boldsymbol{\psi}} = \inf_{\psi_{i},\psi_{j}:i \ne j} \{ |\cos \psi_{i} - \cos \psi_{j}|,1 - |\cos \psi_{i} - \cos \psi_{j}| \}.
	\end{equation}
	If the actual differential angles $\boldsymbol{\psi}_\mathrm{R}$ satisfy that $\Delta_{\boldsymbol{\psi}_\mathrm{R}} > \frac{4}{N_\mathrm{R}}$, then we have $\mathrm{SDP}(\mathbf{H}) = \| \mathbf{H} \|_{\mathcal{A}}$.
\end{theorem}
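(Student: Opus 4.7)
\emph{Proof plan.} The plan is to establish Theorem \ref{Theorem2} via an explicit dual certificate. From Theorem \ref{Theorem1} and the upper bound $\|\mathbf{H}\|_\mathcal{A}\leq\sum_{l}|\rho_l|$ furnished by the natural decomposition of the effective channel, it suffices to prove $\mathrm{SDP}(\mathbf{H})\geq\sum_{l}|\rho_l|$. A standard Lagrangian calculation shows that the dual of \eqref{PDANopt} maximizes $\mathrm{Re}\langle\mathbf{Q},\mathbf{H}\rangle$ over $\mathbf{Q}\in\mathbb{C}^{N_\mathrm{B}N_\mathrm{U}\times N_\mathrm{R}}$, subject to the uniform bound
\begin{equation*}
   \bigl|[\mathbf{a}_{N_\mathrm{B}}(\theta)\otimes\mathbf{a}_{N_\mathrm{U}}(\phi)]^{H}\mathbf{Q}\,\mathbf{a}_{N_\mathrm{R}}(\psi)\bigr|\leq 1,\quad\forall(\theta,\phi,\psi)\in[0,\pi]^{3},
\end{equation*}
which is implementable by a pair of PSD Toeplitz-dual constraints. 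It therefore suffices to construct a feasible $\mathbf{Q}^\star$ whose associated polynomial saturates this bound at each triple $(\theta_{l,\mathrm{B}},\phi_{l,\mathrm{U}},\psi_l)$ with the sign of $\rho_l$.

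I would adapt the classical super-resolution interpolation construction to this matrix-valued setting. Letting $\mathbf{b}_l=\mathbf{a}_{N_\mathrm{B}}(-\theta_{l,\mathrm{B}})\otimes\mathbf{a}_{N_\mathrm{U}}(\phi_{l,\mathrm{U}})$, I would work with the ansatz
\begin{equation*}
   \mathbf{Q}^\star\mathbf{a}_{N_\mathrm{R}}(\psi)=\sum_{l}\bigl[\alpha_l\,\mathbf{b}_l\,K(\psi,\psi_l)+\beta_l\,\mathbf{b}_l\,\partial_\psi K(\psi,\psi_l)\bigr],
\end{equation*}
where $K$ is the squared Fej\'er kernel of order proportional to $N_\mathrm{R}$. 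The coefficients $\alpha_l,\beta_l\in\mathbb{C}$ are chosen so as to enforce the interpolation conditions $\mathbf{b}_l^{H}\mathbf{Q}^\star\mathbf{a}_{N_\mathrm{R}}(\psi_l)=\mathrm{sgn}(\rho_l)\,\|\mathbf{b}_l\|_{2}^{2}$ together with a zero $\psi$-derivative condition at every $\psi_l$. The hypothesis $\Delta_{\boldsymbol{\psi}_\mathrm{R}}>4/N_\mathrm{R}$ is exactly the classical threshold that makes the resulting $2L_\mathrm{BR}L_\mathrm{RU}\times 2L_\mathrm{BR}L_\mathrm{RU}$ interpolation system diagonally dominant and well-conditioned, delivering $\alpha_l\approx\mathrm{sgn}(\rho_l)$ and $\beta_l$ of lower order in $1/N_\mathrm{R}$.

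The main obstacle is verifying the uniform polynomial bound $\bigl|\mathbf{b}^{H}\mathbf{Q}^\star\mathbf{a}_{N_\mathrm{R}}(\psi)\bigr|\leq\|\mathbf{b}\|_{2}^{2}$ for all steering vectors $\mathbf{b}=\mathbf{a}_{N_\mathrm{B}}(\theta)\otimes\mathbf{a}_{N_\mathrm{U}}(\phi)$, not merely along the true support. For each fixed $\mathbf{b}$, the left-hand side is a scalar trigonometric polynomial in $\psi$ whose coefficients are of the form $\alpha_l\mathbf{b}^{H}\mathbf{b}_l$ and $\beta_l\mathbf{b}^{H}\mathbf{b}_l$. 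My plan is to split the analysis into near-support and far-support regions in $\psi$, apply the sharp localization of $K$ together with Cauchy-Schwarz on the cross inner products $\mathbf{b}^{H}\mathbf{b}_l$, and reduce the vector-valued bound to the scalar super-resolution bound already known to hold under the $4/N_\mathrm{R}$ separation. Once $\mathbf{Q}^\star$ is validated, weak duality gives $\mathrm{SDP}(\mathbf{H})\geq\mathrm{Re}\langle\mathbf{Q}^\star,\mathbf{H}\rangle=\sum_l|\rho_l|$, which combined with Theorem \ref{Theorem1}'s inequality yields $\mathrm{SDP}(\mathbf{H})=\|\mathbf{H}\|_\mathcal{A}$ as claimed.
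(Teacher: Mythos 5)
Your plan has a genuine gap at the point where you invoke weak duality. You describe the dual of \eqref{PDANopt} as maximizing $\mathrm{Re}\langle\mathbf{Q},\mathbf{H}\rangle$ subject to the \emph{uniform} bound $|[\mathbf{a}_{N_\mathrm{B}}(\theta)\otimes\mathbf{a}_{N_\mathrm{U}}(\phi)]^{H}\mathbf{Q}\,\mathbf{a}_{N_\mathrm{R}}(\psi)|\leq 1$, "implementable by a pair of PSD Toeplitz-dual constraints." That implementability only goes one way: dual feasibility for \eqref{PDANopt} requires exhibiting Gram matrices $\mathbf{M}_{11},\mathbf{M}_{22}\succeq\mathbf{0}$ whose Toeplitz adjoints match the trace weights and for which the full block matrix containing $\mathbf{Q}^\star$ is PSD. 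For the $N_\mathrm{B}N_\mathrm{U}$ block this is a \emph{two-level} Toeplitz adjoint condition, and a bounded bivariate trigonometric polynomial need not admit such a Gram (sum-of-squares) representation; so a $\mathbf{Q}^\star$ satisfying only the pointwise bound need not be dual feasible, and weak duality gives you nothing. A second, related gap is that the theorem's separation hypothesis lives only in $\psi$: the support pairs $(\theta_{l},\phi_{l})$ form a grid built from $L_\mathrm{BR}$ BS angles and $L_\mathrm{RU}$ UE angles and may coincide or cluster arbitrarily, so the "near/far split plus Cauchy--Schwarz on $\mathbf{b}^{H}\mathbf{b}_l$" step you defer as "the main obstacle" cannot be reduced to the scalar super-resolution bound in the usual way; all localization must come from the $\psi$ kernel alone, and that verification is exactly the hard part, left undone.

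The paper's proof shows how to repair this cleanly, and the repair essentially dissolves your construction. It takes $\mathbf{M}_{22}$ proportional to the identity (trivially satisfying the two-level adjoint condition), which by a Schur-complement argument reduces everything to a one-variable, vector-valued (MMV) certificate: one needs $\|\mathbf{Q}\mathbf{a}_{N_\mathrm{R}}(\psi)\|_{2}$ uniformly bounded in $\psi$ with sign interpolation at the $\psi_l$, and in one variable Fej\'er--Riesz closes the SOS gap. Concretely, the paper observes that \eqref{PDANopt} is the SDP \eqref{AN1Dopt} with an extra two-level Toeplitz constraint, hence $\mathrm{SDP}(\mathbf{H})\geq\mathrm{SDP}_{1D}(\mathbf{H})$, then cites \cite[Theorems 3--4]{yang2016exact} to get $\mathrm{SDP}_{1D}(\mathbf{H})=\frac{1}{\sqrt{N_\mathrm{B}N_\mathrm{U}}}\|\mathbf{H}\|_{\mathcal{A}_{1D}}=\sum_{l}|[\boldsymbol{\rho}_\mathrm{BU}]_{l}|$ under $\Delta_{\boldsymbol{\psi}_\mathrm{R}}>4/N_\mathrm{R}$, and finally sandwiches $\|\mathbf{H}\|_{\mathcal{A}}$ between these quantities using the natural decomposition \eqref{H}. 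If you want to salvage your direct-certificate route, you should restructure it along these lines rather than attempting a genuinely three-dimensional interpolation certificate.
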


\begin{proof}
	See Appendix \ref{Theorem2Proof}.
\end{proof}

Upon the condition that the actual differential angles $\boldsymbol{\psi}_\mathrm{R}$ are separable to a certain extent, Theorem \ref{Theorem2} establishes the equivalence between \eqref{PDAN} and \eqref{PDANopt}. This condition is expected to hold in practice, as RIS usually consists of a large number of elements. Note that the condition stated in Theorem \ref{Theorem2} pertains to the actual differential angles, while the conditions in Theorem \ref{Theorem1} concern the estimated differential angles and AoD-AoA pairs. In conclusion, \eqref{PDANopt} provides a lower bound for $\| \mathbf{H} \|_{\mathcal{A}}$ and is equal to $\| \mathbf{H} \|_{\mathcal{A}}$ under two classes of mild conditions in Theorem \ref{Theorem1} and Theorem \ref{Theorem2}, respectively.

Next, we formulate the considered CE problem as a PDANM problem, i.e., to treat $\mathbf{H}$ as an optimization variable and find an estimated channel matrix $\hat{\mathbf{H}}$ with the minimum PDAN. With Theorem \ref{Theorem1} and Theorem \ref{Theorem2}, a solution to the PDANM problem can be obtained by solving the following problem:
\begin{equation} \label{PDANMopt}
	\begin{aligned}
		\min_{\mathbf{t},\mathbf{T},\mathbf{H}} & \frac{1}{2N_\mathrm{R}} \mathrm{tr} \big(\mathcal{T}_{N_\mathrm{R}}(\mathbf{t})\big) + \frac{1}{2N_\mathrm{B}N_\mathrm{U}} \mathrm{tr} \big(\mathcal{T}_{[N_\mathrm{B},N_\mathrm{U}]}(\mathbf{T})\big) \\ \mathrm{s.t.} &
		\begin{bmatrix}
			\mathcal{T}_{N_\mathrm{R}}(\mathbf{t}) & \mathbf{H}^{H}\\
			\mathbf{H} & \mathcal{T}_{[N_\mathrm{B},N_\mathrm{U}]}(\mathbf{T})
		\end{bmatrix}
		\succeq \mathbf{0}, \| \mathbf{Y} - \mathbf{H} \boldsymbol{\Omega} \|_{F}^{2} \leq \eta,
	\end{aligned}
\end{equation}
where $\eta$ is a given constant proportional to $\sigma^2$ \cite{yang2016exact}. Since \eqref{PDANMopt} is an SDP problem, it can be solved by the interior-point method via an SDP solver, such as SDPT3 \cite{toh1999sdpt3}. Our proposed framework is named PDANM since it essentially decouples the 3D angular structure of the effective channel into two low-dimensional angular structures, i.e., AoD-AoA pairs and differential angles.

Similarly, ANM-2D and ANM-3D find an $\hat{\mathbf{H}}$ with the minimum 2D atomic norm and 3D atomic norm for the CE by solving an SDP problem similar to \eqref{PDANMopt}, respectively. However, according to the analysis of the 2D atomic norm and 3D atomic norm above, ANM-2D does not fully exploit the 3D angular structure of $\mathbf{H}$, while ANM-3D needs to solve a large-scale SDP problem. Instead, the proposed PDANM framework not only fully exploits the 3D angular structure, but also reduces the computational complexity (to be illustrated in the next subsection), since the size of the PSD matrix in \eqref{PDANopt} is much smaller than that in \eqref{AN3Dopt}. In addition, for the considered CE problem, it is difficult to define a fully decoupled atomic norm that is calculable.

\subsection{Complexity Analysis} \label{SubSecCom}

\begin{table*}[!t]
	\caption{Computational Complexity of Various Methods \label{ComANM}}
	\renewcommand{\arraystretch}{1.2}
	\centering
	\begin{tabular}{c|c|c|c}
		\hline
		& Variable Size & Matrix Size & Computational Complexity \\
		\hline
		PDANM & $\mathcal{O}(N_\mathrm{B}N_\mathrm{U}N_\mathrm{R})$ & $N_\mathrm{B}N_\mathrm{U} + N_\mathrm{R}$ & $\mathcal{O}\big((N_\mathrm{B}N_\mathrm{U}N_\mathrm{R})^{2}(N_\mathrm{B}N_\mathrm{U} + N_\mathrm{R})^{2.5}\big)$ \\
		\hline
		ANM-2D \cite{chung2021location} & $\mathcal{O}(N_\mathrm{B}N_\mathrm{U}N_\mathrm{R} + N_\mathrm{R}^{2})$ & $N_\mathrm{B}N_\mathrm{U} + N_\mathrm{R}$ & $\mathcal{O}\big((N_\mathrm{B}N_\mathrm{U}N_\mathrm{R} + N_\mathrm{R}^{2})^{2}(N_\mathrm{B}N_\mathrm{U} + N_\mathrm{R})^{2.5}\big)$ \\
		\hline
		ANM-3D \cite{chung2021location} & $\mathcal{O}(N_\mathrm{B}N_\mathrm{U}N_\mathrm{R})$ & $N_\mathrm{B}N_\mathrm{U}N_\mathrm{R}$ & $\mathcal{O}\big((N_\mathrm{B}N_\mathrm{U}N_\mathrm{R})^{4.5}\big)$ \\
		\hline
		KRF \cite{de2021channel} & \diagbox{}{} & \diagbox{}{} & $\mathcal{O}(N_\mathrm{B}N_\mathrm{U}N_\mathrm{R})$ \\
		\hline
		TRICE \cite{ardah2021trice} & \diagbox{}{} & \diagbox{}{} & $\mathcal{O}\big(L_\mathrm{BR}L_\mathrm{RU}(N_\mathrm{B}N_\mathrm{U}L_\mathrm{BR}^{2}L_\mathrm{RU}^{2}+N_\mathrm{B}N_\mathrm{U}\bar{N}_\mathrm{B}\bar{N}_\mathrm{U}+N_\mathrm{R}\bar{N}_\mathrm{R})\big)$ \\
		\hline
	\end{tabular}
	\vspace{-3mm}
\end{table*}

In this subsection, we analyze the computational complexity of the proposed PDANM method and compare it with the state-of-the-art CE methods. According to \cite{vandenberghe1996semidefinite}, the computational complexity of solving an SDP problem by the interior-point method is $\mathcal{O}\big(N_\mathrm{1}^{2} N_\mathrm{2}^{2.5}\big)$, where $N_\mathrm{1}$ is the variable size and $N_\mathrm{2} \times N_\mathrm{2}$ is the size of the PSD matrix. Since the dimension of the PSD matrix is $N_\mathrm{B}N_\mathrm{U} + N_\mathrm{R}$ and the variable size is on the order of $N_\mathrm{B}N_\mathrm{U}N_\mathrm{R}$ in \eqref{PDANMopt}, the computational complexity of PDANM is $\mathcal{O}\big((N_\mathrm{B}N_\mathrm{U}N_\mathrm{R})^{2}(N_\mathrm{B}N_\mathrm{U} + N_\mathrm{R})^{2.5}\big)$.

By a similar analysis, the computational complexity of the ANM-2D approach and the ANM-3D in \cite{chung2021location} is derived and summarized in Table \ref{ComANM}. For intuitive comparison, we calculate the ratio of complexity of ANM-2D and ANM-3D to that of PDANM, which are $\mathcal{O}\Big(\big(1+\frac{N_\mathrm{R}}{N_\mathrm{B}N_\mathrm{U}}\big)^2\Big)$ and $\mathcal{O}\Big(\big(\frac{N_\mathrm{B}N_\mathrm{U}N_\mathrm{R}}{N_\mathrm{B}N_\mathrm{U} + N_\mathrm{R}}\big)^{2.5}\Big)$, respectively. It is observed that the complexity of the proposed PDANM is lower than that of ANM-2D and is significantly lower than that of ANM-3D. In practice, the running time of solving an SDP problem depends heavily on the matrix size rather than the variable size. As a result, ANM-3D suffers a long running time due to the large size of the PSD matrix, as will be verified in Section \ref{SecNum}.

In addition, we include the computational complexity of KRF \cite{de2021channel} and TRICE \cite{ardah2021trice} in Table \ref{ComANM} for comparison, where $\bar{N}_\mathrm{B}$, $\bar{N}_\mathrm{U}$, and $\bar{N}_\mathrm{R}$ denote the grid resolution of AoDs at BS, AoAs at UE, and differential angles at RIS, respectively. It is seen that the complexity of KRF and TRICE is lower than the ANM-based methods, and the complexity of TRICE grows as the increase of path number while other methods do not suffer from this issue.

\section{Reweighted PDANM Algorithm} \label{SecRPDANM}

To improve CE accuracy, in this section we propose an iterative algorithm named reweighted PDANM (RPDANM), which is inspired by the reweighted ANM algorithm proposed in our previous work \cite{yang2015enhancing,chu2023new}\footnote{Note that the proposed RPDANM algorithm is fundamentally different from \cite{yang2015enhancing,chu2023new}. Since the considered CE problem can be regarded as a 3D angular parameter estimation problem, the methods for one-dimensional angular parameter estimation developed in \cite{yang2015enhancing,chu2023new} are not applicable to our problem.}.

It is observed from \eqref{PDANMopt} that PDANM promotes the sparsity of the effective channel by solving a trace minimization problem. A formulation that promotes sparsity more efficiently is the rank minimization problem \cite{yang2016exact} obtained by substituting the trace in \eqref{PDANMopt} for the rank:
\begin{equation} \label{PDrankminopt}
	\begin{aligned}
		\min_{\mathbf{t},\mathbf{T},\mathbf{H}} & \frac{1}{2N_\mathrm{R}} \mathrm{rank} \big(\mathcal{T}_{N_\mathrm{R}}(\mathbf{t})\big) \hspace{-1mm} + \hspace{-1mm} \frac{1}{2N_\mathrm{B}N_\mathrm{U}} \mathrm{rank} \big(\mathcal{T}_{[N_\mathrm{B},N_\mathrm{U}]}(\mathbf{T})\big) \\ \mathrm{s.t.} &
		\begin{bmatrix}
			\mathcal{T}_{N_\mathrm{R}}(\mathbf{t}) & \mathbf{H}^{H}\\
			\mathbf{H} & \mathcal{T}_{[N_\mathrm{B},N_\mathrm{U}]}(\mathbf{T})
		\end{bmatrix}
		\succeq \mathbf{0}, \| \mathbf{Y} - \mathbf{H} \boldsymbol{\Omega} \|_{F}^{2} \leq \eta,
	\end{aligned}
\end{equation}
which is non-convex and NP-hard. In fact, \eqref{PDANMopt} is the convex relaxation of \eqref{PDrankminopt}. To further promote sparsity, we propose to approximately solve \eqref{PDrankminopt} via solving a set of SDP problems (see \cite{yang2015enhancing,chu2023new} for details). The proposed algorithm is named RPDANM since the problem solved in each iteration is a weighted PDANM (WPDANM) problem with varying weighting matrices and functions, as will be detailed below.

To begin with, we define the weighted PDAN. For positive definite weighting matrices $\mathbf{W}_{\mathrm{BU}}  \in \mathbb{C}^{N_\mathrm{B}N_\mathrm{U} \times N_\mathrm{B}N_\mathrm{U}}$ and $\mathbf{W}_{\mathrm{R}} \in \mathbb{C}^{N_\mathrm{R} \times N_\mathrm{R}}$, the weighting functions are defined as
\begin{equation}
	\begin{aligned}
		w_{\mathrm{BU}}(\theta,\phi) & = \Big\{ \big[ \mathbf{a}_{N_\mathrm{B}} ( \theta ) \otimes \mathbf{a}_{N_\mathrm{U}} ( \phi ) \big]^{H} \mathbf{W}_{\mathrm{BU}} \big[ \mathbf{a}_{N_\mathrm{B}} ( \theta ) \otimes \mathbf{a}_{N_\mathrm{U}} ( \phi ) \big] \Big\}^{-1} \quad \text{and} \\ w_{\mathrm{R}}(\psi) & = \big[ \mathbf{a}^{H}_{N_\mathrm{R}} ( \psi ) \mathbf{W}_{\mathrm{R}} \mathbf{a}_{N_\mathrm{R}} ( \psi ) \big]^{-1},
	\end{aligned}
\end{equation}
respectively. Furthermore, we define the weighted partially decoupled atomic set as
\begin{equation} \label{WPDAS}
	\mathcal{A}^{w_{\mathrm{BU}},w_{\mathrm{R}}} = \Big\{ w_{\mathrm{BU}}(\theta,\phi)^{\frac{1}{2}} w_{\mathrm{R}}(\psi)^{\frac{1}{2}} \big[ \mathbf{a}_{N_\mathrm{B}} ( \theta ) \otimes \mathbf{a}_{N_\mathrm{U}} ( \phi ) \big] \mathbf{a}^{H}_{N_\mathrm{R}} ( \psi ) : \theta, \phi, \psi \in [0,\pi] \Big\}
\end{equation}
and the weighted PDAN of $\mathbf{H}$ as
\begin{equation} \label{WPDAN}
	\begin{aligned}
		\| \mathbf{H} \|_{\mathcal{A}^{w_{\mathrm{BU}},w_{\mathrm{R}}}} = & \inf \Big\{ \sum_{l} | \rho_{l} |:\mathbf{H} =\sum_{l} \rho_{l} w_{\mathrm{BU}}(\theta_{l},\phi_{l})^{\frac{1}{2}} w_{\mathrm{R}}(\psi_{l})^{\frac{1}{2}} \\ & \cdot \big[ \mathbf{a}_{N_\mathrm{B}} ( \theta_{l} ) \otimes \mathbf{a}_{N_\mathrm{U}} ( \phi_{l} ) \big] \mathbf{a}^{H}_{N_\mathrm{R}} ( \psi_{l} ) , \theta_{l}, \phi_{l}, \psi_{l} \in [0,\pi] \Big\} \\ = & \inf \Big\{ \sum_{l} w_{\mathrm{BU}}(\theta_{l},\phi_{l})^{-\frac{1}{2}} w_{\mathrm{R}}(\psi_{l})^{-\frac{1}{2}} | \rho_{l} |: \\ & \mathbf{H} =\sum_{l} \rho_{l} \big[ \mathbf{a}_{N_\mathrm{B}} ( \theta_{l} ) \otimes \mathbf{a}_{N_\mathrm{U}} ( \phi_{l} ) \big] \mathbf{a}^{H}_{N_\mathrm{R}} ( \psi_{l} ) , \theta_{l}, \phi_{l}, \psi_{l} \in [0,\pi] \Big\},
	\end{aligned}
\end{equation}
which is a generalization of the PDAN defined in \eqref{PDAN} and degenerates to PDAN under specific weighting functions. Similar to \eqref{PDANopt}, we formulate an SDP problem to calculate $\| \mathbf{H} \|_{\mathcal{A}^{w_{\mathrm{BU}},w_{\mathrm{R}}}}$ in the following theorem:

\begin{theorem} \label{Theorem3}
	For a given effective channel $\mathbf{H}$, we have $\mathrm{SDP}_{\mathbf{W}_{\mathrm{R}},\mathbf{W}_{\mathrm{BU}}}(\mathbf{H}) \leq \| \mathbf{H} \|_{\mathcal{A}^{w_{\mathrm{R}},w_{\mathrm{BU}}}}$ with $\mathrm{SDP}_{\mathbf{W}_{\mathrm{R}},\mathbf{W}_{\mathrm{BU}}}(\mathbf{H})$ being the optimal value of the following SDP problem:
	\begin{equation} \label{WPDANopt}
		\begin{aligned}
			\min_{\mathbf{t},\mathbf{T}} & \frac{1}{2} \mathrm{tr} \big(\mathbf{W}_{\mathrm{R}}\mathcal{T}_{N_\mathrm{R}}(\mathbf{t})\big) + \frac{1}{2} \mathrm{tr} \big(\mathbf{W}_{\mathrm{BU}}\mathcal{T}_{[N_\mathrm{B},N_\mathrm{U}]}(\mathbf{T})\big) \\ \mathrm{s.t.} &
			\begin{bmatrix}
				\mathcal{T}_{N_\mathrm{R}}(\mathbf{t}) & \mathbf{H}^{H}\\
				\mathbf{H} & \mathcal{T}_{[N_\mathrm{B},N_\mathrm{U}]}(\mathbf{T})
			\end{bmatrix}
			\succeq \mathbf{0},
		\end{aligned}
	\end{equation}
	Furthermore, denote $\{ \mathbf{t}^{*},\mathbf{T}^{*} \}$ as the optimizer of \eqref{WPDANopt}, then we have $\mathrm{SDP}_{\mathbf{W}_{\mathrm{R}},\mathbf{W}_{\mathrm{BU}}}(\mathbf{H}) = \| \mathbf{H} \|_{\mathcal{A}^{w_{\mathrm{R}},w_{\mathrm{BU}}}}$ if the conditions \ref{con1}), \ref{con2}), and \ref{con3}) in Theorem \ref{Theorem1} hold.
\end{theorem}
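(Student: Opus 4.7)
The plan is to mirror the proof of Theorem \ref{Theorem1} step by step, incorporating the positive definite weighting matrices $\mathbf{W}_\mathrm{R}$ and $\mathbf{W}_\mathrm{BU}$ into each rank-one contribution. The inequality $\mathrm{SDP}_{\mathbf{W}_{\mathrm{R}},\mathbf{W}_{\mathrm{BU}}}(\mathbf{H}) \leq \| \mathbf{H} \|_{\mathcal{A}^{w_{\mathrm{R}},w_{\mathrm{BU}}}}$ will follow by explicitly constructing a feasible $(\mathbf{t},\mathbf{T})$ from any weighted atomic decomposition, and the equality under conditions \ref{con1})--\ref{con3}) will follow by extracting such a decomposition from the Vandermonde factorizations of the optimizers.

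For the upper bound, I would start from an arbitrary weighted atomic decomposition $\mathbf{H} = \sum_l \rho_l w_{\mathrm{BU}}(\theta_l,\phi_l)^{\frac{1}{2}} w_{\mathrm{R}}(\psi_l)^{\frac{1}{2}} \bigl[\mathbf{a}_{N_\mathrm{B}}(\theta_l)\otimes\mathbf{a}_{N_\mathrm{U}}(\phi_l)\bigr]\mathbf{a}_{N_\mathrm{R}}^{H}(\psi_l)$ and choose rank-one Toeplitz building blocks
\begin{equation*}
\mathcal{T}_{N_\mathrm{R}}(\mathbf{t}) = \sum_l \alpha_l\,\mathbf{a}_{N_\mathrm{R}}(\psi_l)\mathbf{a}_{N_\mathrm{R}}^{H}(\psi_l),\quad \mathcal{T}_{[N_\mathrm{B},N_\mathrm{U}]}(\mathbf{T}) = \sum_l \beta_l\,\bigl[\mathbf{a}_{N_\mathrm{B}}(\theta_l)\otimes\mathbf{a}_{N_\mathrm{U}}(\phi_l)\bigr]\bigl[\mathbf{a}_{N_\mathrm{B}}(\theta_l)\otimes\mathbf{a}_{N_\mathrm{U}}(\phi_l)\bigr]^{H}
\end{equation*}
with $\alpha_l\beta_l = |\rho_l|^2 w_{\mathrm{BU}}(\theta_l,\phi_l)w_{\mathrm{R}}(\psi_l)$, which ensures the block matrix is PSD by the Schur complement and a rank-one expansion of $\mathbf{H}$. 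Since $\mathbf{a}_{N_\mathrm{R}}^{H}(\psi_l)\mathbf{W}_\mathrm{R}\mathbf{a}_{N_\mathrm{R}}(\psi_l)=w_\mathrm{R}(\psi_l)^{-1}$ and similarly for the $\mathrm{BU}$ term, the weighted trace objective reduces to $\tfrac{1}{2}\sum_l\bigl(\alpha_l/w_\mathrm{R}(\psi_l)+\beta_l/w_\mathrm{BU}(\theta_l,\phi_l)\bigr)$, and AM--GM over the constraint $\alpha_l\beta_l=|\rho_l|^2 w_{\mathrm{BU}}(\theta_l,\phi_l)w_{\mathrm{R}}(\psi_l)$ produces $\sum_l|\rho_l|$. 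Taking the infimum over decompositions yields the claimed inequality.

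For the equality direction, I would invoke conditions \ref{con1}) and \ref{con2}) to write the $1$-level and $2$-level Vandermonde decompositions \eqref{VanDec1} and \eqref{VanDec2} of the optimizers. The PSD block constraint then forces $\mathrm{col}(\mathbf{H})\subseteq\mathrm{col}\bigl(\mathcal{T}_{[N_\mathrm{B},N_\mathrm{U}]}(\mathbf{T}^*)\bigr)$ and $\mathrm{col}(\mathbf{H}^H)\subseteq\mathrm{col}\bigl(\mathcal{T}_{N_\mathrm{R}}(\mathbf{t}^*)\bigr)$, so $\mathbf{H} = \bigl[\mathbf{A}_{N_\mathrm{B}}(\boldsymbol{\theta}_\mathrm{B}^*)\diamond\mathbf{A}_{N_\mathrm{U}}(\boldsymbol{\phi}_\mathrm{U}^*)\bigr]\mathbf{C}\,\mathbf{A}_{N_\mathrm{R}}^{H}(\boldsymbol{\psi}_\mathrm{R}^*)$ with $\mathbf{C}$ as in condition \ref{con3}). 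Under the Schur complement inequality, a generalized Cauchy--Schwarz bound gives $|\mathbf{C}_{ij}|^2 \leq [\mathbf{p}_\mathrm{BU}^*]_i [\mathbf{p}_\mathrm{R}^*]_j$ for all $i,j$. Using the Vandermonde factorizations, the weighted trace objective evaluates to $\tfrac{1}{2}\sum_i [\mathbf{p}_\mathrm{BU}^*]_i / w_\mathrm{BU}([\boldsymbol{\theta}_\mathrm{B}^*]_i,[\boldsymbol{\phi}_\mathrm{U}^*]_i) + \tfrac{1}{2}\sum_j [\mathbf{p}_\mathrm{R}^*]_j / w_\mathrm{R}([\boldsymbol{\psi}_\mathrm{R}^*]_j)$. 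Condition \ref{con3}) lets us re-index the double sum over $(i,j)$ by a single index $l$ so that the nonzero entries of $\mathbf{C}$ produce exactly one atomic contribution per $l$ with coefficient $\rho_l$ satisfying $|\rho_l|^2 w_\mathrm{BU} w_\mathrm{R} = [\mathbf{p}_\mathrm{BU}^*]_i [\mathbf{p}_\mathrm{R}^*]_j$, and a second application of AM--GM matches the objective to $\sum_l|\rho_l|$, which dominates $\|\mathbf{H}\|_{\mathcal{A}^{w_\mathrm{R},w_\mathrm{BU}}}$.

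The main obstacle is the equality step: one must verify that condition \ref{con3}), which controls the sparsity pattern of $\mathbf{C}$, still induces a clean weighted atomic decomposition after the weights are folded into the Cauchy--Schwarz/AM--GM tightness conditions. The algebra largely parallels the unweighted case, but care is needed to confirm that the optimizers of the weighted SDP still saturate the generalized Cauchy--Schwarz bound in the same way, so that no slack is introduced by the weights; once this is checked, the rest of the argument is a direct transcription of the proof of Theorem \ref{Theorem1}.
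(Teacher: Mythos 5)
Your overall strategy is exactly the paper's: the proof of Theorem~\ref{Theorem3} in Appendix~\ref{Theorem3Proof} is a direct transcription of the proof of Theorem~\ref{Theorem1} with the weights folded into each rank-one block, and your upper-bound argument is correct and equivalent --- the AM--GM-optimal split of $\alpha_l,\beta_l$ subject to $\alpha_l\beta_l=|\rho_l|^2 w_{\mathrm{BU}}(\theta_l,\phi_l)w_{\mathrm{R}}(\psi_l)$ recovers precisely the paper's explicit choice of coefficients $\frac{w_{\mathrm{R}}(\psi_{l})^{1/2}}{w_{\mathrm{BU}}(\theta_l,\phi_l)^{1/2}}|\rho_l|w_{\mathrm{BU}}^{1/2}w_{\mathrm{R}}^{1/2}$ and $\frac{w_{\mathrm{BU}}(\theta_l,\phi_l)^{1/2}}{w_{\mathrm{R}}(\psi_l)^{1/2}}|\rho_l|w_{\mathrm{BU}}^{1/2}w_{\mathrm{R}}^{1/2}$ in the two Toeplitz blocks.

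There is, however, one concrete gap in your equality direction: the entrywise bound $|[\mathbf{C}]_{ij}|^2 \leq [\mathbf{p}_\mathrm{BU}^{*}]_{i}\,[\mathbf{p}_\mathrm{R}^{*}]_{j}$ followed by a per-entry AM--GM is not strong enough. Condition~\ref{con3}) only guarantees one nonzero per row \emph{or} per column of $\mathbf{C}$; it does not make the map from rows to their nonzero columns injective (cf.\ Fig.~\ref{fig3a}, where $L_\mathrm{BU}^{*}>L_\mathrm{R}^{*}$). If two rows $i_1,i_2$ both have their nonzero in column $j$, then bounding each $|[\mathbf{C}]_{i,j}|$ separately by $\sqrt{[\mathbf{p}_\mathrm{BU}^{*}]_{i}[\mathbf{p}_\mathrm{R}^{*}]_{j}}$ and applying AM--GM counts the term $\tfrac{1}{2}w_{\mathrm{R}}([\boldsymbol{\psi}_\mathrm{R}^{*}]_{j})^{-1}[\mathbf{p}_\mathrm{R}^{*}]_{j}$ once for each such row, so the resulting sum can exceed the SDP objective and the chain of inequalities breaks. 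The paper avoids this by working with the full Schur-complement inequality $\mathrm{diag}(\mathbf{p}_\mathrm{R}^{*}) \succeq \mathbf{C}^{H}\,{\mathrm{diag}(\mathbf{p}_\mathrm{BU}^{*})}^{\dagger}\,\mathbf{C}$, whose $(j,j)$ diagonal entry yields the \emph{aggregated} bound $[\mathbf{p}_\mathrm{R}^{*}]_{j} \geq \sum_{i}|[\mathbf{C}]_{ij}|^{2}/[\mathbf{p}_\mathrm{BU}^{*}]_{i}$ (diagonal precisely because of condition~\ref{con3})); substituting this into the $\mathbf{p}_\mathrm{R}^{*}$ half of the objective and only then applying AM--GM row by row (with the weights $w_{\mathrm{R}}^{-1}$ and $w_{\mathrm{BU}}^{-1}$ attached, as in \eqref{SDPWgeq1}--\eqref{SDPWgeq2}) gives $\sum_{i} w_{\mathrm{R}}^{-1/2}w_{\mathrm{BU}}^{-1/2}|[\mathbf{C}]_{i,k_i}| \geq \|\mathbf{H}\|_{\mathcal{A}^{w_{\mathrm{R}},w_{\mathrm{BU}}}}$ without double counting. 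Replace your entrywise Cauchy--Schwarz step with this aggregated inequality and the rest of your argument goes through.
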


\begin{proof}
	See Appendix \ref{Theorem3Proof}.
\end{proof}

By taking $\mathbf{W}_{\mathrm{BU}} = \frac{1}{N_\mathrm{B}N_\mathrm{U}} \mathbf{I}_{N_\mathrm{B}N_\mathrm{U}}$ and $\mathbf{W}_{\mathrm{R}} = \frac{1}{N_\mathrm{R}} \mathbf{I}_{N_\mathrm{R}}$, Theorem \ref{Theorem1} is a special case of Theorem \ref{Theorem3}. Then, similar to PDANM, we treat $\mathbf{H}$ as an optimization variable and find an $\hat{\mathbf{H}}$ with the minimum weighted PDAN for the CE. With Theorem \ref{Theorem3}, the WPDANM problem can be effectively solved by solving an SDP problem:
\begin{equation} \label{WPDANMopt}
	\begin{aligned}
		\min_{\mathbf{t},\mathbf{T},\mathbf{H}} & \frac{1}{2} \mathrm{tr} \big(\mathbf{W}_{\mathrm{R}}\mathcal{T}_{N_\mathrm{R}}(\mathbf{t})\big) + \frac{1}{2} \mathrm{tr} \big(\mathbf{W}_{\mathrm{BU}}\mathcal{T}_{[N_\mathrm{B},N_\mathrm{U}]}(\mathbf{T})\big) \\ \mathrm{s.t.} &
		\begin{bmatrix}
			\mathcal{T}_{N_\mathrm{R}}(\mathbf{t}) & \mathbf{H}^{H}\\
			\mathbf{H} & \mathcal{T}_{[N_\mathrm{B},N_\mathrm{U}]}(\mathbf{T})
		\end{bmatrix}
		\succeq \mathbf{0}, \| \mathbf{Y} - \mathbf{H} \boldsymbol{\Omega} \|_{F}^{2} \leq \eta,
	\end{aligned}
\end{equation}
which is a weighted version of \eqref{PDANMopt}. Due to the extra degrees of freedom introduced by weighting matrices, adopting appropriate weighting matrices will lead to a more accurate CE by \eqref{WPDANMopt} than \eqref{PDANMopt}.

Now, we propose an iterative RPDANM algorithm for CE of the considered RIS-aided MIMO system, which consists of solving a series of WPDANM problems with varying weighting matrices and functions. At the first iteration, we take initial weighting matrices $\mathbf{W}_{\mathrm{BU}} = \frac{1}{N_\mathrm{B}N_\mathrm{U}} \mathbf{I}_{N_\mathrm{B}N_\mathrm{U}}$ and $\mathbf{W}_{\mathrm{R}} = \frac{1}{N_\mathrm{R}} \mathbf{I}_{N_\mathrm{R}}$ and solve \eqref{WPDANMopt} to obtain the optimizer $\{ \mathbf{t}^{*},\mathbf{T}^{*},\hat{\mathbf{H}} \}$ with $\hat{\mathbf{H}}$ being an estimate of the effective channel. Then, we update the weighting matrices as
\begin{equation} \label{W}
	\begin{aligned}
		\mathbf{W}_{\mathrm{BU}} & = \big( \mathcal{T}_{[N_\mathrm{B},N_\mathrm{U}]}(\mathbf{T}^{*}) + \epsilon \mathbf{I}_{N_\mathrm{B}N_\mathrm{U}} \big)^{-1} \quad \text{and} \\ \mathbf{W}_{\mathrm{R}} & = \big( \mathcal{T}_{N_\mathrm{R}}(\mathbf{t}^{*}) + \epsilon \mathbf{I}_{N_\mathrm{R}} \big)^{-1}
	\end{aligned}
\end{equation}
and solve \eqref{WPDANMopt} again to update $\{ \mathbf{t}^{*},\mathbf{T}^{*},\hat{\mathbf{H}} \}$, where $\epsilon > 0$ is a regularization parameter that guarantees the invertibility of matrices. To gradually approach a solution to \eqref{PDrankminopt}, $\epsilon$ is halved after each iteration \cite{yang2015enhancing,chu2023new}. This step is repeated until the difference between two adjacent $\hat{\mathbf{H}}$ is smaller than a given threshold or the maximum number of iterations is reached. It is worth mentioning that each iteration of our proposed RPDANM algorithm aims to solve a WPDANM problem, while the first iteration of RPDANM is equivalent to PDANM. In fact, the iterative process can be regarded as continuously selecting better weighting matrices according to the last CE to gradually refine the CE. The proposed RPDANM algorithm does not require additional training overhead, yet the iterative process incurs a higher computational complexity. Specifically, with $T$ denoting the number of iterations that is generally small, RPDANM is $T$ times slower than PDANM since each iteration has the same computational complexity as that of PDANM.

\section{RPDANM with Adaptive Phase Control Approach} \label{SecRPDANMAPC}

Both PDANM and RPDANM follow the commonly adopted channel sounding procedure in the literature on RIS, as illustrated in Fig. \ref{fig2}. They both require a large number of training slots for realizing high CE accuracy, especially when the RIS consists of numerous elements. To facilitate low-overhead CE in the considered RIS-aided MIMO system, we consider the design of the phase control matrix $\boldsymbol{\Omega}$ for RIS during the training stage and propose a RPDANM with adaptive phase control (RPDANM-APC) approach for CE in this section.

\begin{algorithm}[!t]
	\caption{Proposed RPDANM-APC Approach}\label{Alg:RPDANM-APC}
	\begin{algorithmic}[1]
		\REQUIRE Noise variance $\sigma^2$, convergence threshold $\epsilon_{\mathbf{H}}$, initial number of training slots $B_{0}$, maximum number of training slots $B_{\max}$, phase control matrix $\boldsymbol{\Omega} \in \mathbb{C}^{N_\mathrm{R} \times B_{0}}$, regularization parameter $\epsilon = 1$, weighting matrices $\mathbf{W}_{\mathrm{BU}} = \frac{1}{N_\mathrm{B}N_\mathrm{U}} \mathbf{I}_{N_\mathrm{B}N_\mathrm{U}}$ and $\mathbf{W}_{\mathrm{R}} = \frac{1}{N_\mathrm{R}} \mathbf{I}_{N_\mathrm{R}}$. 
		\ENSURE Estimate of channel $\hat{\mathbf{H}}$, number of training slots $B$. 
		
		\STATE Set $B = B_{0}$ and $K = 0$.
		\STATE Send pilots in the first $B_{0}$ slots with the RIS phases set as each column of $\boldsymbol{\Omega} \in \mathbb{C}^{N_\mathrm{R} \times B_{0}}$, respectively, and collect the received data $\mathbf{Y} \in \mathbb{C}^{N_\mathrm{B}N_\mathrm{U} \times B_{0}}$.
		\STATE \label{stepSolveSDP} Solve the SDP problem in \eqref{WPDANMopt} to get its optimizer $\{ \mathbf{t}^{*},\mathbf{T}^{*},\hat{\mathbf{H}} \}$ and $L^{*}_{K} = \mathrm{rank}(\mathcal{T}_{N_\mathrm{R}}(\mathbf{t}^{*}))$.
		\STATE \label{stepVanDec} Obtain $\boldsymbol{\psi}_\mathrm{R}^{*}$ from the Vandermonde decomposition of $\mathcal{T}_{N_\mathrm{R}}(\mathbf{t}^{*})$ as in \eqref{VanDec1} via root-MUSIC \cite{rao1989performance}.
		\WHILE{$B + L^{*}_{K} \leq B_{\max}$}
		\IF{$\epsilon > \frac{\sigma^2}{10}$}
		\STATE Let $\epsilon = \frac{\epsilon}{2}$.
		\ENDIF
		\STATE Set $\hat{\mathbf{H}}_{\mathrm{last}} = \hat{\mathbf{H}}$, $B = B + L^{*}_{K}$.
		\STATE Send pilots in the following $L^{*}_{K}$ slots with the RIS phases set as each column of $\boldsymbol{\Omega}_{\mathrm{add}} = \mathbf{A}_{N_\mathrm{R}} ( \boldsymbol{\psi}_\mathrm{R}^{*} ) \in \mathbb{C}^{N_\mathrm{R} \times L^{*}_{K}}$, respectively, and collect the received data $\mathbf{Y}_{\mathrm{add}} \in \mathbb{C}^{N_\mathrm{B}N_\mathrm{U} \times L^{*}_{K}}$.
		\STATE Update $\mathbf{W}_{\mathrm{BU}}$ and $\mathbf{W}_{\mathrm{R}}$ as \eqref{W}, $\boldsymbol{\Omega} = \big[\boldsymbol{\Omega},\boldsymbol{\Omega}_{\mathrm{add}}\big]$, and $\mathbf{Y} = \big[\mathbf{Y},\mathbf{Y}_{\mathrm{add}}\big]$.
		\STATE Set $K = K + 1$ and repeat Step \ref{stepSolveSDP} and Step \ref{stepVanDec}.
		\IF{$\frac{\hat{\mathbf{H}}-\hat{\mathbf{H}}_{\mathrm{last}}\|_{F}^{2}}{\|\hat{\mathbf{H}}_{\mathrm{last}}\|_{F}^{2}} < \sigma^{2}\epsilon_{\mathbf{H}}$}
		\STATE \textbf{break}
		\ENDIF
		\ENDWHILE
	\end{algorithmic}
\end{algorithm}

In the following, we present the procedure of the proposed RPDANM-APC approach, including an initialization stage and an iteration stage, as summarized in Algorithm \ref{Alg:RPDANM-APC}. The initialization stage includes Steps 1 to 4, where the initial training sequences are transmitted in Step 2 and the effective channel and its parameters are estimated in Steps 3 and 4. The iteration stage includes Steps 5 to 16, where the steps of setting RIS phases, sending pilot and collecting received data, and channel parameter estimation are performed in each iteration. Note that these steps are repeated until convergence or the upper bound of training overhead is reached. Specifically, $\epsilon$ is halved until smaller than a given threshold, e.g., $2^{-10}$. In each slot of an iteration, the RIS phases are set as the steering vector corresponding to each estimated differential angle for coherently combining the signal from a BS-to-RIS path to a RIS-to-UE path. In Step 12, additional training sequences are employed for updating the CE with the specially set RIS phases\footnote{Note that the RIS requires feedback from the BS to optimize its phases, which affects the practicality of RPDANM-APC to some degree. However, since only the $L^{*}_{K}$ estimated angles instead of a whole phase control matrix need to be fed back at each iteration, the feedback overhead is generally less than expected and acceptable.}. Note that the number of training slots used in each iteration is equal to the number of estimated differential angles in the last iteration such that all estimated differential angles are explored. The convergence of the algorithm is checked in Steps 13 to 15, where $\epsilon_{\mathbf{H}}$ is a given constant accusing for the tolerance of converge accuracy.

By iteratively updating the RIS phases, sending pilots, and performing CE, the proposed RPDANM-APC approach takes full advantage of the ability of RIS to reshape the wireless propagation environment. Since the phases of RIS elements are adaptively adjusted to match the estimated effective paths, the passive beamforming gain of RIS can be fully exploited to improve CE accuracy. Moreover, as only part of the angular space is searched during iterations, RPDANM-APC achieves reduced training overhead compared to the PDANM and RPDANM methods. Note that the trade-off between training overhead and CE accuracy can be optimized by changing $B_{\max}$ and $\epsilon_{\mathbf{H}}$. The total number of training slots required by RPDANM-APC is $B = B_{0} + \sum_{k=0}^{K-1} L^{*}_{k}$ that is bounded by $ B_{\max}$, where $K$ is the total number of iterations and $L^{*}_{k}$ is the number of estimated differential angles at the $k$-th iteration. Since the computational complexity of the root-MUSIC algorithm in Step \ref{stepVanDec} is $\mathcal{O}\big(N_\mathrm{R}^{3}\big)$, the computational complexity of RPDANM-APC mainly comes from solving SDP problems in Step \ref{stepSolveSDP}, i.e., $\mathcal{O}\big(K(N_\mathrm{B}N_\mathrm{U}N_\mathrm{R})^{2}(N_\mathrm{B}N_\mathrm{U} + N_\mathrm{R})^{2.5}\big)$.

\section{Numerical Simulations} \label{SecNum}

\subsection{Simulation Settings}

In this section, we evaluate the performance of our proposed methods through numerical simulations. Unless otherwise stated, the default parameters are set as follows. We select $N_\mathrm{B} = 4$, $N_\mathrm{U} = 4$, $N_\mathrm{R} = 16$, and $L_{\mathrm{BR}} = L_{\mathrm{RU}} = 2$. All AoAs and AoDs are randomly and uniformly located in $[0,\pi]$ and all path gains are generated i.i.d. from $\mathcal{CN}(0,1)$. The signal-to-noise ratio (SNR) is defined as
\begin{equation}
	\text{SNR} = 10\log_{10} \frac{\|\mathbf{Y}-\mathbf{N}\|^{2}_{F}}{\|\mathbf{N}\|^{2}_{F}},
\end{equation}
which is set 30 dB as default. The convergence threshold is set $\epsilon_{\mathbf{H}} = 10^{-3}$ to strike a proper trade-off between the CE accuracy and the number of iterations. As in \cite{yang2016exact}, we set $\eta = \big(N_\mathrm{B}N_\mathrm{U}B+2\sqrt{N_\mathrm{B}N_\mathrm{U}B}\big) \sigma^{2}$ when solving a given SDP problem. For RPDANM, the maximum iteration number is 10. For RPDANM-APC, the initial and maximum number of training slots is $B_{0} = \frac{N_\mathrm{R}}{2}$ and $B_{\max} = N_\mathrm{R}$, respectively, while the number of training slots for all other methods is fixed as $N_\mathrm{R}$. Note that $B_{\max}$ is set to a large value here to observe the convergence performance of RPDANM-APC. In practice, $B_{\max}$ can be set to a small value to balance the required training overhead and CE accuracy. In addition, each entry of the (initial) phase control matrix $\boldsymbol{\Omega}$ is generated from the i.i.d. uniform distribution on a complex unit circle for all the considered methods, obeying the isotropy and incoherence properties in \cite{heckel2017generalized}.

\subsection{Channel Estimation Accuracy}

In this subsection, we compare the CE accuracy of the proposed methods, i.e., PDANM, RPDANM, and RPDANM-APC, to the state-of-the-art CE methods for RIS-aided MIMO systems. In particular, we adopt KRF \cite{de2021channel}, TRICE \cite{ardah2021trice}, ANM-2D \cite{chung2021location}, and ANM-3D \cite{chung2021location} for comparison, corresponding to the parallel factorization-based methods, CS methods, and ANM-based methods, respectively. The accuracy of CE is evaluated by the normalized mean square error (NMSE) that defined as
\begin{equation}
	\text{NMSE} = \mathbb{E} \left( \frac{\|\hat{\mathbf{H}}-\mathbf{H}\|_{F}^{2}}{\|\mathbf{H}\|_{F}^{2}} \right).
\end{equation}

\begin{figure}[!t]
	\centering
	\includegraphics[width=5in]{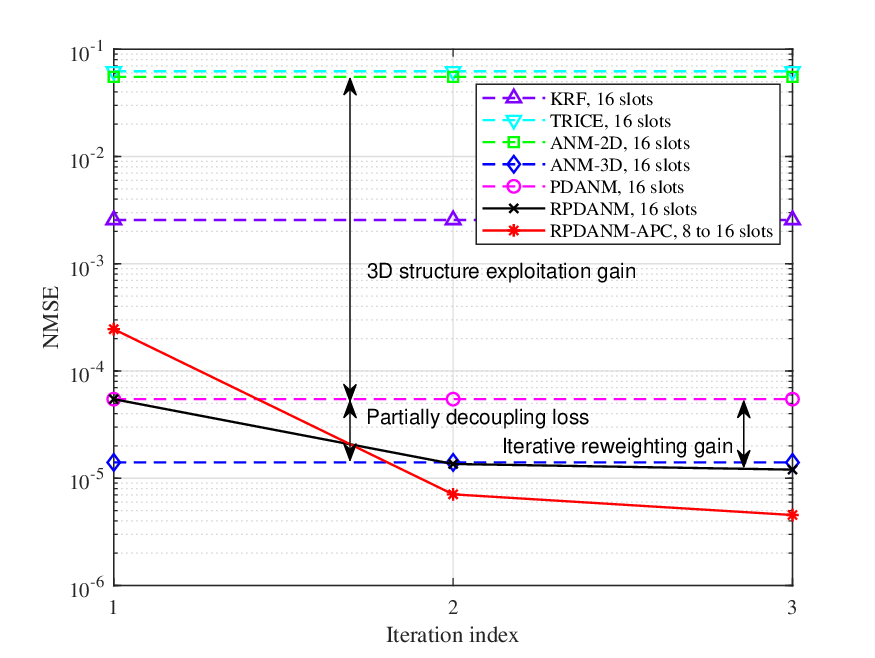}
	\vspace{-3mm}
	\caption{NMSE of different methods versus iteration index. Dashed lines denote one-step methods and solid curves denote iterative methods.}
	\label{fig4}
	\vspace{-4mm}
\end{figure}

An illustrative example is provided in Fig. \ref{fig4}, which depicts the NMSE of different methods versus the iteration index in a simulation. On the one hand, we observe that PDANM significantly outperforms ANM-2D due to the exploitation of the 3D angular structure of the effective channel. On the other hand, a slight accuracy loss of PDANM compared to ANM-3D is observed due to partially decoupling. Furthermore, the first iteration of RPDANM is equivalent to PDANM and thus they achieve the same accuracy. Besides, the CE accuracy of RPDANM is enhanced in subsequent iterations by continuously selecting better weighting matrices. In addition, the performance of RPDANM-APC is worse than that of RPDANM in the first iteration since fewer initial training slots are employed. However, as the number of training slots increases, the RPDANM-APC approach even outperforms RPDANM in terms of CE accuracy due to exploiting the RIS to reshape wireless propagation. In addition, the proposed methods have significantly higher accuracy over KRF and TRICE due to fully exploiting of the channel structure.

\begin{figure}[!t]
	\centering
	\includegraphics[width=5in]{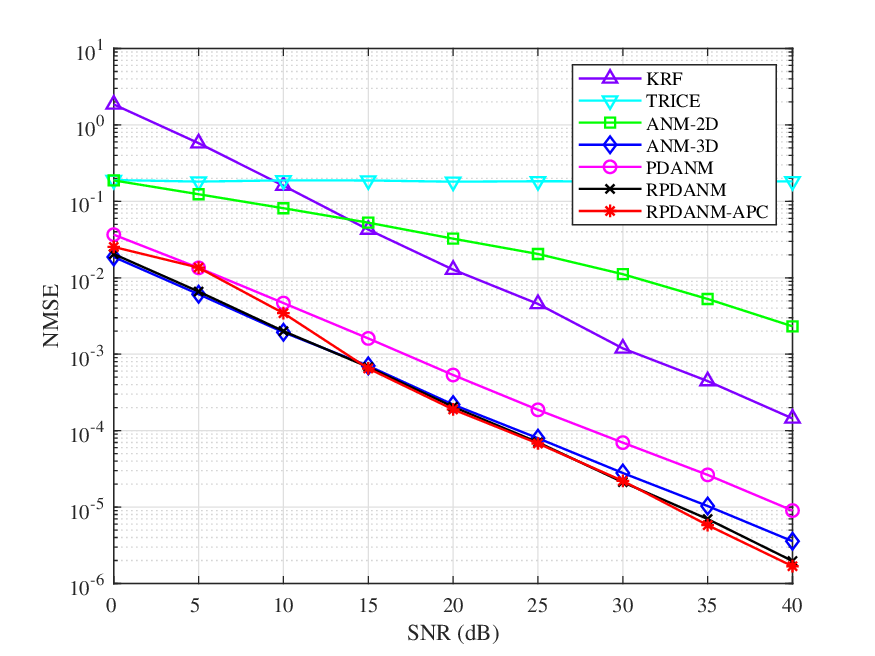}
	\vspace{-3mm}
	\caption{NMSE of different methods versus SNR.}
	\label{fig5}
	\vspace{-4mm}
\end{figure}

To compare the performance of the proposed methods comprehensively, Fig. \ref{fig5} illustrates the NMSE of different methods versus the SNR. We set 9 different SNR levels in the range of [0 dB, 40 dB] and conduct 50 random Monte Carlo experiments under each setting. It is observed that the CE accuracy of KRF steadily improves with the increase of SNR while a performance gap between it and our proposed methods exists. The performance of TRICE keeps almost unchanged with the variations of the SNR due to the grid mismatch problem. Among ANM-based methods, the PDANM is superior to ANM-2D due to the exploration of the 3D angular structure of the effective channel, but inferior to ANM-3D, owing to the accuracy loss caused by partially decoupling. By approximately solving the rank minimization problem via solving a series of WPDANM problems, RPDANM achieves higher CE accuracy than PDANM. Furthermore, RPDANM-APC outperforms other methods when $\text{SNR} > 15 \ \mathrm{dB}$. These simulation results demonstrate the excellent performance of our proposed methods, especially RPDANM and RPDANM-APC in terms of CE accuracy, consistent with our expectations and the example in Fig. \ref{fig4}.

\begin{figure}[!t]
	\centering
	\includegraphics[width=5in]{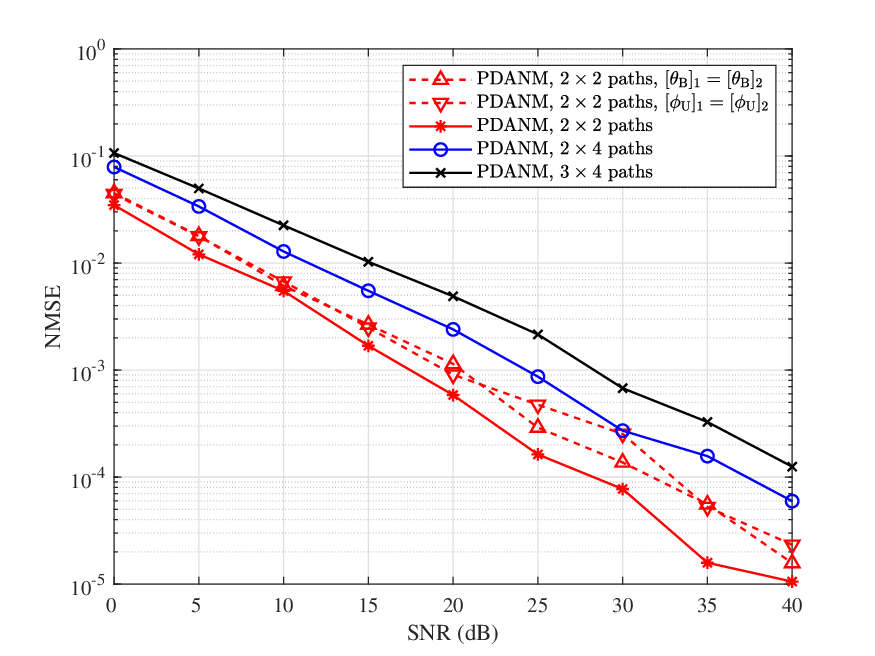}
	\vspace{-3mm}
	\caption{NMSE of PDANM versus SNR with different numbers of paths (solid curves) and in the presence of identical angles (dashed curves).}
	\label{fig6}
	\vspace{-4mm}
\end{figure}

To further illustrate the performance of the proposed methods, Fig. \ref{fig6} depicts the performance of PDANM with varying numbers of paths and in the presence of identical angles, where each point corresponds to the average performance of 50 Monte Carlo simulations. Since RPDANM and PDANM-APC exhibit a similar trend, the corresponding curves are not depicted for brevity. Comparing the three curves for 4, 8, and 12 effective paths, it is observed that the performance of PDANM decreases as the number of paths increases. Comparing the two dashed curves with 4 effective paths, the PDANM suffers a slight accuracy loss when there are paths with identical AoD at the BS or identical AoA at the UE. On the one hand, this demonstrates that the proposed PDANM still has superior parameter identification capability and achieve a high CE accuracy even in the presence of identical angular parameters, as stated in Theorem \ref{Theorem1}. On the other hand, since the angles in our simulations are randomly generated, the probability of occurrence of paths with close $\{ \theta,\phi \}$ is higher when AoDs at the BS or AoAs at the UE are fixed to be identical, leading to a decrease in the average NMSE performance.

\subsection{Running Time}

\begin{figure}[!t]
	\centering
	\includegraphics[width=5in]{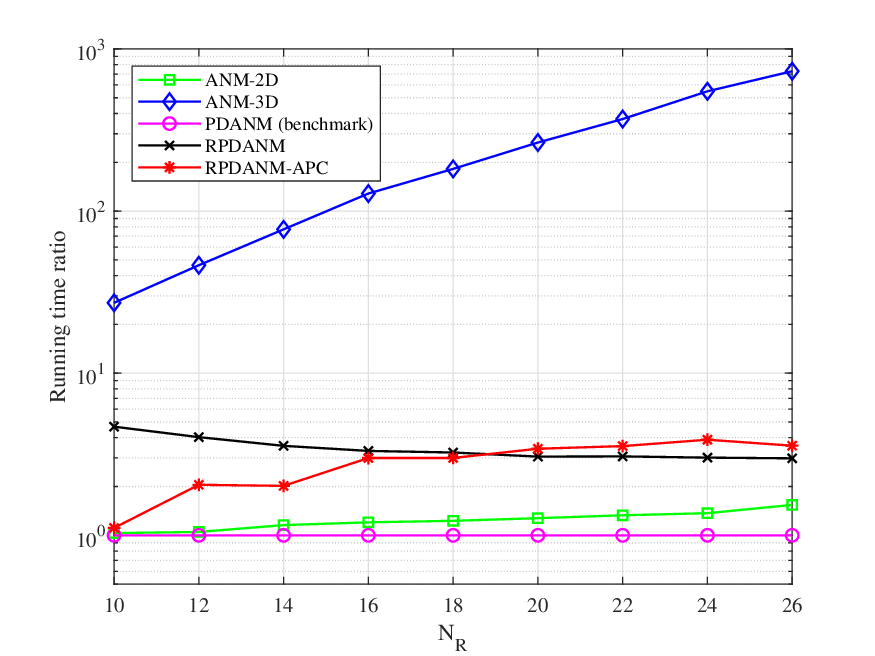}
	\vspace{-3mm}
	\caption{Running time ratios of different methods to the benchmark versus the size of RIS.}
	\label{fig7}
	\vspace{-4mm}
\end{figure}

In this subsection, we compare the running time of different methods versus the size of RIS. For a convenient comparison, Fig. \ref{fig7} depicts the ratio of running time of each method to our proposed PDANM approach versus the RIS size, where the running time ratio of the PDANM approach is normalized to one unit. The size of RIS is increased from 10 to 26 with 50 Monte Carlo simulations performed under each setting. The running time of ANM-3D increases rapidly as $N_\mathrm{R}$ increases, which is tens or hundreds of times longer than PDANM due to the requirement of solving a large-scale SDP problem. Besides, the running time of the proposed PDANM method is slightly shorter than ANM-2D due to fewer optimization variables involved. These simulation results are consistent with our computational complexity analysis in Section \ref{SubSecCom}. Furthermore, the running time of both RPDANM and RPDANM-APC are less than 5 times that of PDANM since only a few iterations are required for convergence. Therefore, our proposed three methods strike a good balance between CE accuracy and computational complexity compared with existing methods.

\subsection{Training Overhead}

\begin{figure}[!t]
	\centering
	\includegraphics[width=5in]{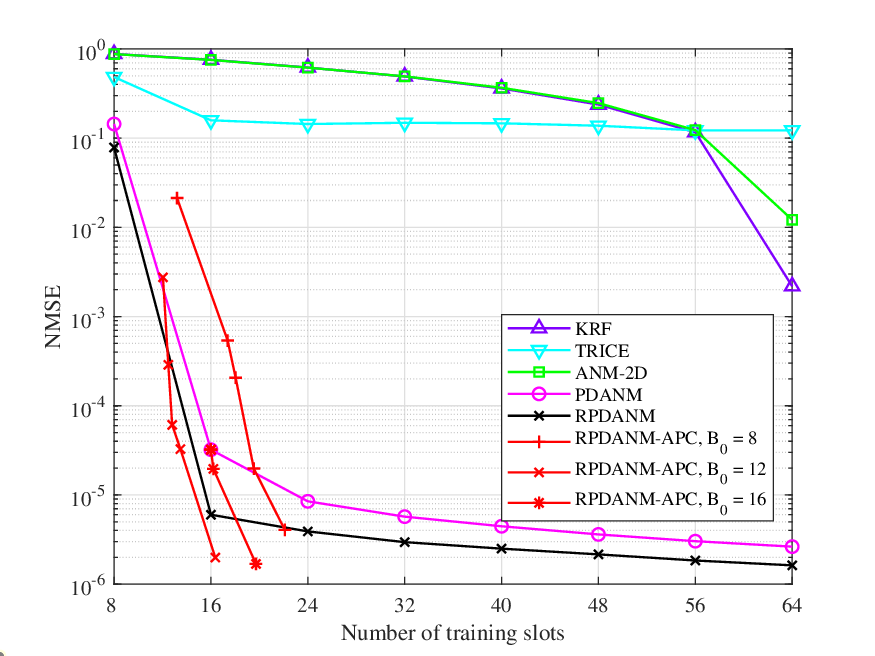}
	\vspace{-3mm}
	\caption{NMSE of different methods versus the number of training slots.}
	\label{fig8}
	\vspace{-4mm}
\end{figure}

In this subsection, we compare the required training slots of different methods to illustrate the ability to save training overhead by the proposed RPDANM-APC approach. Fig. \ref{fig8} depicts the NMSE of different methods versus the number of required training slots, where a 64-element RIS is considered for a clear comparison. Under this setting, ANM-3D cannot be applied due to the exceedingly large size of the PSD matrix in the SDP problem to be solved. For methods except RPDANM-APC, we gradually increase the number of training slots from 8 to 64 and conduct 50 Monte Carlo simulations under each setting to observe their performance. For RPDANM-APC, we select three different initial numbers of training slots to observe its performance, i.e., $B_{0}=8,12,16$ (corresponding to 2, 3, and 4 times the number of effective paths, respectively). Since the number of training slots employed by RPDANM-APC cannot be specified, we assume the actual channel to be known and stop training when the NMSE is smaller than a given threshold ranging from $10^{-1}$ to $10^{-5}$ in this simulation. As above, 50 Monte Carlo simulations are performed on RPDANM-APC for each threshold with average NMSE and average training overhead depicted in Fig. \ref{fig8}. Therefore, this simulation illustrates the number of training slots required for RPDANM-APC to achieve a specified CE accuracy. It is observed in Fig. \ref{fig8} that the performance of KRF drops significantly when the number of training slots $B < N_\mathrm{R}$ since KRF requires $B \geq N_\mathrm{R}$ to guarantee the row orthogonality of $\boldsymbol{\Omega}$, while other methods do not have this limitation. As a CS method, the CE accuracy of TRICE is not greatly affected by $B$. In contrast, the proposed methods achieve high CE accuracy with only a small number of training slots, especially for RPDANM-APC. In particular, RPDANM-APC achieves a higher CE accuracy than PDANM while consuming $30\%$ less training slots compared to the latter.

\subsection{Selection of Initial Number of Training Slots}

\begin{figure}[!t]
	\centering
	\includegraphics[width=5in]{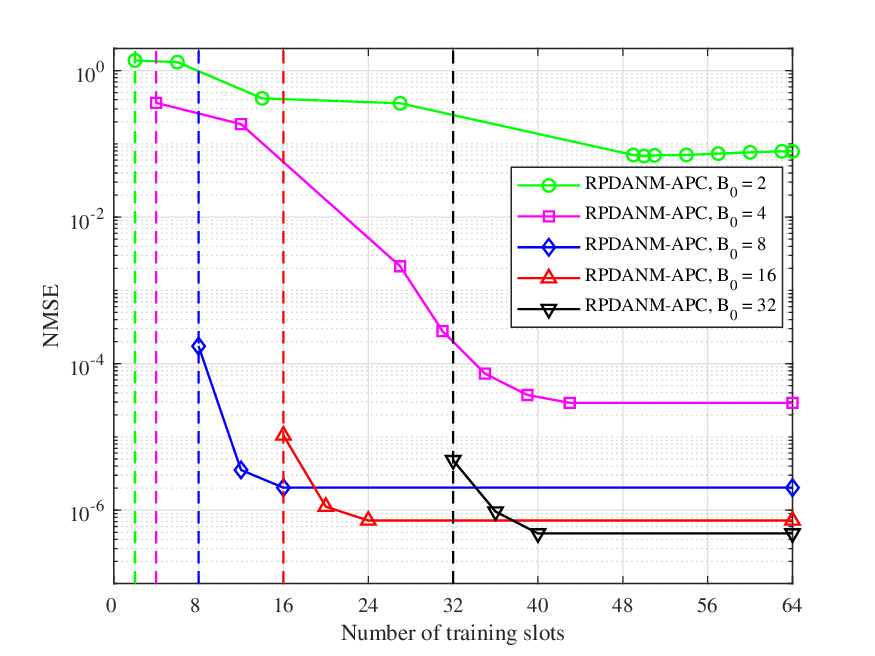}
	\vspace{-3mm}
	\caption{NMSE of RPDANM-APC versus the number of training slots with different $B_{0}$. Dashed lines indicate the initial number of training slots.}
	\label{fig9}
	\vspace{-4mm}
\end{figure}

In this subsection, we evaluate the effect of the initial number of training slots $B_{0}$ on the performance of RPDANM-APC to guide the selection of $B_{0}$. Fig. \ref{fig9} provides an illustrative example about the NMSE of RPDANM-APC versus the number of training slots with different $B_{0}$, where a 64-element RIS is considered. It is observed that more initial training slots lead to a more precise initialization and thus leading to more precise CE ultimately. In particular, RPDANM-APC does not converge when $B_{0}$ is excessively small, e.g., $B_{0}=2$.

\begin{figure}[!t]
	\centering
	\includegraphics[width=5in]{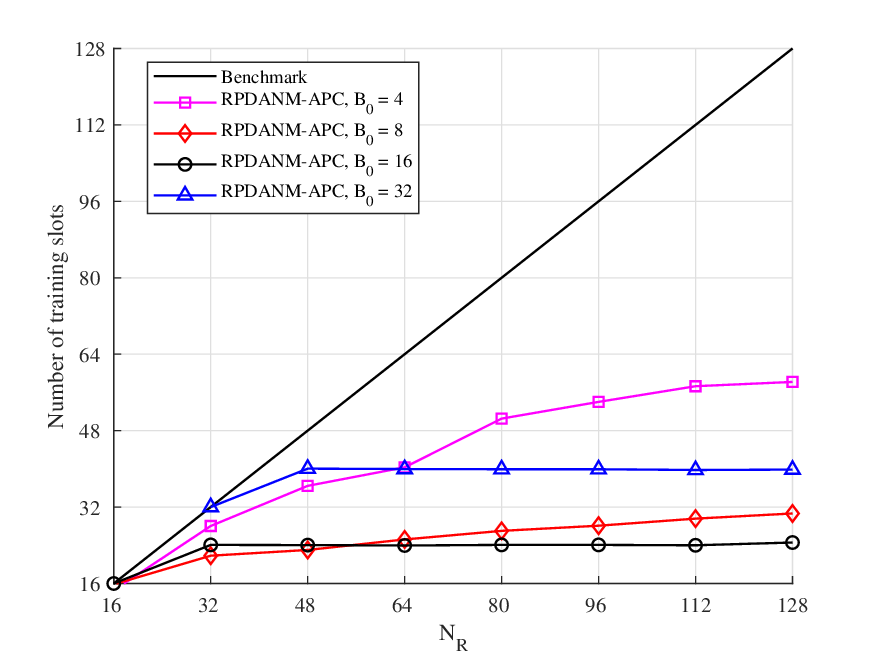}
	\vspace{-3mm}
	\caption{Average number of training slots of RPDANM-APC versus the size of RIS under different $B_{0}$.}
	\label{fig10}
	\vspace{-4mm}
\end{figure}

Furthermore, we investigate the effect of the initial number of training slots $B_{0}$ on the total number of training slots $B$ required for RPDANM-APC via Monte Carlo simulations. Fig. \ref{fig10} illustrates the average number of required training slots under different $B_{0}$ versus the size of RIS. $N_\mathrm{R}$ increases from 16 to 128 and 50 Monte Carlo experiments are conducted under each setting. The black line denotes $B_{\max}$, i.e., the upper bound of number of training slots. It is observed that as $N_\mathrm{R}$ increases, the advantage of RPDANM-APC in saving training overhead is considerable. In particular, when $B_{0}$ is between 8 and 16, i.e., 2 to 4 times of the number of effective paths, the least training slots are required for convergence. On the one hand, when $B_{0}$ is small, e.g., $B_{0} = 4$, imprecise initialization leads to more iterations required for convergence, resulting in more training overhead. On the other hand, when $B_{0}$ is large, e.g., $B_{0} = 32$, too many initial training slots leads to a waste of training overhead. As a result, we recommend setting the initial number of training slots between $2\hat{L}$ and $4\hat{L}$, where $\hat{L}$ is the estimated number of effective paths.

\section{Conclusions} \label{SecCon}

In this paper, a PDANM-based CE framework was proposed for RIS-aided MIMO systems, which reduces the computational complexity while suffering a slight CE accuracy loss compared to state-of-the-art methods. A PDANM-based iterative algorithm called RPDANM was proposed next, which further promotes sparsity by approximately solving a rank minimization problem and thus achieves higher CE accuracy than PDANM. Furthermore, by adaptively adjusting the RIS phases during channel sounding, an iterative CE approach named RPDANM-APC was proposed, which strikes a balance between reduced training overhead and competitive CE performance. Numerical simulations were provided to demonstrate the high computational efficiency, superior CE accuracy, and excellent ability to save training overhead of our proposed methods, especially the RPDANM-APC approach. In future studies, we will extend PDANM to higher-dimensional cases and other parameter estimation scenarios. In addition, efficient algorithms will be developed to further reduce the computational cost of solving large-scale SDP problems.

{\appendices
	\section{Proof of Proposition \ref{Proposition1}} \label{Proposition1Proof}
	For any $\kappa \in \mathbb{C}$ and $\xi \in [-\zeta,\zeta]$ with $\zeta = \min_{l_\mathrm{BR} \in \{1,\cdots,L_\mathrm{BR}\},l_\mathrm{RU} \in \{1,\cdots,L_\mathrm{RU}\}} \big\{ 1-\cos([\boldsymbol{\phi}'_\mathrm{R}]_{l_\mathrm{BR}}), 1+\cos([\boldsymbol{\phi}'_\mathrm{R}]_{l_\mathrm{BR}}), 1-\cos([\boldsymbol{\theta}'_\mathrm{R}]_{l_\mathrm{RU}}), 1+\cos([\boldsymbol{\theta}'_\mathrm{R}]_{l_\mathrm{RU}}) \big\}$, we take $\boldsymbol{\rho}'_\mathrm{BR} = \kappa \boldsymbol{\rho}_\mathrm{BR}$ and $\boldsymbol{\rho}'_\mathrm{RU} = \frac{1}{\kappa} \boldsymbol{\rho}_\mathrm{RU}$, and let $[\boldsymbol{\phi}'_\mathrm{R}]_{l_\mathrm{BR}} = \arccos\big(\cos([\boldsymbol{\phi}_\mathrm{R}]_{l_\mathrm{BR}})+\xi\big)$ and $[\boldsymbol{\theta}'_\mathrm{R}]_{l_\mathrm{RU}} = \arccos\big(\cos([\boldsymbol{\theta}_\mathrm{R}]_{l_\mathrm{RU}})+\xi\big)$ for any $l_\mathrm{BR} \in \{ 1,\cdots,L_\mathrm{BR} \}$ and $l_\mathrm{RU} \in \{ 1,\cdots,L_\mathrm{RU} \}$. Then, we have
	\begin{equation}
		\begin{aligned}
			& \big[ \mathrm{diag} ( \boldsymbol{\rho}'_\mathrm{RU} ) \mathbf{A}^{H}_{N_\mathrm{R}} ( \boldsymbol{\theta}'_\mathrm{R} ) \cdot \mathrm{diag} ( \boldsymbol{\omega}_b ) \cdot \mathbf{A}_{N_\mathrm{R}} ( \boldsymbol{\phi}'_\mathrm{R} ) \mathrm{diag} ( \boldsymbol{\rho}'_\mathrm{BR} ) \big]_{l_\mathrm{RU},l_\mathrm{BR}} \\ = & [ \boldsymbol{\rho}'_\mathrm{RU} ]_{l_\mathrm{RU}} [ \boldsymbol{\rho}'_\mathrm{BR} ]_{l_\mathrm{BR}} \mathbf{a}^{H}_{N_\mathrm{R}} ( [\boldsymbol{\theta}'_\mathrm{R}]_{l_\mathrm{RU}} ) \mathrm{diag} ( \boldsymbol{\omega}_b ) \mathbf{a}_{N_\mathrm{R}} ( [\boldsymbol{\phi}'_\mathrm{R}]_{l_\mathrm{BR}} ) \\ = & [ \boldsymbol{\rho}'_\mathrm{RU} ]_{l_\mathrm{RU}} [ \boldsymbol{\rho}'_\mathrm{BR} ]_{l_\mathrm{BR}} \sum^{N_{\mathrm{R}}}_{n_{\mathrm{R}}=1} [\boldsymbol{\omega}_b]_{n_{\mathrm{R}}} \mathrm{e}^{-i \pi ( n_{\mathrm{R}}-1 ) \cos ([\boldsymbol{\theta}'_\mathrm{R}]_{l_\mathrm{RU}})} \mathrm{e}^{i \pi ( n_{\mathrm{R}}-1 ) \cos ([\boldsymbol{\phi}'_\mathrm{R}]_{l_\mathrm{BR}})} \\ = & \frac{1}{\kappa} [\boldsymbol{\rho}_\mathrm{RU}]_{l_\mathrm{RU}} \cdot \kappa [ \boldsymbol{\rho}_\mathrm{BR} ]_{l_\mathrm{BR}} \cdot \sum^{N_{\mathrm{R}}}_{n_{\mathrm{R}}=1} [\boldsymbol{\omega}_b]_{n_{\mathrm{R}}} \mathrm{e}^{i \pi ( n_{\mathrm{R}}-1 ) \big[ \cos ([\boldsymbol{\phi}_\mathrm{R}]_{l_\mathrm{BR}})+\xi-\cos ([\boldsymbol{\theta}_\mathrm{R}]_{l_\mathrm{RU}})-\xi \big]} \\ = & [\boldsymbol{\rho}_\mathrm{RU}]_{l_\mathrm{RU}} [ \boldsymbol{\rho}_\mathrm{BR} ]_{l_\mathrm{BR}} \sum^{N_{\mathrm{R}}}_{n_{\mathrm{R}}=1} [\boldsymbol{\omega}_b]_{n_{\mathrm{R}}} \mathrm{e}^{-i \pi ( n_{\mathrm{R}}-1 ) \cos ([\boldsymbol{\theta}_\mathrm{R}]_{l_\mathrm{RU}}) } \mathrm{e}^{i \pi ( n_{\mathrm{R}}-1 ) \cos ([\boldsymbol{\phi}_\mathrm{R}]_{l_\mathrm{BR}})} \\ = & \big[ \mathrm{diag} ( \boldsymbol{\rho}_\mathrm{RU} ) \mathbf{A}^{H}_{N_\mathrm{R}} ( \boldsymbol{\theta}_\mathrm{R} ) \cdot \mathrm{diag} ( \boldsymbol{\omega}_b ) \cdot \mathbf{A}_{N_\mathrm{R}} ( \boldsymbol{\phi}_\mathrm{R} ) \mathrm{diag} ( \boldsymbol{\rho}_\mathrm{BR} ) \big]_{l_\mathrm{BR},l_\mathrm{RU}}.
		\end{aligned}
	\end{equation}
	It follows that
	\begin{equation}
		\begin{aligned}
			\mathbf{H}^b_\mathrm{BU} & = \mathbf{A}_{N_\mathrm{U}} ( \boldsymbol{\phi}_\mathrm{U} ) \mathrm{diag} ( \boldsymbol{\rho}_\mathrm{RU} ) \mathbf{A}^{H}_{N_\mathrm{R}} ( \boldsymbol{\theta}_\mathrm{R} ) \cdot \mathrm{diag} ( \boldsymbol{\omega}_b ) \cdot \mathbf{A}_{N_\mathrm{R}} ( \boldsymbol{\phi}_\mathrm{R} ) \mathrm{diag} ( \boldsymbol{\rho}_\mathrm{BR} ) \mathbf{A}^{H}_{N_\mathrm{B}} ( \boldsymbol{\theta}_\mathrm{B} ) \\ 
			& = \mathbf{A}_{N_\mathrm{U}} ( \boldsymbol{\phi}_\mathrm{U} ) \mathrm{diag} ( \boldsymbol{\rho}'_\mathrm{RU} ) \mathbf{A}^{H}_{N_\mathrm{R}} ( \boldsymbol{\theta}'_\mathrm{R} ) \cdot \mathrm{diag} ( \boldsymbol{\omega}_b ) \cdot \mathbf{A}_{N_\mathrm{R}} ( \boldsymbol{\phi}'_\mathrm{R} ) \mathrm{diag} ( \boldsymbol{\rho}'_\mathrm{BR} ) \mathbf{A}^{H}_{N_\mathrm{B}} ( \boldsymbol{\theta}_\mathrm{B} ).
		\end{aligned}
	\end{equation}
	
	\section{Proof of Theorem \ref{Theorem1}} \label{Theorem1Proof}
	We first show that $\mathrm{SDP}(\mathbf{H}) \leq \| \mathbf{H} \|_{\mathcal{A}}$. Assume that
	\begin{equation} \label{HAtoDec}
		\begin{aligned}
			\mathbf{H} & = \sum_{l_\mathrm{BU}} \sum_{l_\mathrm{R}} [\mathbf{P}]_{l_\mathrm{BU},l_\mathrm{R}} \big[ \mathbf{a}_{N_\mathrm{B}} ( [\boldsymbol{\theta}]_{l_\mathrm{BU}} ) \otimes \mathbf{a}_{N_\mathrm{U}} ( [\boldsymbol{\phi}]_{l_\mathrm{BU}} ) \big] \mathbf{a}^{H}_{N_\mathrm{R}} ( \psi_{l_\mathrm{R}} ) \\ & = \big[ \mathbf{A}_{N_\mathrm{B}} ( \boldsymbol{\theta} ) \diamond \mathbf{A}_{N_\mathrm{U}} ( \boldsymbol{\phi} ) \big] \mathbf{P} \mathbf{A}^{H}_{N_\mathrm{R}} ( \boldsymbol{\psi} )
		\end{aligned}
	\end{equation}
	is a partially decoupled atomic decomposition of the effective channel, where $\mathbf{P}$ is not necessarily a diagonal matrix since there may be repeated entries in $\boldsymbol{\theta}$ and $\boldsymbol{\phi}$ since different partially decoupled atoms may overlap in some dimensions. We take
	\begin{equation}
		\begin{aligned}
			\mathcal{T}_{[N_\mathrm{B},N_\mathrm{U}]}(\mathbf{T}) & = \sum_{l_\mathrm{BU}} \sum_{l_\mathrm{R}} \left | [\mathbf{P}]_{l_\mathrm{BU},l_\mathrm{R}} \right | \big[ \mathbf{a}_{N_\mathrm{B}} ( [\boldsymbol{\theta}]_{l_\mathrm{BU}} ) \otimes \mathbf{a}_{N_\mathrm{U}} ( [\boldsymbol{\phi}]_{l_\mathrm{BU}} ) \big] \big[ \mathbf{a}_{N_\mathrm{B}} ( [\boldsymbol{\theta}]_{l_\mathrm{BU}} ) \otimes \mathbf{a}_{N_\mathrm{U}} ( [\boldsymbol{\phi}]_{l_\mathrm{BU}} ) \big]^{H}
		\end{aligned}
	\end{equation}
	and
	\begin{equation}
		\mathcal{T}_{N_\mathrm{R}}(\mathbf{t}) = \sum_{l_\mathrm{BU}} \sum_{l_\mathrm{R}} \left | [\mathbf{P}]_{l_\mathrm{BU},l_\mathrm{R}} \right | \mathbf{a}_{N_\mathrm{R}} ( \psi_{l_\mathrm{R}} ) \mathbf{a}^{H}_{N_\mathrm{R}} ( \psi_{l_\mathrm{R}} ).
	\end{equation}
	It follows that
	\begin{equation}
		\begin{aligned}
			\begin{bmatrix}
				\mathcal{T}_{N_\mathrm{R}}(\mathbf{t}) & \hspace{-3mm} \mathbf{H}^{H}\\
				\mathbf{H} & \hspace{-3mm} \mathcal{T}_{[N_\mathrm{B},N_\mathrm{U}]}(\mathbf{T})
			\end{bmatrix} & = \sum_{l_\mathrm{BU}} \sum_{l_\mathrm{R}} \left | [\mathbf{P}]_{l_\mathrm{BU},l_\mathrm{R}} \right | \\ & \quad \cdot
			\begin{bmatrix}
				\frac{\overline{[\mathbf{P}]_{l_\mathrm{BU},l_\mathrm{R}}}}{\left | [\mathbf{P}]_{l_\mathrm{BU},l_\mathrm{R}} \right |} \mathbf{a}_{N_\mathrm{R}} ( \psi_{l_\mathrm{R}} ) \\ \mathbf{a}_{N_\mathrm{B}} ( [\boldsymbol{\theta}]_{l_\mathrm{BU}} ) \otimes \mathbf{a}_{N_\mathrm{U}} ( [\boldsymbol{\phi}]_{l_\mathrm{BU}} )
			\end{bmatrix}
			\begin{bmatrix}
				\frac{\overline{[\mathbf{P}]_{l_\mathrm{BU},l_\mathrm{R}}}}{\left | [\mathbf{P}]_{l_\mathrm{BU},l_\mathrm{R}} \right |} \mathbf{a}_{N_\mathrm{R}} ( \psi_{l_\mathrm{R}} ) \\ \mathbf{a}_{N_\mathrm{B}} ( [\boldsymbol{\theta}]_{l_\mathrm{BU}} ) \otimes \mathbf{a}_{N_\mathrm{U}} ( [\boldsymbol{\phi}]_{l_\mathrm{BU}} )
			\end{bmatrix}^{H} \succeq \mathbf{0},
		\end{aligned}
	\end{equation}
	and thus
	\begin{equation} \label{SDPleq}
		\begin{aligned}
			\mathrm{SDP}(\mathbf{H}) & \leq \frac{1}{2N_\mathrm{R}} \mathrm{tr} \big(\mathcal{T}_{N_\mathrm{R}}(\mathbf{t})\big) + \frac{1}{2N_\mathrm{B}N_\mathrm{U}} \mathrm{tr} \big(\mathcal{T}_{[N_\mathrm{B},N_\mathrm{U}]}(\mathbf{T})\big) \\ & = \frac{1}{2N_\mathrm{R}} \sum_{l_\mathrm{BU}} \sum_{l_\mathrm{R}} \left | [\mathbf{P}]_{l_\mathrm{BU},l_\mathrm{R}} \right | \mathbf{a}^{H}_{N_\mathrm{R}} ( \psi_{l_\mathrm{R}} ) \mathbf{a}_{N_\mathrm{R}} ( \psi_{l_\mathrm{R}} ) + \frac{1}{2N_\mathrm{B}N_\mathrm{U}} \sum_{l_\mathrm{BU}} \sum_{l_\mathrm{R}} \left | [\mathbf{P}]_{l_\mathrm{BU},l_\mathrm{R}} \right | \\ & \quad \cdot \big[ \mathbf{a}_{N_\mathrm{B}} ( [\boldsymbol{\theta}]_{l_\mathrm{BU}} ) \otimes \mathbf{a}_{N_\mathrm{U}} ( [\boldsymbol{\phi}]_{l_\mathrm{BU}} ) \big]^{H} \big[ \mathbf{a}_{N_\mathrm{B}} ( [\boldsymbol{\theta}]_{l_\mathrm{BU}} ) \otimes \mathbf{a}_{N_\mathrm{U}} ( [\boldsymbol{\phi}]_{l_\mathrm{BU}} ) \big] \\ & = \sum_{l_\mathrm{BU}} \sum_{l_\mathrm{R}} \left | [\mathbf{P}]_{l_\mathrm{BU},l_\mathrm{R}} \right |.
		\end{aligned}
	\end{equation}
	Since \eqref{SDPleq} holds for any partially decoupled atomic decomposition of $\mathbf{H}$, we have that $\mathrm{SDP}(\mathbf{H}) \leq \| \mathbf{H} \|_{\mathcal{A}}$.
	
	We next show that $\mathrm{SDP}(\mathbf{H}) = \| \mathbf{H} \|_{\mathcal{A}}$ under the three conditions in Theorem \ref{Theorem1}. Due to the column inclusion property \cite{horn2012matrix} of PSD matrices, we have that $\mathrm{col} \big( \mathbf{H}^{H} \big) \in \mathrm{col} \big( \mathcal{T}_{N_\mathrm{R}}(\mathbf{t}^{*}) \big)$ and $\mathrm{col} \big( \mathbf{H} \big) \in \mathrm{col} \big( \mathcal{T}_{[N_\mathrm{B},N_\mathrm{U}]}(\mathbf{T}^{*}) \big)$. It follows from \eqref{VanDec1} and \eqref{VanDec2} that there exists a matrix $\mathbf{C} \in \mathbb{C}^{L_\mathrm{BU}^{*} \times L_\mathrm{R}^{*}}$ such that $\mathbf{H}$ admits the following partially decoupled atomic decomposition:
	\begin{equation} \label{HAtoDec*}
		\mathbf{H} = \big[ \mathbf{A}_{N_\mathrm{B}} ( \boldsymbol{\theta}_\mathrm{B}^{*} ) \diamond \mathbf{A}_{N_\mathrm{U}} ( \boldsymbol{\phi}_\mathrm{U}^{*} ) \big] \mathbf{C} \mathbf{A}^{H}_{N_\mathrm{R}} ( \boldsymbol{\psi}_\mathrm{R}^{*} ).
	\end{equation}
	
	We first assume that there is exactly one non-zero element in each row of $\mathbf{C}$, denoted as $[\mathbf{C}]_{i,k_{i}}$ in the $i$-th row. It follows from the Schur complement theorem \cite{zhang2006schur} that $\mathcal{T}_{N_\mathrm{R}}(\mathbf{t}^{*}) \succeq \mathbf{H}^{H} {\mathcal{T}_{[N_\mathrm{B},N_\mathrm{U}]}(\mathbf{T}^{*})}^{\dagger} \mathbf{H}$, i.e.,
	\begin{equation} \label{Schsucceq}
		\begin{aligned}
			& \mathbf{A}_{N_\mathrm{R}} ( \boldsymbol{\psi}_\mathrm{R}^{*} ) \mathrm{diag} (\mathbf{p}_\mathrm{R}^{*}) \mathbf{A}^{H}_{N_\mathrm{B}} ( \boldsymbol{\psi}_\mathrm{R}^{*} ) \\ \succeq & \mathbf{A}_{N_\mathrm{R}} ( \boldsymbol{\psi}_\mathrm{R}^{*} ) \mathbf{C}^{H} \big[ \mathbf{A}_{N_\mathrm{B}} ( \boldsymbol{\theta}_\mathrm{B}^{*} ) \diamond \mathbf{A}_{N_\mathrm{U}} ( \boldsymbol{\phi}_\mathrm{U}^{*} ) \big]^{H} {\big[ \mathbf{A}_{N_\mathrm{B}} ( \boldsymbol{\theta}_\mathrm{B}^{*} ) \diamond \mathbf{A}_{N_\mathrm{U}} ( \boldsymbol{\phi}_\mathrm{U}^{*} ) \big]^{H}}^{\dagger} \\ & \cdot {\mathrm{diag} (\mathbf{p}_\mathrm{BU}^{*})}^{\dagger} \big[ \mathbf{A}_{N_\mathrm{B}} ( \boldsymbol{\theta}_\mathrm{B}^{*} ) \diamond \mathbf{A}_{N_\mathrm{U}} ( \boldsymbol{\phi}_\mathrm{U}^{*} ) \big]^{\dagger} \big[ \mathbf{A}_{N_\mathrm{B}} ( \boldsymbol{\theta}_\mathrm{B}^{*} ) \diamond \mathbf{A}_{N_\mathrm{U}} ( \boldsymbol{\phi}_\mathrm{U}^{*} ) \big] \mathbf{C} \mathbf{A}^{H}_{N_\mathrm{R}} ( \boldsymbol{\psi}_\mathrm{R}^{*} ) \\ = & \mathbf{A}_{N_\mathrm{R}} ( \boldsymbol{\psi}_\mathrm{R}^{*} ) \mathbf{C}^{H} {\mathrm{diag} (\mathbf{p}_\mathrm{BU}^{*})}^{\dagger} \mathbf{C} \mathbf{A}^{H}_{N_\mathrm{R}} ( \boldsymbol{\psi}_\mathrm{R}^{*} ).
		\end{aligned}
	\end{equation}
	From \eqref{Schsucceq}, we obtain
	\begin{equation} \label{diagsucceq1}
		\mathrm{diag} (\mathbf{p}_\mathrm{R}^{*}) \succeq \mathbf{C}^{H} {\mathrm{diag} (\mathbf{p}_\mathrm{BU}^{*})}^{\dagger} \mathbf{C},
	\end{equation}
	which yields
	\begin{equation} \label{diagsucceq2}
		[\mathbf{p}_\mathrm{R}^{*}]_{l_\mathrm{R}} \geq \sum_{l_\mathrm{BU}=1}^{L_\mathrm{BU}^{*}} \frac{\left| [\mathbf{C}]_{l_\mathrm{BU},l_\mathrm{R}} \right|^{2}}{[\mathbf{p}_\mathrm{BU}^{*}]_{l_\mathrm{BU}}}, l_\mathrm{R}=1,\cdots,L_\mathrm{R}^{*}
	\end{equation}
	since $\mathbf{C}^{H} {\mathrm{diag} (\mathbf{p}_\mathrm{BU}^{*})}^{\dagger} \mathbf{C}$ is a diagonal matrix. Then, we have
	\begin{equation} \label{SDPgeq1}
		\begin{aligned}
			\mathrm{SDP}(\mathbf{H}) & = \frac{1}{2N_\mathrm{R}} \mathrm{tr} \big(\mathcal{T}_{N_\mathrm{R}}(\mathbf{t}^{*})\big) + \frac{1}{2N_\mathrm{B}N_\mathrm{U}} \mathrm{tr} \big(\mathcal{T}_{[N_\mathrm{B},N_\mathrm{U}]}(\mathbf{T}^{*})\big) \\ & = \frac{1}{2N_\mathrm{R}} \sum_{l_\mathrm{R}=1}^{L_\mathrm{R}^{*}} [\mathbf{p}_\mathrm{R}^{*}]_{l_\mathrm{R}} \mathbf{a}^{H}_{N_\mathrm{R}} ( [\boldsymbol{\psi}_\mathrm{R}^{*}]_{l_{R}} ) \mathbf{a}_{N_\mathrm{R}} ( [\boldsymbol{\psi}_\mathrm{R}^{*}]_{l_{R}} ) + \frac{1}{2N_\mathrm{B}N_\mathrm{U}} \sum_{l_\mathrm{BU}=1}^{L_\mathrm{BU}^{*}} [\mathbf{p}_\mathrm{BU}^{*}]_{l_\mathrm{BU}} \\ & \quad \cdot \big[ \mathbf{a}_{N_\mathrm{B}} ( [\boldsymbol{\theta}_\mathrm{B}^{*}]_{l_{BU}} ) \otimes \mathbf{a}_{N_\mathrm{U}} ( [\boldsymbol{\phi}_\mathrm{U}^{*}]_{l_{BU}} ) \big]^{H} \big[ \mathbf{a}_{N_\mathrm{B}} ( [\boldsymbol{\theta}_\mathrm{B}^{*}]_{l_{BU}} ) \otimes \mathbf{a}_{N_\mathrm{U}} ( [\boldsymbol{\phi}_\mathrm{U}^{*}]_{l_{BU}} ) \big] \\ & = \frac{1}{2} \sum_{l_\mathrm{R}=1}^{L_\mathrm{R}^{*}} [\mathbf{p}_\mathrm{R}^{*}]_{l_\mathrm{R}} + \frac{1}{2} \sum_{l_\mathrm{BU}=1}^{L_\mathrm{BU}^{*}} [\mathbf{p}_\mathrm{BU}^{*}]_{l_\mathrm{BU}} \\ & \geq \frac{1}{2} \sum_{l_\mathrm{R}=1}^{L_\mathrm{R}^{*}} \sum_{l_\mathrm{BU}=1}^{L_\mathrm{BU}^{*}} \frac{\left| [\mathbf{C}]_{l_\mathrm{BU},l_\mathrm{R}} \right|^{2}}{[\mathbf{p}_\mathrm{BU}^{*}]_{l_\mathrm{BU}}} + \frac{1}{2} \sum_{l_\mathrm{BU}=1}^{L_\mathrm{BU}^{*}} [\mathbf{p}_\mathrm{BU}^{*}]_{l_\mathrm{BU}} \\ & = \frac{1}{2} \sum_{l_\mathrm{BU}=1}^{L_\mathrm{BU}^{*}} \frac{\left| [\mathbf{C}]_{l_\mathrm{BU},k_{l_\mathrm{BU}}} \right|^{2}}{[\mathbf{p}_\mathrm{BU}^{*}]_{l_\mathrm{BU}}} + \frac{1}{2} \sum_{l_\mathrm{BU}=1}^{L_\mathrm{BU}^{*}} [\mathbf{p}_\mathrm{BU}^{*}]_{l_\mathrm{BU}} \\ & \geq \sum_{l_\mathrm{BU}=1}^{L_\mathrm{BU}^{*}} \left| [\mathbf{C}]_{l_\mathrm{BU},k_{l_\mathrm{BU}}} \right| \geq \| \mathbf{H} \|_{\mathcal{A}}.
		\end{aligned}
	\end{equation}
	Combining \eqref{SDPgeq1} and \eqref{SDPleq}, we draw the conclusion that $\mathrm{SDP}(\mathbf{H}) = \| \mathbf{H} \|_{\mathcal{A}}$.
	
	Similarly, if there is exactly one non-zero element in each column of $\mathbf{C}$, denoted as $[\mathbf{C}]_{k_{j},j}$ in the $j$-th column, it follows that $\mathcal{T}_{[N_\mathrm{B},N_\mathrm{U}]}(\mathbf{T}^{*}) \succeq \mathbf{H} {\mathcal{T}_{N_\mathrm{R}}(\mathbf{t}^{*})}^{\dagger} \mathbf{H}^{H}$. Similarly to the derivations above, we obtain that
	\begin{equation} \label{diagsucceq3}
		\mathrm{diag} (\mathbf{p}_\mathrm{BU}^{*}) \succeq \mathbf{C} {\mathrm{diag} (\mathbf{p}_\mathrm{R}^{*})}^{\dagger} \mathbf{C}^{H},
	\end{equation}
	or equivalently,
	\begin{equation} \label{diagsucceq4}
		[\mathbf{p}_\mathrm{BU}^{*}]_{l_\mathrm{BU}} \geq \sum_{l_\mathrm{R}=1}^{L_\mathrm{R}^{*}} \frac{\left| [\mathbf{C}]_{l_\mathrm{BU},l_\mathrm{R}} \right|^{2}}{[\mathbf{p}_\mathrm{R}^{*}]_{l_\mathrm{R}}}, l_\mathrm{BU}=1,\cdots,L_\mathrm{BU}^{*},
	\end{equation}
	since $\mathbf{C} {\mathrm{diag} (\mathbf{p}_\mathrm{R}^{*})}^{\dagger} \mathbf{C}^{H}$ is a diagonal matrix. Then, we have
	\begin{equation} \label{SDPgeq2}
		\begin{aligned}
			\mathrm{SDP}(\mathbf{H}) & = \frac{1}{2} \sum_{l_\mathrm{R}=1}^{L_\mathrm{R}^{*}} [\mathbf{p}_\mathrm{R}^{*}]_{l_\mathrm{R}} + \frac{1}{2} \sum_{l_\mathrm{BU}=1}^{L_\mathrm{BU}^{*}} [\mathbf{p}_\mathrm{BU}^{*}]_{l_\mathrm{BU}} \\ & \geq \frac{1}{2} \sum_{l_\mathrm{R}=1}^{L_\mathrm{R}^{*}} [\mathbf{p}_\mathrm{R}^{*}]_{l_\mathrm{R}} + \frac{1}{2} \sum_{l_\mathrm{BU}=1}^{L_\mathrm{BU}^{*}} \sum_{l_\mathrm{R}=1}^{L_\mathrm{R}^{*}} \frac{\left| [\mathbf{C}]_{l_\mathrm{BU},l_\mathrm{R}} \right|^{2}}{[\mathbf{p}_\mathrm{R}^{*}]_{l_\mathrm{R}}} \\ & = \frac{1}{2} \sum_{l_\mathrm{R}=1}^{L_\mathrm{R}^{*}} [\mathbf{p}_\mathrm{R}^{*}]_{l_\mathrm{R}} + \frac{1}{2} \sum_{l_\mathrm{R}=1}^{L_\mathrm{R}^{*}} \frac{\left| [\mathbf{C}]_{k_{l_\mathrm{R}},l_\mathrm{R}} \right|^{2}}{[\mathbf{p}_\mathrm{R}^{*}]_{l_\mathrm{R}}} \\ & \geq \sum_{l_\mathrm{R}=1}^{L_\mathrm{R}^{*}} \left| [\mathbf{C}]_{k_{l_\mathrm{R}},l_\mathrm{R}} \right| \geq \| \mathbf{H} \|_{\mathcal{A}}.
		\end{aligned}
	\end{equation}
	Combining \eqref{SDPgeq2} and \eqref{SDPleq}, we obtain that $\mathrm{SDP}(\mathbf{H}) = \| \mathbf{H} \|_{\mathcal{A}}$ and complete the proof.
	
	\section{Proof of Theorem \ref{Theorem2}} \label{Theorem2Proof}
	It follows from the proof of Theorem \ref{Theorem1} that $\mathrm{SDP}(\mathbf{H}) \leq \| \mathbf{H} \|_{\mathcal{A}}$ holds. Next, we will show that $\mathrm{SDP}(\mathbf{H}) \geq \| \mathbf{H} \|_{\mathcal{A}}$. Define the one-dimensional (1D) atomic set as
	\begin{equation} \label{AS1D}
		\begin{aligned}
			\mathcal{A}_{1D} & = \Big\{ \mathbf{b} \mathbf{a}^{H}_{N_\mathrm{R}} ( \psi ) \in \mathbb{C}^{N_\mathrm{B}N_\mathrm{U} \times N_\mathrm{R}} : \psi \in [0,\pi], \| \mathbf{b} \|_{2} = 1 \Big\} 
		\end{aligned}
	\end{equation}
	and the corresponding 1D atomic norm as
	\begin{equation} \label{AN1D}
		\begin{aligned}
			\| \mathbf{H} \|_{\mathcal{A}_{1D}} & = \inf \Big\{ \sum_{l} | \rho_{l} |:\mathbf{H} =\sum_{l} \rho_{l} \mathbf{b}_{l} \mathbf{a}^{H}_{N_\mathrm{R}} ( \psi_{l} ) : \psi_{l} \in [0,\pi], \| \mathbf{b}_{l} \|_{2} = 1 \Big\},
		\end{aligned}
	\end{equation}
	then the effective channel $\mathbf{H}$ admits a 1D atomic decomposition in the 1D atomic set as
	\begin{equation}
		\mathbf{H} = \sum_{l_\mathrm{BR}=1}^{L_\mathrm{BR}} \sum_{l_\mathrm{RU}=1}^{L_\mathrm{RU}} \sqrt{N_\mathrm{B}N_\mathrm{U}} [\boldsymbol{\rho}_\mathrm{BU}]_{l_\mathrm{BU}} \big[ \frac{1}{\sqrt{N_\mathrm{B}N_\mathrm{U}}} \mathbf{a}_{N_\mathrm{B}} ( [-\boldsymbol{\theta}_\mathrm{B}]_{l_\mathrm{BR}} ) \otimes \mathbf{a}_{N_\mathrm{U}} ( [\boldsymbol{\phi}_\mathrm{U}]_{l_\mathrm{RU}} ) \big] \mathbf{a}^{H}_{N_\mathrm{R}} ( [\boldsymbol{\psi}_\mathrm{R}]_{l_\mathrm{BU}} ).
	\end{equation}
	Denote $\mathrm{SDP}_{1D}(\mathbf{H})$ as the optimal value of the following SDP problem:
	\begin{equation} \label{AN1Dopt}
		\min_{\mathbf{t},\mathbf{Q}} \frac{1}{2N_\mathrm{R}} \mathrm{tr} \big(\mathcal{T}_{N_\mathrm{R}}(\mathbf{t})\big) + \frac{1}{2N_\mathrm{B}N_\mathrm{U}} \mathrm{tr} \big(\mathbf{Q}\big), \ \mathrm{s.t.}
		\begin{bmatrix}
			\mathcal{T}_{N_\mathrm{R}}(\mathbf{t}) & \mathbf{H}^{H}\\
			\mathbf{H} & \mathbf{Q}
		\end{bmatrix}
		\succeq \mathbf{0}.
	\end{equation}
	It is noted that \eqref{PDANopt} essentially imposes an additional constraint that $\mathbf{Q}$ is a 2-level Toeplitz matrix on \eqref{AN1Dopt} and thus
	\begin{equation} \label{th2eq1}
		\mathrm{SDP}(\mathbf{H}) \geq \mathrm{SDP}_{1D}(\mathbf{H}).
	\end{equation}
	It follows from \cite[Theorem 3]{yang2016exact} that
	\begin{equation} \label{th2eq2}
		\mathrm{SDP}_{1D}(\mathbf{H}) = \frac{1}{\sqrt{N_\mathrm{B}N_\mathrm{U}}}\| \mathbf{H} \|_{\mathcal{A}_{1D}}.
	\end{equation}
	Since $\Delta_{\boldsymbol{\psi}_\mathrm{R}} > \frac{4}{N_\mathrm{R}}$, it is proved in \cite[Theorem 4]{yang2016exact} that
	\begin{equation} \label{th2eq3}
		\frac{1}{\sqrt{N_\mathrm{B}N_\mathrm{U}}}\| \mathbf{H} \|_{\mathcal{A}_{1D}} = \sum_{l_\mathrm{BR}=1}^{L_\mathrm{BR}} \sum_{l_\mathrm{RU}=1}^{L_\mathrm{RU}} |[\boldsymbol{\rho}_\mathrm{BU}]_{l_\mathrm{BU}}|
	\end{equation}
	with $l_\mathrm{BU} = (l_{\mathrm{BR}}-1)L_{\mathrm{RU}}+l_\mathrm{RU}$. Furthermore, we have that
	\begin{equation} \label{th2eq4}
		\sum_{l_\mathrm{BR}=1}^{L_\mathrm{BR}} \sum_{l_\mathrm{RU}=1}^{L_\mathrm{RU}} |[\boldsymbol{\rho}_\mathrm{BU}]_{l_\mathrm{BU}}| \geq \| \mathbf{H} \|_{\mathcal{A}},
	\end{equation}
	since \eqref{H} is a partially decoupled atomic decomposition of $\mathbf{H}$. Combining \eqref{th2eq1}, \eqref{th2eq2}, \eqref{th2eq3}, and \eqref{th2eq4}, we have
	\begin{equation}
		\mathrm{SDP}(\mathbf{H}) \geq \mathrm{SDP}_{1D}(\mathbf{H}) = \frac{1}{\sqrt{N_\mathrm{B}N_\mathrm{U}}}\| \mathbf{H} \|_{\mathcal{A}_{1D}} = \sum_{l_\mathrm{BR}=1}^{L_\mathrm{BR}} \sum_{l_\mathrm{RU}=1}^{L_\mathrm{RU}} |[\boldsymbol{\rho}_\mathrm{BU}]_{l_\mathrm{BU}}| \geq \| \mathbf{H} \|_{\mathcal{A}}
	\end{equation}
	and complete the proof.
	
	\section{Proof of Theorem \ref{Theorem3}} \label{Theorem3Proof}
	The proof is a generalization of that of Theorem \ref{Theorem1}. We first show that $\mathrm{SDP}_{\mathbf{W}_{\mathrm{R}},\mathbf{W}_{\mathrm{BU}}}(\mathbf{H}) \leq \| \mathbf{H} \|_{\mathcal{A}^{w_{\mathrm{R}},w_{\mathrm{BU}}}}$. For any partially decoupled atomic decomposition of $\mathbf{H}$ as in \eqref{HAtoDec}, we take
	\begin{equation}
		\begin{aligned}
			\mathcal{T}_{[N_\mathrm{B},N_\mathrm{U}]}(\mathbf{T}) & = \sum_{l_\mathrm{BU}} \sum_{l_\mathrm{R}} \frac{ w_{\mathrm{BU}}([\boldsymbol{\theta}]_{l_\mathrm{BU}},[\boldsymbol{\phi}]_{l_\mathrm{BU}})^{\frac{1}{2}} }{ w_{\mathrm{R}}(\psi_{l_\mathrm{R}})^{\frac{1}{2}} } \left | [\mathbf{P}]_{l_\mathrm{BU},l_\mathrm{R}} \right | \\ & \quad \cdot \big[ \mathbf{a}_{N_\mathrm{B}} ( [\boldsymbol{\theta}]_{l_\mathrm{BU}} ) \otimes \mathbf{a}_{N_\mathrm{U}} ( [\boldsymbol{\phi}]_{l_\mathrm{BU}} ) \big] \big[ \mathbf{a}_{N_\mathrm{B}} ( [\boldsymbol{\theta}]_{l_\mathrm{BU}} ) \otimes \mathbf{a}_{N_\mathrm{U}} ( [\boldsymbol{\phi}]_{l_\mathrm{BU}} ) \big]^{H}
		\end{aligned}
	\end{equation}
	and
	\begin{equation}
		\mathcal{T}_{N_\mathrm{R}}(\mathbf{t}) = \sum_{l_\mathrm{BU}} \sum_{l_\mathrm{R}} \frac{ w_{\mathrm{R}}(\psi_{l_\mathrm{R}})^{\frac{1}{2}} }{ w_{\mathrm{BU}}([\boldsymbol{\theta}]_{l_\mathrm{BU}},[\boldsymbol{\phi}]_{l_\mathrm{BU}})^{\frac{1}{2}} } \left | [\mathbf{P}]_{l_\mathrm{BU},l_\mathrm{R}} \right | \mathbf{a}_{N_\mathrm{R}} ( \psi_{l_\mathrm{R}} ) \mathbf{a}^{H}_{N_\mathrm{R}} ( \psi_{l_\mathrm{R}} ).
	\end{equation}
	It follows that
	\begin{equation}
		\begin{aligned}
			\begin{bmatrix}
				\mathcal{T}_{N_\mathrm{R}}(\mathbf{t}) & \hspace{-3mm} \mathbf{H}^{H}\\
				\mathbf{H} & \hspace{-3mm} \mathcal{T}_{[N_\mathrm{B},N_\mathrm{U}]}(\mathbf{T})
			\end{bmatrix} & = \sum_{l_\mathrm{BU}} \sum_{l_\mathrm{R}} \left | [\mathbf{P}]_{l_\mathrm{BU},l_\mathrm{R}} \right |
			\begin{bmatrix}
				\frac{\overline{[\mathbf{P}]_{l_\mathrm{BU},l_\mathrm{R}}}}{\left | [\mathbf{P}]_{l_\mathrm{BU},l_\mathrm{R}} \right |} \frac{ w_{\mathrm{R}}(\psi_{l_\mathrm{R}})^{\frac{1}{4}} }{ w_{\mathrm{BU}}([\boldsymbol{\theta}]_{l_\mathrm{BU}},[\boldsymbol{\phi}]_{l_\mathrm{BU}})^{\frac{1}{4}} } \mathbf{a}_{N_\mathrm{R}} ( \psi_{l_\mathrm{R}} ) \\ \frac{ w_{\mathrm{BU}}([\boldsymbol{\theta}]_{l_\mathrm{BU}},[\boldsymbol{\phi}]_{l_\mathrm{BU}})^{\frac{1}{4}} }{ w_{\mathrm{R}}(\psi_{l_\mathrm{R}})^{\frac{1}{4}} } \mathbf{a}_{N_\mathrm{B}} ( [\boldsymbol{\theta}]_{l_\mathrm{BU}} ) \otimes \mathbf{a}_{N_\mathrm{U}} ( [\boldsymbol{\phi}]_{l_\mathrm{BU}} )
			\end{bmatrix}
			\\ & \quad \cdot
			\begin{bmatrix}
				\frac{\overline{[\mathbf{P}]_{l_\mathrm{BU},l_\mathrm{R}}}}{\left | [\mathbf{P}]_{l_\mathrm{BU},l_\mathrm{R}} \right |} \frac{ w_{\mathrm{R}}(\psi_{l_\mathrm{R}})^{\frac{1}{4}} }{ w_{\mathrm{BU}}([\boldsymbol{\theta}]_{l_\mathrm{BU}},[\boldsymbol{\phi}]_{l_\mathrm{BU}})^{\frac{1}{4}} } \mathbf{a}_{N_\mathrm{R}} ( \psi_{l_\mathrm{R}} ) \\ \frac{ w_{\mathrm{BU}}([\boldsymbol{\theta}]_{l_\mathrm{BU}},[\boldsymbol{\phi}]_{l_\mathrm{BU}})^{\frac{1}{4}} }{ w_{\mathrm{R}}(\psi_{l_\mathrm{R}})^{\frac{1}{4}} } \mathbf{a}_{N_\mathrm{B}} ( [\boldsymbol{\theta}]_{l_\mathrm{BU}} ) \otimes \mathbf{a}_{N_\mathrm{U}} ( [\boldsymbol{\phi}]_{l_\mathrm{BU}} )
			\end{bmatrix}^{H} \succeq \mathbf{0},
		\end{aligned}
	\end{equation}
	thus we have
	\begin{equation} \label{SDPWleq}
		\begin{aligned}
			\mathrm{SDP}_{\mathbf{W}_{\mathrm{R}},\mathbf{W}_{\mathrm{BU}}}(\mathbf{H}) & \leq \frac{1}{2} \mathrm{tr} \big(\mathbf{W}_{\mathrm{R}}\mathcal{T}_{N_\mathrm{R}}(\mathbf{t})\big) + \frac{1}{2} \mathrm{tr} \big(\mathbf{W}_{\mathrm{BU}}\mathcal{T}_{[N_\mathrm{B},N_\mathrm{U}]}(\mathbf{T})\big) \\ & = \frac{1}{2} \sum_{l_\mathrm{BU}} \sum_{l_\mathrm{R}} \left | [\mathbf{P}]_{l_\mathrm{BU},l_\mathrm{R}} \right | \frac{ w_{\mathrm{R}}(\psi_{l_\mathrm{R}})^{\frac{1}{2}} }{ w_{\mathrm{BU}}([\boldsymbol{\theta}]_{l_\mathrm{BU}},[\boldsymbol{\phi}]_{l_\mathrm{BU}})^{\frac{1}{2}} } \mathbf{a}^{H}_{N_\mathrm{R}} ( \psi_{l_\mathrm{R}} ) \mathbf{W}_{\mathrm{R}} \mathbf{a}_{N_\mathrm{R}} ( \psi_{l_\mathrm{R}} ) \\ & \quad + \frac{1}{2} \sum_{l_\mathrm{BU}} \sum_{l_\mathrm{R}} \left | [\mathbf{P}]_{l_\mathrm{BU},l_\mathrm{R}} \right | \frac{ w_{\mathrm{BU}}([\boldsymbol{\theta}]_{l_\mathrm{BU}},[\boldsymbol{\phi}]_{l_\mathrm{BU}})^{\frac{1}{2}} }{ w_{\mathrm{R}}(\psi_{l_\mathrm{R}})^{\frac{1}{2}} } \\ & \qquad \cdot \big[ \mathbf{a}_{N_\mathrm{B}} ( [\boldsymbol{\theta}]_{l_\mathrm{BU}} ) \otimes \mathbf{a}_{N_\mathrm{U}} ( [\boldsymbol{\phi}]_{l_\mathrm{BU}} ) \big]^{H} \mathbf{W}_{\mathrm{BU}} \big[ \mathbf{a}_{N_\mathrm{B}} ( [\boldsymbol{\theta}]_{l_\mathrm{BU}} ) \otimes \mathbf{a}_{N_\mathrm{U}} ( [\boldsymbol{\phi}]_{l_\mathrm{BU}} ) \big] \\ & = \sum_{l_\mathrm{BU}} \sum_{l_\mathrm{R}} \left | [\mathbf{P}]_{l_\mathrm{BU},l_\mathrm{R}} \right | w_{\mathrm{BU}}([\boldsymbol{\theta}]_{l_\mathrm{BU}},[\boldsymbol{\phi}]_{l_\mathrm{BU}})^{-\frac{1}{2}} w_{\mathrm{R}}(\psi_{l_\mathrm{R}})^{-\frac{1}{2}}.
		\end{aligned}
	\end{equation}
	Since \eqref{SDPWleq} holds for any partially decoupled atomic decomposition of $\mathbf{H}$, we have that $\mathrm{SDP}_{\mathbf{W}_{\mathrm{R}},\mathbf{W}_{\mathrm{BU}}}(\mathbf{H}) \leq \| \mathbf{H} \|_{\mathcal{A}^{w_{\mathrm{R}},w_{\mathrm{BU}}}}$.
	
	We next show that $\mathrm{SDP}_{\mathbf{W}_{\mathrm{R}},\mathbf{W}_{\mathrm{BU}}}(\mathbf{H}) = \| \mathbf{H} \|_{\mathcal{A}^{w_{\mathrm{R}},w_{\mathrm{BU}}}}$ under the three conditions in Theorem \ref{Theorem1}. Without loss of generality, we assume that there is exactly one non-zero element in each row of $\mathbf{C}$, denoted as $[\mathbf{C}]_{i,k_{i}}$ in the $i$-th row. By derivations similar to those of Theorem \ref{Theorem1}, we obtain that \eqref{HAtoDec*}, \eqref{diagsucceq1} and \eqref{diagsucceq2} hold. Then, it follows that
	\begin{equation} \label{SDPWgeq1}
		\begin{aligned}
			\mathrm{SDP}_{\mathbf{W}_{\mathrm{R}},\mathbf{W}_{\mathrm{BU}}}(\mathbf{H}) & = \frac{1}{2} \mathrm{tr} \big(\mathbf{W}_{\mathrm{R}}\mathcal{T}_{N_\mathrm{R}}(\mathbf{t}^{*})\big) + \frac{1}{2} \mathrm{tr} \big(\mathbf{W}_{\mathrm{BU}}\mathcal{T}_{[N_\mathrm{B},N_\mathrm{U}]}(\mathbf{T}^{*})\big) \\ & = \frac{1}{2} \sum_{l_\mathrm{R}=1}^{L_\mathrm{R}^{*}} [\mathbf{p}_\mathrm{R}^{*}]_{l_\mathrm{R}} \mathbf{a}^{H}_{N_\mathrm{R}} ( [\boldsymbol{\psi}_\mathrm{R}^{*}]_{l_{R}} ) \mathbf{W}_{\mathrm{R}} \mathbf{a}_{N_\mathrm{R}} ( [\boldsymbol{\psi}_\mathrm{R}^{*}]_{l_{R}} ) + \frac{1}{2} \sum_{l_\mathrm{BU}=1}^{L_\mathrm{BU}^{*}} [\mathbf{p}_\mathrm{BU}^{*}]_{l_\mathrm{BU}} \\ & \quad \cdot \big[ \mathbf{a}_{N_\mathrm{B}} ( [\boldsymbol{\theta}_\mathrm{B}^{*}]_{l_{BU}} ) \otimes \mathbf{a}_{N_\mathrm{U}} ( [\boldsymbol{\phi}_\mathrm{U}^{*}]_{l_{BU}} ) \big]^{H} \mathbf{W}_{\mathrm{BU}} \big[ \mathbf{a}_{N_\mathrm{B}} ( [\boldsymbol{\theta}_\mathrm{B}^{*}]_{l_{BU}} ) \otimes \mathbf{a}_{N_\mathrm{U}} ( [\boldsymbol{\phi}_\mathrm{U}^{*}]_{l_{BU}} ) \big] \\ & = \frac{1}{2} \sum_{l_\mathrm{R}=1}^{L_\mathrm{R}^{*}} w_{\mathrm{R}}([\boldsymbol{\psi}_\mathrm{R}^{*}]_{l_{R}})^{-1} [\mathbf{p}_\mathrm{R}^{*}]_{l_\mathrm{R}} + \frac{1}{2} \sum_{l_\mathrm{BU}=1}^{L_\mathrm{BU}^{*}} w_{\mathrm{BU}}([\boldsymbol{\theta}_\mathrm{B}^{*}]_{l_{BU}},[\boldsymbol{\phi}_\mathrm{U}^{*}]_{l_{BU}})^{-1} [\mathbf{p}_\mathrm{BU}^{*}]_{l_\mathrm{BU}} \\ & \geq \frac{1}{2} \sum_{l_\mathrm{R}=1}^{L_\mathrm{R}^{*}} \sum_{l_\mathrm{BU}=1}^{L_\mathrm{BU}^{*}} w_{\mathrm{R}}([\boldsymbol{\psi}_\mathrm{R}^{*}]_{l_{R}})^{-1} \frac{\left| [\mathbf{C}]_{l_\mathrm{BU},l_\mathrm{R}} \right|^{2}}{[\mathbf{p}_\mathrm{BU}^{*}]_{l_\mathrm{BU}}} + \frac{1}{2} \sum_{l_\mathrm{BU}=1}^{L_\mathrm{BU}^{*}} w_{\mathrm{BU}}([\boldsymbol{\theta}_\mathrm{B}^{*}]_{l_{BU}},[\boldsymbol{\phi}_\mathrm{U}^{*}]_{l_{BU}})^{-1} [\mathbf{p}_\mathrm{BU}^{*}]_{l_\mathrm{BU}} \\ & \geq \sum_{l_\mathrm{BU}=1}^{L_\mathrm{BU}^{*}} w_{\mathrm{R}}([\boldsymbol{\psi}_\mathrm{R}^{*}]_{k_{l_\mathrm{BU}}})^{-\frac{1}{2}} w_{\mathrm{BU}}([\boldsymbol{\theta}_\mathrm{B}^{*}]_{l_{BU}},[\boldsymbol{\phi}_\mathrm{U}^{*}]_{l_{BU}})^{-\frac{1}{2}} \left| [\mathbf{C}]_{l_\mathrm{BU},k_{l_\mathrm{BU}}} \right| \geq \| \mathbf{H} \|_{\mathcal{A}^{w_{\mathrm{R}},w_{\mathrm{BU}}}}.
		\end{aligned}
	\end{equation}
	Combining \eqref{SDPWgeq1} and \eqref{SDPWleq}, we obtain that $\mathrm{SDP}(\mathbf{H}) = \| \mathbf{H} \|_{\mathcal{A}^{w_{\mathrm{R}},w_{\mathrm{BU}}}}$. Similarly, if there is exactly one non-zero element in each column of $\mathbf{C}$, denoted as $[\mathbf{C}]_{k_{j},j}$ in the $j$-th column, we have \eqref{HAtoDec*}, \eqref{diagsucceq3} and \eqref{diagsucceq4}. It follows that
	\begin{equation} \label{SDPWgeq2}
		\begin{aligned}
			\mathrm{SDP}_{\mathbf{W}_{\mathrm{R}},\mathbf{W}_{\mathrm{BU}}}(\mathbf{H}) = & \frac{1}{2} \sum_{l_\mathrm{R}=1}^{L_\mathrm{R}^{*}} w_{\mathrm{R}}([\boldsymbol{\psi}_\mathrm{R}^{*}]_{l_{R}})^{-1} [\mathbf{p}_\mathrm{R}^{*}]_{l_\mathrm{R}} + \frac{1}{2} \sum_{l_\mathrm{BU}=1}^{L_\mathrm{BU}^{*}} w_{\mathrm{BU}}([\boldsymbol{\theta}_\mathrm{B}^{*}]_{l_{BU}},[\boldsymbol{\phi}_\mathrm{U}^{*}]_{l_{BU}})^{-1} [\mathbf{p}_\mathrm{BU}^{*}]_{l_\mathrm{BU}} \\ \geq & \frac{1}{2} \sum_{l_\mathrm{R}=1}^{L_\mathrm{R}^{*}} w_{\mathrm{R}}([\boldsymbol{\psi}_\mathrm{R}^{*}]_{l_{R}})^{-1} [\mathbf{p}_\mathrm{R}^{*}]_{l_\mathrm{R}} + \frac{1}{2} \sum_{l_\mathrm{BU}=1}^{L_\mathrm{BU}^{*}} \sum_{l_\mathrm{R}=1}^{L_\mathrm{R}^{*}} w_{\mathrm{BU}}([\boldsymbol{\theta}_\mathrm{B}^{*}]_{l_{BU}},[\boldsymbol{\phi}_\mathrm{U}^{*}]_{l_{BU}})^{-1} \frac{\left| [\mathbf{C}]_{l_\mathrm{BU},l_\mathrm{R}} \right|^{2}}{[\mathbf{p}_\mathrm{R}^{*}]_{l_\mathrm{R}}} \\ \geq & \sum_{l_\mathrm{R}=1}^{L_\mathrm{R}^{*}} w_{\mathrm{R}}([\boldsymbol{\psi}_\mathrm{R}^{*}]_{l_{R}})^{-\frac{1}{2}} w_{\mathrm{BU}}([\boldsymbol{\theta}_\mathrm{B}^{*}]_{k_{l_\mathrm{R}}},[\boldsymbol{\phi}_\mathrm{U}^{*}]_{k_{l_\mathrm{R}}})^{-\frac{1}{2}} \left| [\mathbf{C}]_{k_{l_\mathrm{R}},l_\mathrm{R}} \right| \geq \| \mathbf{H} \|_{\mathcal{A}^{w_{\mathrm{R}},w_{\mathrm{BU}}}}.
		\end{aligned}
	\end{equation}
	Combining \eqref{SDPWgeq2} and \eqref{SDPWleq}, we have $\mathrm{SDP}(\mathbf{H}) = \| \mathbf{H} \|_{\mathcal{A}^{w_{\mathrm{R}},w_{\mathrm{BU}}}}$ and complete the proof.}


\bibliographystyle{IEEEtran}
\bibliography{IEEEabrv,ref_RPDANM-APC}

\end{document}